\documentclass[lettersize,journal]{IEEEtran}
\IEEEoverridecommandlockouts

\usepackage{url}
\usepackage{cite}
\usepackage{flushend}
\usepackage{makecell}
\usepackage{subfigure}
\usepackage{amsmath,amssymb,amsfonts}
\usepackage{amsthm}
\usepackage{algorithmic}
\usepackage{algorithm}

\usepackage{graphicx}
\usepackage{textcomp}
\usepackage{float}
\usepackage{booktabs}
\usepackage{array}
\usepackage{bm}
\usepackage{float} 
\usepackage{wasysym}
\usepackage{pifont}
\usepackage{comment}

\usepackage{cellspace}
\setlength\cellspacetoplimit{1.5pt}
\setlength\cellspacebottomlimit{0.5pt}

\usepackage{multirow}
\usepackage[normalem]{ulem}
\useunder{\uline}{\ul}{}
\usepackage{xcolor}

\def\BibTeX{{\rm B\kern-.05em{\sc i\kern-.025em b}\kern-.08em
    T\kern-.1667em\lower.7ex\hbox{E}\kern-.125emX}}
\allowdisplaybreaks

\newtheorem{lemma}{Lemma}
\newtheorem{corollary}{Corollary}

\begin{document}

\title{
Communication-Efficient Soft Actor-Critic Policy Collaboration via Regulated Segment Mixture
\\ 
}
\author{Xiaoxue Yu, Rongpeng Li, Chengchao Liang, and Zhifeng Zhao\\

\thanks{X. Yu, and R. Li are with College of Information Science and Electronic Engineering, Zhejiang University (email: \{sdwhyxx, lirongpeng\}@zju.edu.cn).}
\thanks{C. Liang is with the School of Communication and Information Engineering, Chongqing University of Posts and Telecommunications (email: liangcc@cqupt.edu.cn).}
\thanks{Z. Zhao is with Zhejiang Lab as well as Zhejiang University (email: zhaozf@zhejianglab.com).}
\thanks{Part of the paper has been accepted by IEEE Globecom 2023 \cite{yu_communicationefficient_2023}.}
\thanks{Copyright (c) 20xx IEEE. Personal use of this material is permitted. However, permission to use this material for any other purposes must be obtained from the IEEE by sending a request to pubs-permissions@ieee.org.
}
}
\maketitle

\begin{abstract}
Multi-Agent Reinforcement Learning (MARL) has emerged as a foundational approach for addressing diverse, intelligent control tasks in various scenarios like the Internet of Vehicles, Internet of Things, and Unmanned Aerial Vehicles. 
However, the widely assumed existence of a central node for centralized, federated learning-assisted MARL might be impractical in highly dynamic environments. This can lead to excessive communication overhead, potentially overwhelming the system. 
To address these challenges, we design a novel communication-efficient, fully distributed algorithm for collaborative MARL under the frameworks of Soft Actor-Critic (SAC) and Decentralized Federated Learning (DFL), named RSM-MASAC. 
In particular, RSM-MASAC enhances multi-agent collaboration and prioritizes higher communication efficiency in dynamic systems by incorporating the concept of segmented aggregation in DFL and augmenting multiple model replicas from received neighboring policy segments, which are subsequently employed as reconstructed referential policies for mixing. 
Distinctively diverging from traditional RL approaches, RSM-MASAC introduces new bounds under the framework of Maximum Entropy Reinforcement Learning (MERL). Correspondingly, it adopts a theory-guided mixture metric to regulate the selection of contributive referential policies, thus guaranteeing soft policy improvement during the communication-assisted mixing phase. 
Finally, the extensive simulations in mixed-autonomy traffic control scenarios verify the effectiveness and superiority of our algorithm. 
\end{abstract}
\begin{IEEEkeywords}
Communication-efficient, Multi-agent reinforcement learning, Soft actor-critic, Regulated segment mixture.
\end{IEEEkeywords}

\vspace{-1.1em}
\section{Introduction}\label{sec1}
Recently, Deep Reinforcement Learning (DRL) has gained significant traction in addressing complex, real-world applications, contingent on formulated Markov Decision Processes (MDPs) \cite{du2022comfortable, lin2020comparison}. 
Naturally, multiple collaborative DRL agents can form a scalable Multi-Agent Reinforcement Learning (MARL)-empowered system, efficiently coordinating sequential decision-making for complex scenarios \cite{li2022applications, nguyen2020deep, 9351818, shi2023deep, he2022multiagenta}. Notable examples include fleet management and traffic control in the Internet of Vehicles (IoV), Unmanned Aerial Vehicles (UAV) and multi-robot systems, as well as distributed resource allocation in the Internet of Things (IoTs).

\vspace{-1em}
\subsection{Problem Statement and Motivation}
In general, most MARL works adopt a Centralised Training and Decentralised Execution (CTDE) architecture \cite{MADDPG, MAPPO, COMA, DIAL_RIAL}.  
Federated Reinforcement Learning (FRL) \cite{FRL2021techniques, xu2021gradient, xie2023fedkl, 9945653} incorporates the continuous learning process in DRL with periodic model updates (i.e., exchanging gradients or parameters) from Federated Learning (FL), thus more competently coping with generalization difficulties during deployment \cite{han2023multi, leibo2021scalable}, such as variations in both physical and social environments. 
By keeping sensitive information localized, FRL enables instantaneous decision-making and promotes collaborative learning among independent agents. 
However, in most highly dynamic scenarios with high mobility, intermittent connectivity, and decentralized nature \cite{taik2022clustered, chellapandi2023federated}, the common assumption of a super-centralized training controller in Centralized Federated Learning (CFL) becomes impractical, threatening both the stability and timeliness of overall learning performance \cite{movahedian2023adaptive}. 
Besides, generalization difficulties \cite{meltingpotDeepmind, leibo2021scalable}, such as variations in external environments, necessitate continuous training or fine-tuning of decision-making models for specific environments or tasks with minimal online data requirements during deployment. 

In this way, there is growing research on peer-to-peer architecture's Decentralized Federated Learning (DFL) \cite{DFLsavazzi2020federated}, focusing on real-time decentralized training/fine-tuning of Deep Neural Network (DNN) models among DRL agents \cite{pacheco2024efficient, barbieri2022communication, nguyen2022deep}. 
This can be viewed as a distributed Stochastic Gradient Descent (SGD) optimization problem, in which the aggregation of exchanged DNN parameters acquired from the periodic communication phase typically uses a simplistic parameter average mixture approach. 
In FRL, these exchanged and averaged parameters pertain to the DRL agent's policy network, which approximates the policy distribution through a parameterized DNN. 
In addition, the parameter exchange can be implemented via Device-to-Device (D2D) or Vehicle-to-Vehicle (V2V) collaboration channels within communication range, common in many MARL works \cite{han2023multi, chen2024communication, shi2023deep, qu2024model}. 
However, the frequent information exchanges generate substantial communication overhead as the number of agents increases. 
Some DFL variants improve communication efficiency by reducing the number of communication rounds \cite{wang2021cooperative, liu2022decentralized, sun2022decentralized} or the communication workload per round \cite{Ako2016partial_gradient_exchange, barbieri2022communication, barbieri2023layer, hu2019decentralized}, but this can intensify the variability of local model updates, potentially leading to an inferior aggregated model post simple model parameter averaging \cite{kairouz2021advances}. 
This issue is more pronounced in online DRL frameworks since DRL agents interact more frequently with the environment compared to offline supervised learning. 
Incremental data can magnify learning discrepancies among agents and reduce fault tolerance. 
Interestingly, few studies have focused on the policy performance improvement issue of DRL with its combination with FL. 
Despite the simplicity and straightforward aggregation of mixed policies in parameter averaging, not all communicated packets in MARL contribute effectively. 
During the communication-assisted mixing phase, MARL awaits a revolutionary policy parameter mixture method and corresponding metrics to regulate the aggregation of exchanged model updates, ensuring robust policy improvement for security and safety. 

On the other hand, there is a contradiction between limited and costly sample availability and the need for fast learning speeds, that is, MARL algorithms shall fully explore and adapt to dynamic environments while minimizing the demand for extensive and costly online learning data.  
This aligns with the principles of Maximum Entropy Reinforcement Learning (MERL), such as Soft $Q$-learning \cite{softQlearning} and Soft Actor-Critic (SAC) \cite{sac1, sac2}, which transforms the traditional reward maximum into both expected return and the expected entropy of the policy. 
The incorporated entropy in optimization redefines value function and policy optimization objective from the ground up. Albeit its enhancement to sample efficiency, adaptability, and robustness in dynamic and uncertain environments, it overturns the proof of the traditional performance improvement bound \cite{xu2022trustable, kakade2002approximately, kuba2022trust, schulman2015trust}. 
Among MERL algorithms, SAC \cite{sac1, sac2} is particularly well-suited for highly dynamic and uncertain environments due to multi-folded reasons. First, it leads to 
higher sample efficiency through off-policy learning. Second, it enables automatic entropy adjustment via optimization of the temperature parameter. 
Finally, by effectively combining value function learning with policy optimization, it significantly enhances stabilization and adaptability to continuous, complex, high-dimensional action space.

Therefore, this work is dedicated to reanalyzing efficient communication within the MERL and DFL framework. Prominently, a practical policy mixture method, which is underpinned by a rigorously derived mixture metric, is devised to enable SAC to achieve smooth and reliable parameter updates, not only during independent local learning phase but also throughout the communication-assisted mixing phase. 

\begin{figure}[t]
\centering 
\includegraphics[scale = 0.45]{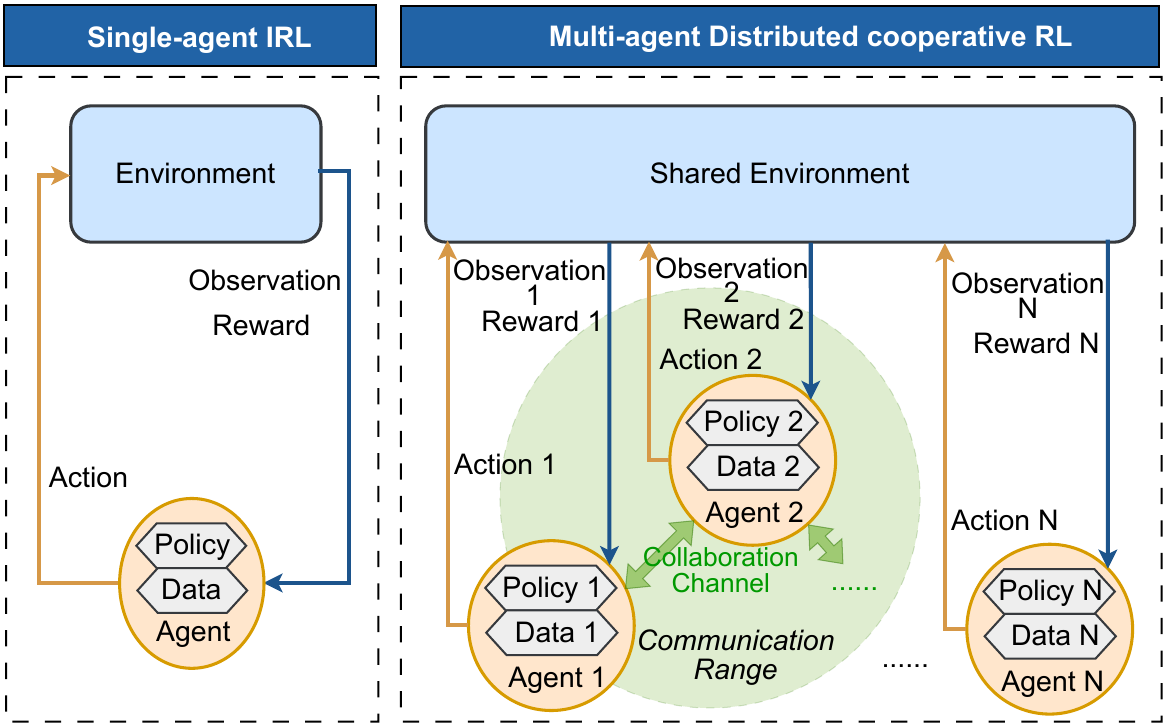}
\caption{Illustration of Single-agent IRL and Multi-agent Distributed Cooperative RL. }
\label{fig:system}
\vspace{-0.8em}
\end{figure}

\begin{table*}[tp]
    \centering
    \caption{{The Summary of Differences with Related Literature.}}
    \label{tab:related work}
    \def\arraystretch{1.05} 
    \hspace*{-0.2cm}
    \begin{tabular}{>{\raggedright\arraybackslash}p{1.8cm}|
    >{\centering\arraybackslash}m{1.2cm}
    >{\centering\arraybackslash}m{2.4cm}
    >{\centering\arraybackslash}m{2cm}
    >{\centering\arraybackslash}m{1.85cm}|m{6.2cm}}
\hline
           References & Maximum Entropy & Policy Improvement Guarantee &Collaboration via Communication   & Efficient Communication   & Brief Description      \\
\hline
{\cite{softQlearning, sac1,sac2}}   
& $\CIRCLE$   & $\Circle$ & $\Circle$ & $\Circle$  & \multirow{2}{*}{Only single-agent RL algorithm}\\
{\cite{schulman2015trust,kakade2002approximately}}
& $\Circle$  & $\CIRCLE$ & $\Circle$ & $\Circle$ &  \\
\cline{1-6}
{\cite{DVFs, change_value_functions,liu2017distributed}}
& $\Circle$  & $\Circle$ & $\CIRCLE$ & $\Circle$ & Over frequent and complex communication \\
\hline
{\cite{wang2021cooperative,
liu2022decentralized, 
sun2022decentralized,
Ako2016partial_gradient_exchange,
barbieri2022communication,
barbieri2023layer,
hu2019decentralized}}
& \multirow{2}{*}{$\Circle$}   & \multirow{2}{*}{$\Circle$} & \multirow{2}{*}{$\CIRCLE$ } & \multirow{2}{*}{$\CIRCLE$} & \multirow{2}{6.05cm}{Oversimplified parameter mixture method} \\
{\cite{xia2022convergence,
xing2021federated}} & & & & & \\
\hline
{\cite{kuba2022trust,xu2022trustable}}     
& $\Circle$   & $\CIRCLE$ & $\CIRCLE$ & $\Circle$  & \multirow{2}{6.3cm}{Only under traditional policy iteration-based approxi- \newline 
mately optimal RL's performance guarantee analysis}
 \\
{\cite{yu_communicationefficient_2023}}     
 & $\Circle$   & $\CIRCLE$ & $\CIRCLE$ & $\CIRCLE$  & \\
\hline
\textbf{This work} & $\CIRCLE$    & $\CIRCLE$ & $\CIRCLE$ & $\CIRCLE$ & Combining communication efficient DFL into MARL collaboration under maximum entropy framework with theory-established regulated mixture metrics and performance improvement bound \\
\hline
    \multicolumn{6}{>{\footnotesize\itshape}r}{Notations: \rm{${\Circle}$} \emph{indicates not included;} \rm{${\CIRCLE}$} \emph{indicates fully included.}}
    \end{tabular}
\end{table*}

\subsection{Contribution}
In this paper, we propose the Regulated Segment Mixture-based Multi-Agent SAC (RSM-MASAC) algorithm, tailored for training DRL agents under highly dynamic scenarios while addressing communication overhead challenges inherent in DFL. Our primary contributions include: 
 \begin{itemize}
    \item 
     For the highly dynamic setting, the proposed RSM-MASAC algorithm effectively combines communication-efficient DFL with MERL. The algorithm enables agents to receive segments of policy networks' parameters from neighbors within their communication range. These segments are used to constitute referential policies for strategically designed selective parameter mixture, ensuring credible performance improvement during the communication-assisted mixing phase while maintaining sufficient exploration in the local learning phase. 
    \item 
    In terms of mixing a current policy and a referential policy, 
    we derive a new, more generalized mixed policy improvement bound, which successfully tackles the analysis difficulties arising from the incorporation of redefined soft value function, dual policy optimization objective, and the logarithmic term of the policy for entropy maximum in MERL. Therefore, our work sets the stage for theoretically evaluating the performance of the mixed policy under distributed SGD optimization. 
    \item
    Instead of a simple parameter average in DFL, we bridge the relationship between DNN parameter gradient descent and MERL policy improvement, and regulate the selection of contributive referential policies by deriving a manageable, theory-guided mixture metric. Hence, it enhances the stability and practicality of directly mixing policy parameters during the communication-assisted mixing phase. 
    \item 
    Through extensive simulations in the traffic speed control task, a typical MARL-based IoV scenario, our proposed RSM-MASAC algorithm could approach the converged performance of centralized FMARL \cite{xu2021gradient} in a distributed manner, outperforming parameter average methods as in DFL \cite{hu2019decentralized,barbieri2022communication}, thus confirming its effectiveness. 
\end{itemize}

\subsection{Related Works}

In Multi-Agent Systems (MAS), Independent Reinforcement Learning (IRL) \cite{matignon2012independent, tan1993multi,papoudakis2020benchmarking}, which relies solely on agents' local perceptions without any collaboration, has been extensively studied and often used as a baseline due to its varied policy performance, unstable learning and uncertain convergence \cite{xu2021gradient}. 
Examples of IRL methods include IQL \cite{tan1993multi}, IAC \cite{COMA}, IA2C \cite{dhariwal2017openai} and IPPO \cite{IPPO}, etc. 

As an extension of IRL, distributed cooperative Reinforcement Learning (RL) enhances agents' collective capabilities and efficiency by collaboratively seeking near-optimal solutions through limited information exchange with others, as illustrated in Fig. \ref{fig:system}. 
Notably, the exchanged contexts can be rather different and possibly include approximated value functions in \cite{DVFs, change_value_functions}, rewards or even maximal $Q$-values on each state-action pair in \cite{liu2017distributed}. 
Besides, given the clear evidence \cite{sartoretti2019distributed, A2C2016asynchronous} that experiences from homogeneous, independent learning agents in MAS can contribute to efficiently learning a commonly shared DNN model, the direct exchange of model updates during FL communication phase \cite{FRL2021techniques, xu2021gradient, xu2021stigmergic} is a viable way to indirectly integrate information and enhance cooperation among IRL. 
In addition to the commonly used centralized architecture that is less suitable for actual dynamic environments, the peer-to-peer DFL paradigm, where clients exchange their local model updates only with their neighbors to achieve model consensus, emerges as an appealing alternative.
This approach can be viewed as distributed SGD optimization, where the significant communication expenditure cannot be overlooked. 
In that regard, some researchers have developed strategies to reduce communication frequency by aggregating more local updates before one round communication \cite{wang2021cooperative} or multi-round communication \cite{liu2022decentralized, sun2022decentralized}. 
Besides, reducing the number of parameters transmitted from local models or implementing selective model synchronization is also tractable. 
For instance, Ref. \cite{Ako2016partial_gradient_exchange} divides local gradients into several disjoint partitions, with only a subset being exchanged in any given communication round. 
Ref. \cite{barbieri2022communication} puts forward a randomized selection scheme for forwarding subsets of local model parameters to their one-hop neighbors. 
Furthermore, Ref. \cite{barbieri2023layer} suggests propagating top-k layers with higher normalized squared gradients, which may convey more information about the local data to neighbors.
Meanwhile, Ref. \cite{hu2019decentralized} introduces a segmented gossip approach that involves synchronizing only model segments, thereby substantially splitting the communication expenditure. 

Moreover, while the above distributed SGD optimization methods and most works in actual IoV, IoT and UAV applications \cite{taik2022clustered, chellapandi2023federated, pacheco2024efficient,camelo2019IoT, li2022applications} assume ideal communication between cooperative agents, achieving exact and perfect information sharing without compromising privacy concerns may not be feasible with suboptimal wireless transmission links. 
Beyond complex, customized, and costly cryptographic schemes \cite{kairouz2021advances, gao2023privacy}, most research \cite{reisizadeh2020fedpaq, xia2022convergence, xing2021federated} suggests local quantization of data before transmission, effectively shielding raw data from exposure.
Additionally, quantization, along with compression or sparsification \cite{compression2018communication, sun2022decentralized, tang2022gossipfl}, can significantly reduce the message size, with corresponding convergence analyses detailed in \cite{xia2022convergence, liu2022decentralized,xing2021federated}.

Notably, when it comes to the method of mixing DNN parameters, these aforementioned DFL works generally adopt a simplistic averaging approach, which is familiar in parallel distributed SGD methods. 
However, when such an approach is directly applied to FRL, the crucial relationship between parameter gradient descent and policy improvement will be overshadowed \cite{xu2022trustable}. 
Consequently, the corresponding mixture lacks proper, solid assessment means to prevent potential harm to policy performance \cite{xie2023fedkl}. 
In other words, directly using this kind of naive combination of communication efficient DFL and RL to enhance individual policy performance in IRL appears inefficient. 

Regarding this issue, we can draw inspiration from the conservative policy iteration algorithm \cite{kakade2002approximately}, which leverages the concept of policy advantage as a crucial indicator to gauge the cumulative reward improvement and applies a direct mixture update rule for policy distributions in pursuit of an approximately optimal policy.
Moreover, the mixture metric utilized in the update rule is also investigated to prevent overly aggressive updates towards risky directions, as excessively large policy updates often lead to significant policy performance deterioration \cite{xie2023fedkl}.
TRPO \cite{schulman2015trust} substitutes the mixture metric with Kullback-Leibler (KL) divergence, a measure that quantifies the disparity between current and updated policy distributions, facilitating the learning of monotonically improving policies. 
Ref. \cite{kuba2022trust} extends this work into cooperative MARL settings. 
However, directly mixing policy distributions is often an intractable endeavor. To implement this procedure in practical settings with parameterized policies in DRL, 
Ref. \cite{xu2022trustable} further simplifies the KL divergence to the parameter space through Fisher Information Matrix (FIM), so as to improve the policy performance by directly mixing DNN parameters. 
It focuses on stable policy updates throughout the communication-assisted mixing phase, but its analysis is confined to the traditional policy iteration-based RL algorithms, which aim to maximize the expected return only. 

Transitioning to the realm of MERL, the integration of the entropy maximization marks a significant paradigm shift, fundamentally redefining the criteria for policy improvement due to its dual objective of optimizing cumulative rewards and maintaining a high level of exploration. 
However, traditional RL methodologies, particularly those based on PPO, are insufficient to address the new dynamics interplayed between reward maximization and exploratory behavior imposed by the entropy maximization in MERL, as demonstrated in many works \cite{gao2023joint, duan2022distributional, sac1, sac2}. 
On the other hand, it's also imperative to implement an appropriate and manageable mixture metric with monotonic policy improvement property to maximize the practicality of directly mixing policy parameters during the communication-assisted mixing phase. 
Such a metric must guarantee the efficacy of the resultant mixed policy in terms of policy improvement. 
Crucially, the expected benefits of the mixed policy must be assessed before proceeding with actual policy mixing. Specifically, the mixed policy improvement should be evaluated against the referential policy, whose parameters are received from neighbors in DFL. 
Only after the anticipated benefits are confirmed should the operation to evaluate the established metric for mixing DNN parameters commence. 
Moreover, we have also summarized the key differences between our algorithm and relevant literature in Table \ref{tab:related work}. 
In conclusion, it is vital to reformulate the theoretical analysis within the combination of DFL and MERL framework. More particularly, given its superiority among MERL algorithms, SAC \cite{sac1, sac2} is chosen as a showcase offering robust theoretical underpinnings for practical IoV, IoT, and UAV applications.

\subsection{Paper Organization}

The remainder of this paper is organized as follows. 
In Section \ref{sec:Preliminary}, we present preliminaries of SAC algorithm and main notations used in this paper.
Then, we introduce the system model and formulate the problem in Section \ref{sec:system model and problem formulation}.
Afterward, we provide the mixed performance improvement bound theorem of FRL communication under MERL in Section \ref{sec:bound}, and elaborate on the details of the proposed RSM-MASAC algorithm in Section \ref{sec:MASAC design}.
In Section \ref{sec:simulations}, we present the simulation settings and discuss the experimental results.
Finally, Section \ref{sec:conclusion} concludes this paper and discusses future work.

\section{Preliminary}\label{sec:Preliminary}
Beforehand, we summarize the mainly used notations in Table \ref{n_table}.

\begin{table}[t]
	\centering
	\renewcommand{\arraystretch}{0.9} 
	\setlength{\extrarowheight}{1.5pt} 
	\caption{{Major notations used in the paper.}}
	\label{n_table}
	\begin{tabular}{|>{\bfseries}c>{\raggedright\arraybackslash}p{6.2cm}|}
		\hline
		\underline{\textbf{Notation}} & 
		\rule{0pt}{2.3ex} 
		\underline{\textbf{Definition}} \\ 
		\rule{0pt}{3ex} 
		$s_t^{(i)},a_t^{(i)},r^{(i)}_t$ & 
		Local state, individual action and reward of agent $i$ at time step $t$.\\
		$\pi, \tilde{\pi}, \pi_\text{mix}$
        & Current target policy distribution, referential target policy distribution and the mixed policy distribution, which represents the probabilities of selecting each possible action given a state.\\
	  $\theta, \tilde{\theta}, \theta_{\text{mix}}$ & The parameter of specific neural network that parameterizes the target policy distribution, referential target policy distribution and the mixed policy distribution, respectively. \\
		$p,P$ & Index of segments and segmentation granularity, $p=1,2,\cdots,P$.\\
		$\Omega_i$ & Set of one-hop neighbors within the communication range of agent $i$.\\
		$\zeta$ & Mixture metric of current policy parameter vector and referential policy parameter vector. \\
		$\alpha$ & Temperature parameter of policy entropy. \\
		$\varrho$ & Target smoothing coefficient of target $Q$ networks.\\
		$\kappa$ & Predefined model replica number. \\
		$U$ & Communication interval determined by specified iterations of the local policy.\\
		$\upsilon$ & Transmission bits of the policy parameters. \\
		$\psi$ & 	\rule[-1.5ex]{-2.6pt}{3ex}
		Communication consumption. \\\hline
	\end{tabular}
\end{table}

We consider the standard RL setting, where the learning process of each agent can be formulated as an MDP. During the interaction with the environment, at each time step $t$, an RL agent observes a local state $s_t$ from local state space $\mathcal{S}$, and chooses an action $a_t$ from individual action space $\mathcal{A}$ according to the policy $\pi(\cdot\vert s_t) \in \Pi$, which specifies a conditional distribution of all possible actions given the current state $s_t$, and $\Pi$ is the policy space. 
Then the agent receives an individual reward $r_t$, calculated by a reward function $\mathcal{R}:\mathcal{S}\times\mathcal{A}\rightarrow [0,R]$, and the environment transforms to a next state $s_{t+1}\sim p(\cdot \vert s_t,a_t)$. 
A trajectory starting from $s_t$ is denoted as $\tau_t = (s_t,a_t,s_{t+1},a_{t+1},\cdots)$. 
Besides, considering an infinite-horizon discounted MDP, the visitation probability of a certain state $s$ under the policy $\pi$ can be summarized as $d_\pi(s)=\sum\nolimits_{t=0}^\infty \gamma^t P(s_t=s;\pi)$, 
where $ P(s_t=s;\pi)$ is the visitation probability of the state $s$ at time $t$ under policy $\pi$.

Different from traditional RL algorithms aiming to maximize the discounted expected total rewards only, SAC \cite{sac1, sac2} additionally seeks to enhance the expected policy entropy. To be specific, with the re-defined soft state value function $V^\pi(s_0):= \mathop{\mathbb{E}}\limits_{\tau_0\sim\pi}\left[\sum_{t=0}^{\infty}\gamma^t(r_t+\alpha H(\pi(\cdot \vert s_t)))\right]$ on top of the policy entropy $H(\pi(\cdot \vert s))=-\mathbb{E}_{a\sim \pi}\log \pi(a\vert s)$, the task objective of SAC can be formulated as
\begin{equation}\label{eq:MERL objective}
\vspace{-0.2em}
    \max \eta(\pi) = 
    \mathop{\mathbb{E}}_{s_0\sim \rho_0}\left[V^\pi(s_0)\right] ,
\end{equation}
where $\rho_0$ is the distribution of initial state $s_0$ and $\alpha\in(0,\infty)$ is a temperature parameter determining the relative importance of the entropy term versus the reward. 
Obviously, when $\alpha\rightarrow 0$, SAC gradually approaches the traditional RL. Meanwhile, the soft state-action value can be expressed as $Q^{\pi}(s_t,a_t) := r_t+\mathop{\mathbb{E}}_{\tau_{t+1}\sim\pi}\!\left[\sum_{l=t+1}^{\infty}\gamma^{l-t} \Big(r_l \!+\!\alpha H(\pi(\cdot \vert s_{l}))\Big)\!\right]$ \cite{sac1, sac2}, which does not include the policy entropy of current time step but includes the sum of all future policy entropy and the sum of all current and future rewards. 
Consistently, the state-action advantage value under policy $\pi$ is 
\begin{equation}\label{eq:A definition}
    A_\pi(s_t,a_t)= Q^\pi(s_t,a_t)-V^\pi(s_t)  {.}
\end{equation}

Accordingly, SAC maximizes \eqref{eq:MERL objective} based on soft policy iteration, which alternates between soft policy evaluation and soft policy improvement. 
\begin{itemize}
    \item 
For given $\pi$, soft policy evaluation 
implies that $Q^\pi$ is learned by repeatedly applying soft Bellman operator $\mathcal{T}^\pi$ to the real-valued estimate $Q$, given by: 
\begin{align}
    \mathcal{T}^{\pi} Q(s_t,a_t) = r_t + \gamma \mathbb{E}_{s_{t+1}\sim p(\cdot\vert s_t,a_t)}[V(s_{t+1})],\label{eq:Q equation}
\end{align} 
where
\begin{align} V(s_t)&=\mathbb{E}_{a_t\sim\pi}[Q(s_t,a_t)]+\alpha  H(\pi(\cdot\vert s_t))  {.} \label{eq:V-function bellman equation}\end{align}
With $Q^{k+1}=\mathcal{T^\pi}Q^k$, as $k\rightarrow \infty$, $Q^k$ will converge to the soft $Q$ function $Q^\pi$ of $\pi$, as proven in \cite{sac2}. 

\item In the soft policy improvement step, the goal is to find a policy $\pi_\text{new}$ superior to the current policy $\pi_\text{old}$, in terms of maximizing \eqref{eq:MERL objective}. Specifically, for each state, SAC updates the policy as
\begin{align}\label{eq:SAC policy KL}
    \pi_{\mathrm{new}}=&\mathop{\arg\min}_{\pi\in\Pi} \mathrm{D}_{\mathrm{KL}}\Big( \pi(\cdot\vert s_t)\Vert \frac{ \exp\big( \frac{1}{\alpha} Q^{\pi_\text{old}}({s_t},\cdot)\big)} {Z^{\pi_\text{old}}(s_t)}\Big) \\
    =&\mathop{\arg\max}_{\pi\in\Pi}\mathbb{E}_{a_t\sim \pi}\left[Q^{\pi_\text{old}} (s_t,a_t) -\alpha \log\pi(a_t\vert s_t)\right]  {,} \nonumber
\end{align}
where $\mathrm{D}_{\mathrm{KL}}(\cdot)$ denotes the KL divergence, while the partition function $Z^{\pi_\text{old}}$ normalizes the distribution. The last equality in \eqref{eq:SAC policy KL} is due to that $Z^{\pi_\text{old}}$ has no contribution to gradient with respect to the new policy, it can thus be ignored. As unveiled in Appendix A, through the update rule of \eqref{eq:SAC policy KL}, $Q^{\pi_\text{new}}(s_t,a_t)\geq Q^{\pi_\text{old}}(s_t,a_t)$ is guaranteed for all $(s_t,a_t)\in \mathcal{S}\times\mathcal{A}$. 
Notably, the proof of soft policy improvement detailed in Appendix A additionally serves as a confirmation for policy improvement of regulated segment mixture, which will be elaborated upon later. 
\end{itemize}
Finally, with repeated application of soft policy evaluation and soft policy improvement, any policy $\pi\in\Pi$ will converge to the optimal policy $\pi^*$ such that $Q^{\pi^*}(s_t,a_t)\geq Q^{\pi}(s_t,a_t), \forall (s_t,a_t)\in \mathcal{S}\times\mathcal{A}$, and the proof can be found in \cite{sac1,sac2}. 

\section{System Model and Problem Formulation}\label{sec:system model and problem formulation}

\subsection{System Model}
We primarily consider a system consisting of $N$ agents empowered by the MASAC learning, which encompasses an independent local learning phase and a communication-assisted mixing phase. 
In the first phase, we use SAC algorithm for each IRL agent $i$, $i\in\{1,2,\cdots,N\}$. 
That is, agent $i$ senses partial status $s_t^{(i)}$ 
and has its local policy $\pi^{(i)}$ approximated by neural networks, and parameterized by $\theta\in\mathbb{R}^d$. 
It collects samples $\langle s_t^{(i)},a_t^{(i)},r_t^{(i)},s_{t+1}^{(i)} \rangle$ in replay buffer $\mathcal{D}^{(i)}$, and randomly samples a mini-batch $\Phi^{(i)}$ for local independent model updates\footnote{Hereafter, for simplicity of representation, we omit the superscript $(i)$ under cases where the mentioned procedure applies for any agent.}. 
Subsequently, in the second phase, each agent $i$ interacts with its one-hop neighbors $j\in\Omega_i$ within communication range, so as to reduce the behavioral localities of IRL and improve their cooperation efficiency. 

\subsubsection{Local Learning Phase}
Algorithmically, in the local learning phase with SAC, 
parameterized DNNs are used as approximators for policy and soft $Q$-function. 
Concretely, we alternate optimizing one network of policy $\pi$ 
parameterized by $\theta$ and two soft $Q$ networks parameterized by $\omega_{1}$ and $\omega_2$, respectively. 
Besides, there are also two target soft $Q$ networks parameterized by $\bar{\omega}_1$ and $\bar{\omega}_2$, obtained as an exponentially moving average of current $Q$ network weights $\omega_{1}$, $\omega_{2}$. 
Yet, only the minimum $Q$ value of the two soft $Q$-functions is used for the SGD and policy gradient. This setting of two soft $Q$-functions will speed up training while the use of target $Q$ can stabilize the learning \cite{sac1, sac2, duan2022distributional}. 
For training $\theta$ and $\omega\in{\omega_1,\omega_2}$, agent randomly samples a batch of transition tuples from the replay buffer $\mathcal{D}$ and performs SGD. 
The parameters of each soft $Q$ network $\omega_x, \forall x=1,2$ are updated through minimizing the soft Bellman residual error, that is,
\begin{align}
\vspace{-0.2em} 
    \label{eq:j_q}
    J_Q(\omega_x)=\frac{1}{2}\mathop{\mathbb{E}}\nolimits_{s_t,a_t \sim\mathcal{D}}\left[Q_{\omega_x}({s}_t,{a}_t)- \hat{Q}(s_t,a_t)\right]^2  {,}
    \vspace{-0.2em} 
\end{align}
where the target $\hat{Q}(s_t,a_t) = r_t+\gamma \mathop{\mathbb{E}}_{\substack{{s}_{t+1}\sim p(\cdot\vert s_t,a_t)\\{a_{t+1}\sim\pi_\theta}}}\!\left[
\mathop{\min}\limits_{x\in 1,2}
Q_{\bar{\omega}_x}(s_{t+1},a_{t+1})\!-\!\alpha \log\pi_\theta(a_{t+1}\vert s_{t+1})\right]$. 

Furthermore, the policy parameters of standard SAC can be learned according to \eqref{eq:SAC policy KL} by replacing $Q^{\pi_\text{old}}$ with current $Q$ function estimate as
\begin{equation}\label{eq:policy objective}
J_\pi(\theta)=\!\mathop{\mathbb{E}}\limits_{{s}_t\sim\mathcal{D}}
\!\big[\!
\mathop{\mathbb{E}}\limits_{{a}_t\sim\pi_\theta}
\!\big[\alpha\log\pi_\theta({a}_t|{s}_t)\!-\!\mathop{\min}_{\substack{x\in 1,2}}Q_{\omega_x}({s}_t,{a}_t)\big]\big] {.}
\vspace{-0.2em} 
\end{equation}

Besides, since the gradient estimation of \eqref{eq:policy objective} has to depend on the actions stochastically sampled from $\pi_\theta$, which leads to high gradient variance, the reparameterization trick \cite{sac1,sac2} is used to transform the action generation process into deterministic computation, allowing for efficient gradient-based training. 
Concretely, the random action $a_t$ can be expressed as a reparameterized variable 
\begin{align}
\vspace{-0.2em}
    a_t =f_\theta(s_t;\delta_t)=\tanh\big(\mu_\theta(s_t)+\delta_t\odot\sigma_\theta(s_t)\big) {,}
\vspace{-0.2em} 
\end{align}
where $\odot$ represents Hadamard product, 
$\delta_t$ is sampled from $\mathcal{N}(0,\mathbf{I}_{\dim \mathcal{A}})$, the mean $\mu_\theta$ and standard $\sigma_\theta$ are outputs from the policy network parameterized by $\theta$. This reparameterization enables the action $a_t$ to be a differentiable function of $\theta$, facilitating gradient descent methods. 

In addition to the soft $Q$-function and the policy, the temperature parameter $\alpha$ can be automatically updated by optimizing the following loss
\begin{equation}
J(\alpha)=\mathop{\mathbb{E}}\limits_{s_t\sim \mathcal{D}}\big[\mathop{\mathbb{E}}_{{a}_t\sim\pi_\theta}[-\alpha\log\pi_\theta({a}_t|{s}_t)-\alpha\bar{\mathcal{H}}]\big] {,}
    \label{eq:j_alpha}
    \vspace{-0.2em} 
\end{equation}
where $\bar{\mathcal{H}}$ is an entropy target with default value $-\dim{\mathcal{A}}$. Thus the policy can explore more in regions where the optimal action is uncertain, and remain more deterministic in states with a clear distinction between good and bad actions. 
Besides, since two-timescale updates, i.e, less frequent policy updates, 
usually result in higher quality policy updates, we integrate the delayed policy update mechanism employed in TD3 \cite{fujimoto2018addressing} into our framework's methodology. 
To this end, the policy, temperature and target $Q$ networks are updated with respect to soft $Q$ network every $e$ iterations. 

\noindent 

\subsubsection{Communication-Assisted Mixing Phase}
Subsequent to the phase of local learning, neighboring agents initiate periodic communication to enhance their collaboration in accomplishing complex tasks. 
The messages transmitted by agents are limited to policy parameters $\theta$ as in many works \cite{taik2022clustered, chellapandi2023federated, pacheco2024efficient}. Such an assumption is feasible as soft $Q$-functions have less impact on action selection than the policy in actor-critic algorithms.

Specifically, every $U$ times of policy updates, a communication round begins. 
Agent $i$ receives the policy parameters $\theta^{(j)}$ from neighboring agents $j\in \Omega_i$ via the 
D2D collaboration channel. 
Subsequently, agent $i$ could employ various methods to formulate a referential policy $\tilde{\pi}^{(i)}$, which is parameterized by $\tilde{\theta}^{(i)}\!=\!f({\theta}^{(1)},\cdots,{\theta}^{(j)},\cdots)$, that 
is, the parameters obtained from its neighbors $\forall j \in \Omega_i$. Afterward, agent $i$ directly mixes DNN parameters of the policy network, consistently with parallel distributed SGD methods as
\begin{equation}\label{eq:NN parameters mix}
\theta_\text{mix}^{(i)}=\theta^{(i)} +\zeta(\tilde{\theta}^{(i)}-\theta^{(i)}) {,}
\vspace{-0.1em}
\end{equation}
where $\zeta \in [0,1]$ is the mixture metric of DNN parameters. Taking model averaging in \cite{liu2022decentralized, xu2022trustable} as the example, $\tilde{\theta}^{(i)}$ is computed as $\tilde{\theta}^{(i)}\!=\frac{1}{|\Omega_i|}\!\sum\nolimits_{j\in\Omega_i}\!{\theta}^{(j)}$, and $\zeta\!=\!1\!-\!1/(|\Omega_i|+1)$ is further influenced by the number of neighbors involved. 
Then, for each agent $i$, $\theta^{(i)}$ should get aligned with mixed policy's parameters $\theta_{\text{mix}}^{(i)}$. 

\subsection{Problem Formulation}
This paper primarily targets the communication-assisted mixing phase. 
Instead of simply continuing to follow the idea of the direct average of the DNN parameters in FL, an effective parameter mixture method could better leverage the exchanged parameters to yield a superior target referential policy. This approach aims to improve RL policy performance during the communication-assisted mixing phase, resulting in a consistently higher value with respect to the maximum entropy objective, as described in \eqref{eq:MERL objective}.
However, it remains little investigated on the feasible means to mix the exchanged parameters (or their partial segments) and determine the proper mixture metric in \eqref{eq:NN parameters mix}, though it vitally affects both the communication overhead and learning performance. 
Therefore, by optimizing the mixture metric $\zeta$, we mainly focus on reducing the communication expenditure while maintaining acceptable cumulative rewards, that is,
\begin{align}
\vspace{-0.5em}
& \min_{\zeta} c(\upsilon,f) \nonumber \\
s.t.\quad 
& \sum\nolimits_t r_t(\Theta, \zeta) \geq r^{\text{thre}}  {,} \nonumber \\
& \Theta \leftarrow \{\theta_{\text{mix},k}^{(1)},\cdots, \theta_{\text{mix},k}^{(N)} \}  {,} \\
& \theta_{\text{mix},k}^{(i)}=\theta_k^{(i)} +\zeta(\tilde{\theta}_k^{(i)}-\theta_k^{(i)}) {,} &  \forall i \in \{1,\cdots, N\} {,} \nonumber \\
& \tilde{\theta}_k^{(i)} = f({\theta}_k^{(1)},\cdots,{\theta}_k^{(j)},\cdots) {,} & \forall k\text{ mod } U =0, j \in \Omega_i \nonumber {,} 
\end{align}
where $r^{\text{thre}}$ denotes the required minimum cumulative rewards, $v$ indicates the transmission bits of policy parameters, and $k$ is the index of policy iterations. 
Furthermore, $c(v,f)$ denotes the communication expenditure, governed by the utilization function $f$, which represents the method of utilizing the parameter or parameter segments obtained from different neighboring agent $j\in\Omega_i$ to construct a referential policy parameterized by $\tilde{\theta}$. 
The practical implementation of $f$ can be contingent on various factors, i.e., the underlying communications capability of agents to simultaneously receive and handle signals and the environmental conditions. 
After obtaining the referential policy parameter $\tilde{\theta}$, it is worthwhile to resort to a more comprehensive design of $\zeta$ to calibrate the communicating agents and contents as well as regulate the means to mix parameters or parameter segments, so as to provide a guarantee of performance improvement.

\section{Mixed Performance Improvement Bound Theorem of FRL Communication under MERL}
\label{sec:bound}

\begin{figure*}[tbp]
	\subfigcapskip = -10pt
	\begin{center}
		\subfigure[Independent local learning phase]
		{\includegraphics[scale =0.2]{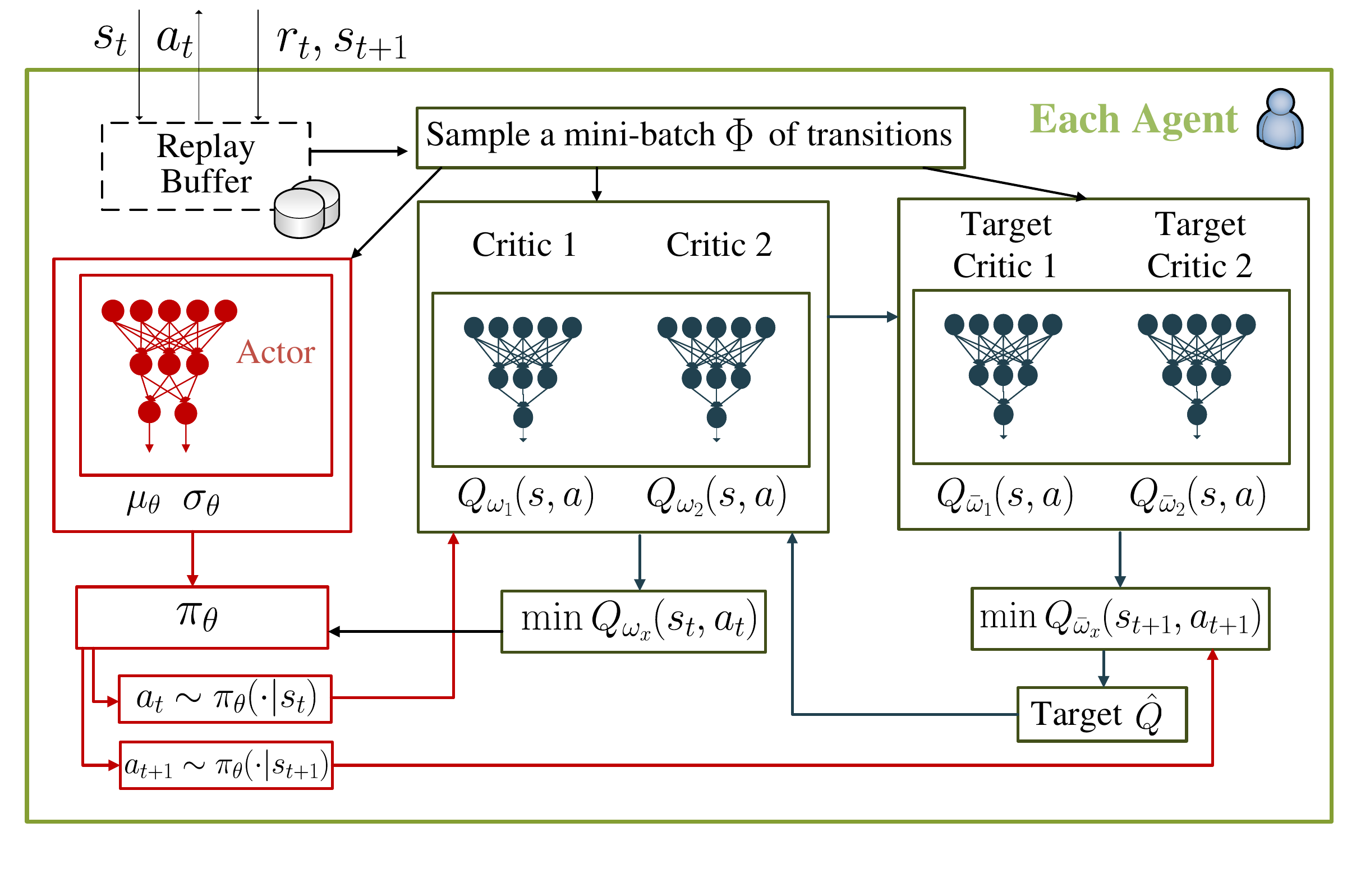}}
		\hspace{5pt}
		\subfigure[Communication-assisted mixing phase]
		{\includegraphics[scale =0.4]{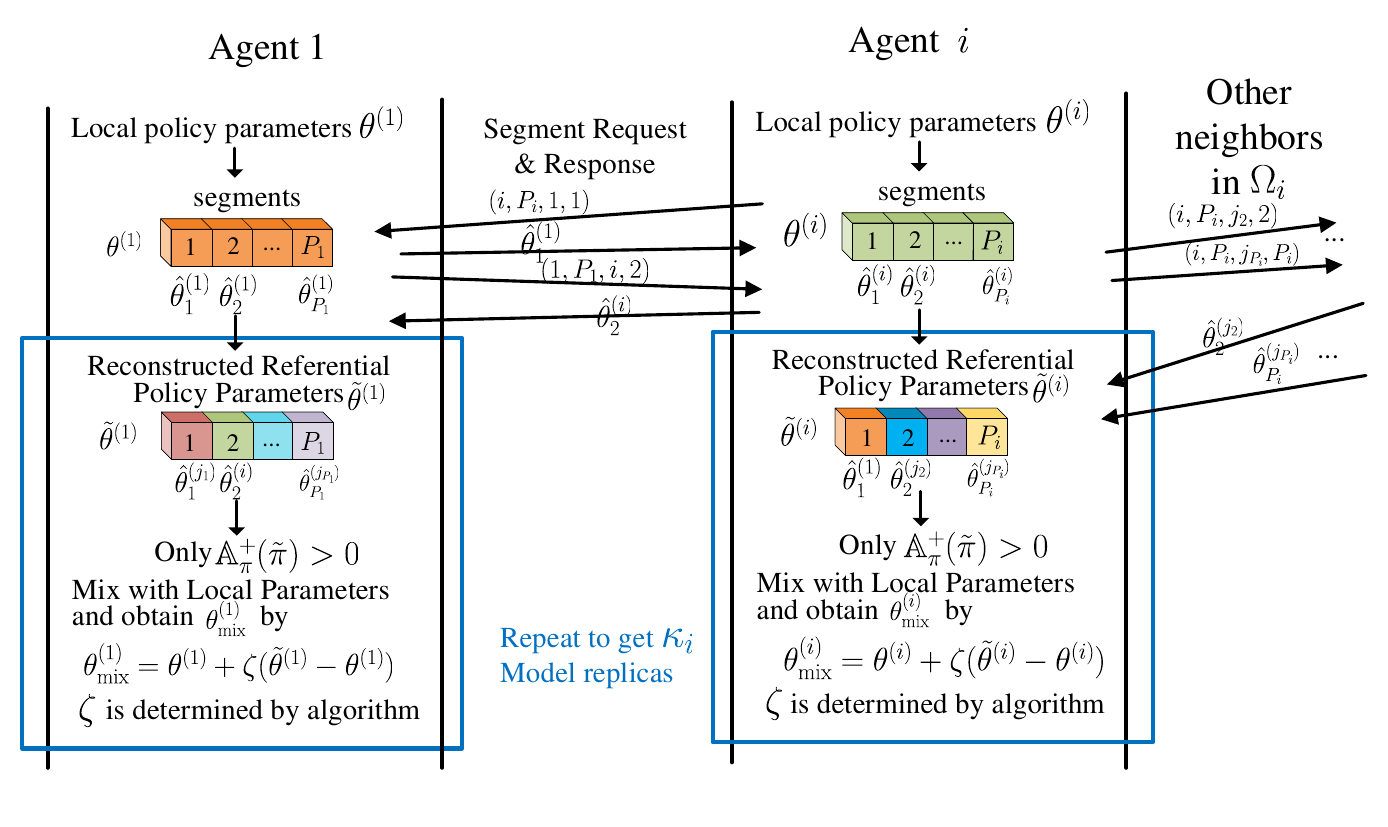}}
		\vspace{-0.3em}
		\caption{The illustration of RSM-MASAC implementation.}  
		\vspace{-0.6cm}
		\label{fig:framework}
	\end{center}
\end{figure*} 

Given the potential for significant variability in policy performance due to differences in training samples among multiple agents, as well as the staleness of iterations caused by the varying computing power of different agents, it is critical to recognize that not all referential policies - those amalgamated with the agent's own policy via the DNN parameter mixture approach detailed in \eqref{eq:NN parameters mix} - can contribute positively to agent's local learning process. 
More seriously, it may even degrade the learning performance sometimes \cite{xu2022trustable,kuba2022trust}.
Therefore, in order to ensure the policy improvement post-mixture, {a robust theoretical analysis method for evaluating the efficacy of the mixed policy distribution $\pi_\text{mix}$, approximated by DNN with parameter $\theta_\text{mix}$, is essential. }
This would regulate the selection of only those referential policies $\tilde{\pi}$, parameterized by $\tilde{\theta}$, that are beneficial, in the parameter mixture process. 

Drawing from this premise, based upon the conservative policy iteration as outlined in \cite{kakade2002approximately}, we employ a mixture update rule on policy distributions to find an approximately optimal policy. 
Our analysis here will focus specifically on calibrating the mixing means of policy distributions $\pi$ that exhibit a monotonic policy improvement property during the communication-assisted mixing phase. 
Notably, due to the non-linear transformation in DNN and possible applicability of softmax or restricted reparameterization \cite{agarwal2021theory}, 
    the mixture of policy distribution in \eqref{eq:update rule} is not directly equivalent to the mixture of their parameters, 
    especially when the parameter mixture takes the form in \eqref{eq:NN parameters mix} under distributed SGD. Hence, 
the detailed and more practical mixture implementation for the policy network parameter $\theta$, as in \eqref{eq:NN parameters mix}, will be derived from this section and more thoroughly presented in Section \ref{sec:NN mix section}. 

For any state $s$, we also define the mixed policy $\pi_\text{mix}$, which refers to a mixed distribution, as the linear combination of any referential policy $\tilde{\pi}$ and current policy $\pi$
\begin{equation}
    \label{eq:update rule}
    \pi_\text{mix}(a\vert s)=(1-\beta)\pi(a\vert s)+\beta\tilde{\pi}(a\vert s)  {,}
\end{equation}
where $\beta\in[0,1]$ is the weighting factor. 
The soft policy improvement, as outlined in Appendix A, suggests that this policy mixture can influence the ultimate performance regarding the objective stated in \eqref{eq:MERL objective}. 
Fortunately, we have the following new theorem on the performance gap associated with adopting these two different policies. 
\newtheorem{theorem}{Theorem}
\begin{theorem}{(Mixed Policy Improvement Bound)}
\label{theorem:mix policy improvement}
For any policy $\pi$ and $\tilde{\pi}$ adhering to \eqref{eq:update rule}, the improvement in policy performance after mixing can be measured by:
\begin{align}\label{eq:soft policy improvement inequality}
    \eta(\pi_\text{mix})-\eta(\pi)
    &\geq 
    \beta \mathop{\mathbb{E}}_{\substack{s\sim d_\pi\\
    a\sim \tilde{\pi}}}\left[A_\pi(s,a)+\alpha H(\tilde{\pi}(\cdot\vert s))\right] 
    - \frac{2\gamma\varepsilon \beta^2 }{(1-\gamma)^2} 
     \nonumber\\
     &\quad +\alpha  \mathop{\mathbb{E}}_{s\sim d_{\pi_\text{mix}}}\left[\mathrm{D}_{\mathrm{JS}}^\beta(\tilde{\pi}(\cdot\vert s)\Vert \pi(\cdot\vert s))\right]   {,}
\end{align}
where $\varepsilon \!:=\! \max_{s}\vert \mathop{\mathbb{E}}_{a\sim \tilde{\pi}}[A_\pi(s, a)\!+\!\alpha H(\tilde{\pi}(\cdot\vert s))]\vert$ represents the maximum advantage of $\tilde{\pi}$ relative to $\pi$, and 
$\mathrm{D}_{\mathrm{JS}}^\beta(p\Vert q)= \beta\sum p\log\frac{p}{\beta p + (1-\beta)q}+(1-\beta)\sum q\log\frac{q}{\beta p +(1-\beta)q}$ is the $\beta$-skew Jensen-Shannon(JS)-symmetrization of KL divergence \cite{nielsen2019jensen}, with two distributions $p$ and $q$. 
\end{theorem}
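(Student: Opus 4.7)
The plan is to extend Kakade and Langford's conservative policy iteration argument to the maximum-entropy objective $\eta(\pi)$ defined in \eqref{eq:MERL objective}. The starting point is a soft version of the performance difference lemma: writing $\eta(\pi)=\mathbb{E}_{s_0}[V^\pi(s_0)]$ and inserting the telescoping identity $\sum_{t=0}^{\infty}\gamma^t(\gamma V^\pi(s_{t+1})-V^\pi(s_t))=-V^\pi(s_0)$ along a trajectory $\tau\sim\pi_\text{mix}$, one obtains
\[
\eta(\pi_\text{mix})-\eta(\pi)=\mathbb{E}_{s\sim d_{\pi_\text{mix}}}\!\left[\mathbb{E}_{a\sim\pi_\text{mix}}[A_\pi(s,a)]+\alpha H(\pi_\text{mix}(\cdot\vert s))\right].
\]
Because $\mathbb{E}_{a\sim\pi}[A_\pi(s,a)]=0$ and $\pi_\text{mix}$ is linear in the action measure, the inner advantage term collapses to $\beta\,\mathbb{E}_{a\sim\tilde\pi}[A_\pi(s,a)]$.

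The distinctively MERL step is handling $H(\pi_\text{mix})$. I would establish the mixture identity
\[
H((1-\beta)\pi+\beta\tilde\pi)=(1-\beta)H(\pi)+\beta H(\tilde\pi)+\mathrm{D}_{\mathrm{JS}}^{\beta}(\tilde\pi\Vert\pi)
\]
by a direct computation that adds and subtracts $\sum_a m(a)\log m(a)$ with $m=\pi_\text{mix}$ inside the definition of the skew Jensen--Shannon term. Substituting this into the soft performance difference and then dropping the nonnegative summand $\alpha(1-\beta)\mathbb{E}_{s\sim d_{\pi_\text{mix}}}[H(\pi(\cdot\vert s))]$ yields a lower bound in which the JS term appears with exactly the coefficient required by the theorem, while the advantage-plus-entropy term is still averaged under $d_{\pi_\text{mix}}$ rather than $d_\pi$.

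The main obstacle, and the source of the $\beta^2$ penalty, is swapping the state-visitation measure from $d_{\pi_\text{mix}}$ to $d_\pi$ in this remaining term. I would mirror the Kakade--Langford coupling: since $\pi_\text{mix}(\cdot\vert s)$ can be sampled by tossing a coin with bias $\beta$ and drawing from $\tilde\pi$ only on heads, one can couple trajectories from $\pi$ and from $\pi_\text{mix}$ so that they agree exactly until the first ``heads'', whose first-occurrence time is geometric under the $\gamma$-discounted weighting. This yields a TV-type bound of order $\beta/(1-\gamma)$ between $d_{\pi_\text{mix}}$ and $d_\pi$. A closer accounting then shows that the integrand vanishes in expectation on the agreement events and is itself of order $\beta\varepsilon$ elsewhere, where $\varepsilon:=\max_{s}|\mathbb{E}_{a\sim\tilde\pi}[A_\pi(s,a)+\alpha H(\tilde\pi(\cdot\vert s))]|$; this double $\beta$-factor produces quadratic rather than linear scaling and recovers the $\frac{2\gamma\varepsilon\beta^2}{(1-\gamma)^2}$ slack. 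Assembling the soft performance difference lemma, the entropy--JS decomposition, and this coupling bound gives \eqref{eq:soft policy improvement inequality}. I expect the coupling/trajectory-divergence step to be the most delicate part, since the $\alpha H(\tilde\pi)$ contribution to $\varepsilon$ has to be absorbed uniformly in $s$ without losing a factor of $(1-\gamma)^{-1}$.
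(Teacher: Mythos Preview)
Your overall architecture---soft performance difference lemma, entropy/JS decomposition of $H(\pi_\text{mix})$, then a Kakade--Langford coupling to swap $d_{\pi_\text{mix}}$ for $d_\pi$---matches the paper's proof closely. But one step is genuinely wrong and should be fixed rather than glossed over.

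You assert that $\mathbb{E}_{a\sim\pi}[A_\pi(s,a)]=0$. In the MERL setting this is false: by \eqref{eq:V-function bellman equation}, $V^\pi(s)=\mathbb{E}_{a\sim\pi}[Q^\pi(s,a)]+\alpha H(\pi(\cdot\vert s))$, so
\[
\mathbb{E}_{a\sim\pi}[A_\pi(s,a)]=\mathbb{E}_{a\sim\pi}[Q^\pi(s,a)]-V^\pi(s)=-\alpha H(\pi(\cdot\vert s)),
\]
which is exactly the paper's Lemma~\ref{lemma:A_sum_zero}. Consequently the inner advantage term does \emph{not} collapse to $\beta\,\mathbb{E}_{a\sim\tilde\pi}[A_\pi(s,a)]$; it carries an extra $-(1-\beta)\alpha H(\pi(\cdot\vert s))$. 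Your subsequent move of ``dropping the nonnegative summand $\alpha(1-\beta)\mathbb{E}_{s\sim d_{\pi_\text{mix}}}[H(\pi(\cdot\vert s))]$'' happens to be exactly this missing piece with the opposite sign, so the two mistakes cancel and you land on the right expression---but only by accident, and the dropping step is itself unsound whenever $H(\pi)$ can be negative (e.g.\ differential entropy for continuous actions). The paper instead combines the correct soft-advantage identity with the entropy decomposition and obtains an \emph{equality}
\[
\mathbb{E}_{a\sim\pi_\text{mix}}[A_\pi(s,a)]+\alpha H(\pi_\text{mix}(\cdot\vert s))=\beta\big[\mathbb{E}_{a\sim\tilde\pi}[A_\pi(s,a)]+\alpha H(\tilde\pi(\cdot\vert s))\big]+\alpha\,\mathrm{D}_{\mathrm{JS}}^\beta(\tilde\pi\Vert\pi),
\]
with no term discarded. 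Once you repair this, your coupling argument for the measure swap is the same as the paper's Lemma~\ref{lemma3:mix A}, and the $\frac{2\gamma\varepsilon\beta^2}{(1-\gamma)^2}$ penalty follows as you describe.
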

The proof of this theorem is given in Appendix B. 

\textbf{Remark:}
Notably, the proof effectively tackles the difficulties arising from the re-defined, soft state value function with the extra logarithmic term of the policy by utilizing entropy decomposition of $ H(\pi_{\text{mix}}(\cdot\vert s))$ in Lemma \ref{lemma:mix entropy} as well as several mathematical tricks. Hence, this sets the stage for theoretically evaluating the performance of the mixed policy before the mixture occurs. 
The mixed policy improvement bound in \eqref{eq:soft policy improvement inequality} implies that under the condition that the right-hand side of \eqref{eq:soft policy improvement inequality} is larger than zero, the mixed policy will assuredly lead to an improvement in the true expected objective $\eta$. 
Besides, from another point of view, any mixed policy with a guaranteed policy improvement in Theorem \ref{theorem:mix policy improvement} definitely satisfies the soft policy improvement as well, since the item $\mathbb{E}_{s_1}\left(\alpha H(\pi_\text{old}(\cdot\vert s_{1}))
\!+\!
\mathop{\mathbb{E}}_{\substack{a_{1}\sim \pi_\text{old}}} [Q^{\pi_\text{old}}(s_{1},a_{1})]\right)=\mathbb{E}_{s_1}[V^{\pi_\text{old}}(s_1)]$ in \eqref{eq:A.1} in the Appendix A (i.e., proof of Lemma \ref{lemma:appendix Soft policy improvement}) is less than $\mathbb{E}_{s_1} [V^{\pi_\text{mix}}(s_1)]$. 
In addition, when the temperature parameter $\alpha=0$, this inequality can reduce to the standard form of policy improvement in traditional RL 
\cite{kakade2002approximately, schulman2015trust, xu2022trustable}. 
Hence, Theorem \ref{theorem:mix policy improvement} derives a more general conclusion for both MERL and traditional RL. 

Furthermore, since for $\beta\in[0,1]$, $\mathrm{D}_{\text{JS}}^\beta$ is greater than zero \cite{nielsen2019jensen}, only the sign of the first two terms in \eqref{eq:soft policy improvement inequality} needs to be considered. 
By applying the re-defined policy advantage under MERL
\begin{align}\label{eq:definition of A}
    \mathbb{A}^{+}_\pi(\tilde{\pi}) := \mathop{\mathbb{E}}_{\substack{s\sim d_\pi, a\sim \tilde{\pi}}}\left[A_\pi(s,a)+\alpha H(\tilde{\pi}(\cdot\vert s))\right]  {,}
\end{align} 
we can therefore establish a tighter yet more tractable bound for policy improvement as
\begin{align}\label{eq:SAC cumulative reward bound}
    \eta(\pi_\text{mix})-\eta(\pi)
    &\geq 
    \beta \mathbb{A}^{+}_\pi(\tilde{\pi})
    - C \beta^2   {,}
\end{align}
where $C = \frac{2\varepsilon\gamma}{(1-\gamma)^2}$.
Thus, \eqref{eq:SAC cumulative reward bound} indicates that a mixed policy conforms to the principle of soft policy improvement, provided that the right side of \eqref{eq:SAC cumulative reward bound} yields a positive value. 
More specifically, if the policy advantage $\mathbb{A}^{+}_\pi(\tilde{\pi})$ is positive, an agent with policy $\pi$ can reap benefits by mixing its policy distribution with referential policy $\tilde{\pi}$. On the contrary, if this advantage is non-positive, policy improvement cannot be assured through the mixing of $\tilde{\pi}$ and $\pi$. In essence, \eqref{eq:SAC cumulative reward bound} serves as a criterion for selecting referential policies that ensure final performance improvement, and thus we have the following corollary. 
\begin{corollary}
    \label{cor:performance_improvement}
    To obtain guaranteed performance improvement, the mixture approach of policy distributions shall satisfy that 
    \begin{itemize}
        \item $\mathbb{A}^{+}_\pi(\tilde{\pi}) > 0$;
        \item $\beta \mathbb{A}^{+}_\pi(\tilde{\pi})
        - C \beta^2 > 0$.
    \end{itemize}
\end{corollary}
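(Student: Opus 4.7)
The plan is to derive Corollary \ref{cor:performance_improvement} as a direct consequence of the tighter bound \eqref{eq:SAC cumulative reward bound} in Theorem \ref{theorem:mix policy improvement}. The goal of the corollary is to isolate checkable conditions on $(\tilde{\pi},\beta)$ under which the right-hand side of the bound is strictly positive, which then certifies $\eta(\pi_\text{mix})>\eta(\pi)$. Recalling that \eqref{eq:SAC cumulative reward bound} is obtained from \eqref{eq:soft policy improvement inequality} by dropping the non-negative JS-symmetrization term $\alpha\,\mathbb{E}_{s\sim d_{\pi_\text{mix}}}[\mathrm{D}_{\mathrm{JS}}^{\beta}(\tilde{\pi}(\cdot\vert s)\Vert \pi(\cdot\vert s))]$, whose non-negativity for $\beta\in[0,1]$ is a standard property of the $\beta$-skew JS divergence cited in \cite{nielsen2019jensen}, establishing strict positivity of the simpler quadratic expression is already sufficient for the guarantee.

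First I would factor the lower bound as $\beta\,\bigl(\mathbb{A}^{+}_\pi(\tilde{\pi}) - C\beta\bigr)$ and restrict attention to $\beta\in(0,1]$, since $\beta=0$ corresponds to no mixing and trivially yields $\eta(\pi_\text{mix})=\eta(\pi)$. Strict positivity of the factored expression is then equivalent to $\mathbb{A}^{+}_\pi(\tilde{\pi}) > C\beta$. Because $C = \frac{2\varepsilon\gamma}{(1-\gamma)^2}\geq 0$, any admissible $\beta>0$ forces $\mathbb{A}^{+}_\pi(\tilde{\pi})>0$, which is the first bullet of the corollary; conversely, once this holds, the second bullet $\beta\mathbb{A}^{+}_\pi(\tilde{\pi}) - C\beta^2 > 0$ places an admissible range on the weighting factor, namely $\beta<\mathbb{A}^{+}_\pi(\tilde{\pi})/C$ whenever $C>0$. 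Together, the two bullets are thus jointly sufficient for the bound to certify soft policy improvement, and I would conclude by invoking Theorem \ref{theorem:mix policy improvement} to lift this to the claim $\eta(\pi_\text{mix})>\eta(\pi)$.

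The main obstacle, if one can call it such, is conceptual rather than technical: one must be careful to present the two bullets as \emph{sufficient} conditions for a bound-based certificate of improvement, not as necessary conditions for $\eta(\pi_\text{mix})>\eta(\pi)$ in its own right. This is because discarding the non-negative JS-symmetrization term could in principle hide improvements that remain valid even when $\mathbb{A}^{+}_\pi(\tilde{\pi})\leq 0$. In the RSM-MASAC algorithmic context this conservative reading is exactly what is wanted: the check acts as a veto that rejects any referential policy whose estimated advantage fails to clear the margin $C\beta$, so every referential policy that passes is provably non-harmful in expectation, aligning naturally with the regulated mixture mechanism used during the communication-assisted mixing phase.
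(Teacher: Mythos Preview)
Your proposal is correct and mirrors the paper's own derivation: the corollary is stated immediately after the bound \eqref{eq:SAC cumulative reward bound}, with the paper noting that the JS-symmetrization term is non-negative for $\beta\in[0,1]$ and hence that positivity of $\beta\,\mathbb{A}^{+}_\pi(\tilde{\pi})-C\beta^2$ suffices, which is exactly the factoring argument you give. Your additional remark that the two bullets are sufficient (not necessary) is a helpful clarification but does not depart from the paper's approach.
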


\section{MASAC with Regulated Segment Mixture}\label{sec:MASAC design} 
In this section, 
as shown in Fig. \ref{fig:framework}, 
we present the design of RSM-MASAC, which reduces the communication overhead while incurring little sacrifice to the learning performance. 
\subsection{Algorithm Design}\label{4B1}
Consistent with the SAC setting as in Section \ref{sec:system model and problem formulation}, agents in RSM-MASAC undergo the same local iteration process. 
Meanwhile, for the communication-assisted mixing phase, we will elaborate on the design details of RSM-MASAC, {including the segment request \& response and policy parameter mixture with theory-established performance improvement. }

\subsubsection{Segment Request \& Response}
Inspired by segmented pulling synchronization mechanism in DFL \cite{hegedHus2019gossip}, we develop and perform a segment request \& response procedure. The proposed approach divides transmission of the policy parameters into segments, and each agent selectively requests various segments of policy parameters from different neighbors simultaneously through D2D communication, thereby facilitating the construction of a reconstructed referential policy for subsequent aggregation while also effectively balancing the load of communication costs and optimizing bandwidth usage. 
Specifically, for every communication round, each agent $i$ breaks its policy parameters $\theta^{(i)}$ into $P_i$ ($P_i =\min\{P, \vert \Omega_i \vert \}$, according to the default segmentation granularity $P$ and the number of neighbors $\vert \Omega_i \vert$ in current time) non-overlapping segments $\hat{\theta}^{(i)}_1,\hat{\theta}^{
(i)}_2,\cdots,\hat{\theta}^{(i)}_{P_i}$ as
\vspace{-0.1em}
\begin{equation}
    \theta^{(i)} =(\hat{\theta}^{(i)}_1,\hat{\theta}^{(i)}_2,\cdots,\hat{\theta}^{(i)}_{P_i})   {.}
\vspace{-0.1em}
\end{equation}
Significantly, the available segmentation strategies are diverse and include, but are not limited to, dividing the policy parameters by splitting DNN layers \cite{barbieri2023layer}, modular approach \cite{barbieri2022decentralized}, or segmenting according to the parameter size \cite{hegedHus2019gossip}. 
Without loss of generality, taking IoV as an application example, there exist rather diverse V2V communication protocols and technologies such as DSRC (based on IEEE 802.11p and its subsequent IEEE 802.11bd), NR-V2X (as specified in 3GPP Release 16 and 17), which support Multiple Input Multiple Output (MIMO) technology and advanced signal processing techniques, such as spatial multiplexing and beamforming. 
Therefore, the communication capability depends on multiple factors ranging from physical layer configurations, such as antenna array setup and channel conditions, to upper layer protocols, shaping the theoretical upper limit for segmentation granularity $P_{\max}$. 
Besides, shorter distances between agents (e.g., vehicles) and better channel conditions generally enhance the capacity and reliability of the system. 
Still, the segmentation also can be dynamic, with larger $P_{\max}$ in environments with higher agent density and smaller $P_{\max}$ in cases where fewer agents are within communication range.

To clarify this process, we use the most intuitive uniform parameter partition. 
For each segment $p=1,\cdots, P_i$, agent $i$ randomly selects a target agent (without replacement) from its neighbors (i.e., $j_p \in \Omega_i$) to send segment request $(i,P_i,j_p,p)$, which indicates the agent $i$ who initiates the request, its total segment number $P_i$, as well as the requested segment $p$ from the target agent $j_p$. Upon receiving the request, the agent $j_p$ will break its own policy parameters $\theta^{(j_p)}$ into $P_i$ segments and return the corresponding requested segment $\hat\theta^{(j_p)}_p$ according to the identifier $p$. 
Then, agent $i$ could reconstruct a referential policy based on all of the fetched segments, that is, 
\begin{equation}\label{eq:reconstruct policy}
    \tilde{\theta}^{(i)} = (\hat{\theta}^{(j_1)}_1,\hat{\theta}_2^{(j_2)},\cdots,\hat{\theta}_{P_i}^{(j_{P_i})})  {.}
\vspace{-0.1cm}
\end{equation}

In fact, this segmented transmission approach can be executed in parallel, thereby optimizing the utilization of available bandwidth. Instead of being confined to a single link, the traffic is distributed across $P_i$ links, enhancing the overall data transfer efficiency. 
Besides, in order to further accelerate the propagation and ensure the model quality, we can construct multiple model replicas in RSM-MASAC. 
That is, the process of segment request $\&$ response can be repeated $P_i\times \kappa_i$ times, reconstructing $\kappa_i =\min\{\kappa, \vert \Omega_i \vert \}$ reconstructed referential policies in one communication round.

\begin{algorithm}[tbp]\small
  \caption{The RSM-MASAC Algorithm.}
  \label{my alg}
  \begin{algorithmic}[1]
  \STATE{Initialize network parameters $\theta^{(i)}$, $\omega_{1}^{(i)}$, $\omega_{2}^{(i)}$, $i=1,2,\cdots N$.}
  \STATE{Initialize target network parameters $\bar{\omega}_1^{(i)}\leftarrow \omega_1^{(i)}$, $\bar{\omega}_2^{(i)}\leftarrow \omega_2^{(i)}$.}
  \STATE{Initialize learning rate $\eta_\pi$, $\eta_Q$, $\eta_\alpha$, temperature $\alpha$, communication interval $U$, segmentation granularity $P$, predefined replicas $\kappa$.}
  \STATE{Initialize iteration index $k\gets 0$ and $ counter \gets 0$.}
    \FOR{each epoch}
        \FOR{$t\leftarrow 1$ to $T$}
        \STATE {\textbf{Each agent $i$ executes:}}
        \STATE{\textbf{* Independent local learning phase }}
                    \STATE{Select an action $a_t^{(i)}$ with respect to $s_t^{(i)}$ according to the current policy $\pi^{(i)}$ parameterized by $\theta^{(i)}$.}
                    \STATE{Observe reward $r_t^{(i)}$ and next state $s_{t+1}^{(i)}$.}
                    \STATE{Save the new transition in replay buffer:
                    $\mathcal{D}^{(i)}\gets\mathcal{D}^{(i)}\cup\langle s_t^{(i)},a_t^{(i)},r_t^{(i)},s_{t+1}^{(i)} \rangle$.}
                    \STATE{Sample a mini-batch $\Phi^{(i)}\sim \mathcal{D}^{(i)}$.}
                    \STATE{Update soft $Q$ function $\omega_x^{(i)} \gets \omega_x^{(i)}-\eta_Q \nabla J_{Q}(\omega_x^{(i)})$ by \eqref{eq:j_q}, for $\forall x={1,2}$.}
                    \IF{$ counter \bmod d = 0$ \textbf{or} $t=T$}
                        \STATE{Update policy $\vspace{-0.5em}\theta_{k+1}^{(i)} \leftarrow \theta_k^{(i)}-\eta_{\pi} \nabla J_\pi (\theta_k^{(i)})$ by \eqref{eq:policy objective}.}
                        \vspace{5pt}
                        \STATE{Adjust temperature $\alpha^{(i)} \gets \alpha^{(i)}-\eta_{\alpha} \nabla J(\alpha^{(i)})$ by \eqref{eq:j_alpha}.}
                        \vspace{1pt}
                        \STATE{Update target networks $\bar{\omega}_x^{(i)}\gets \varrho\omega_x^{(i)}+(1-\varrho)\bar{\omega}_x^{(i)}$, for $\forall x=1,2$.}
                        \STATE{$k\gets k+1$.}
                        
                    \ENDIF
                     \STATE{$counter\gets counter + 1$.}
                    
                    \STATE{\textbf{* Communication-assisted mixing phase }}
                    \IF{$ k \bmod  U = 0$}
                        \STATE{Update policy $\theta_k^{(i)}\leftarrow$ \textbf{CommMix}($\theta_k^{(i)}$, $i$, $\Omega_i$, $P$, $\kappa$) according to Algorithm \ref{al:CommMix}.}
                    \ENDIF                      
        \ENDFOR
        \ENDFOR
  \end{algorithmic}
\end{algorithm}
   
\begin{algorithm}[tbp]\small
  \caption{The \textbf{CommMix} Function in Algorithm \ref{my alg}.}
  \label{al:CommMix}   
  \begin{algorithmic}[1]
        \STATE{\textbf{Input:} $\theta$, $i$, $\Omega_i$, $P$, $\kappa$.}
        \STATE{$P_i = \min\{P, \vert \Omega_i \vert \}$ and $\kappa_i =\min\{\kappa, \vert \Omega_i \vert \}$.}
        \FOR{each replica $1,2,\cdots,\kappa_i$}

            \STATE{Send $P_i$ pulling request $(i,P_i,j_p,p)$ to nearby collaborators in $\Omega_i $, and receive $\hat{\theta}^{(j_p)}_p$ to reconstruct $\tilde{\theta}$ as \eqref{eq:reconstruct policy}.}
            \STATE{Select $M$ samples from the replay buffer $\mathcal{D}^{(i)}$.}
            \STATE{Estimate $\mathbb{A}^{+}_{\pi}(\tilde{\pi}) $ according to \eqref{eq:estimate policy advantage}.}
            \IF{$\mathbb{A}^{+}_\pi(\tilde{\pi}) >0$}
                \STATE{Evaluate $F(\theta)$ according to \eqref{eq:estimate FIM}.}
                \STATE{Get the upper bound of $\zeta$ according to Theorem \ref{theorem: parameter mix}.}
                \STATE{Make the mixture metric $\zeta$ less than the calculated upper bound, and update $\theta_\text{mix}^{(i)}$ by \eqref{eq:NN parameters mix}.}
            \ENDIF
        \ENDFOR
        \STATE{\textbf{Output:} $\theta_\text{mix}^{(i)}$.}
  \end{algorithmic}
\end{algorithm}

\subsubsection{Policy Parameter Mixture with Theory-Established Performance Improvement}\ 
\label{sec:NN mix section}

Consistent with TRPO \cite{schulman2015trust}, we introduce KL divergence to replace $\beta$ by setting  
$\beta:=\sqrt{\mathrm{D}_{\mathrm{KL}}^{\max}(\pi\Vert\pi_{\text{mix}})}$, where $\mathrm{D}_{\mathrm{KL}}^{\max}(\pi\Vert\pi_{\text{mix}}) = \max_s \mathrm{D}_{\mathrm{KL}}(\pi(\cdot\vert s)\Vert\pi_{\text{mix}}(\cdot\vert s))$. 
Thus, the second condition in Corollary \ref{cor:performance_improvement} is equivalent to 
\begin{align}
\vspace{-0.2em}
	\sqrt{\mathrm{D}_{\mathrm{KL}}^{\max}(\pi\Vert\pi_{\text{mix}})}<\frac{\mathbb{A}^{+}_\pi(\tilde{\pi})}{C}.\label{eq:condition}
 \vspace{-0.2em}
\end{align} 

Nevertheless, it hinges on the computation-costly KL divergence to quantify the difference between probability distributions. 
Fortunately, since for a small change in the policy parameters, the KL divergence between the original policy and the updated policy can be approximated using a second-order Taylor expansion, wherein FIM serves as the coefficient matrix for the quadratic term. This provides a tractable way to assess the impact of parameter changes on the policy. 
Therefore, 
we utilize FIM, delineated in context of natural policy gradients by \cite{kakade2001natural}, as a mapping mechanism to revise the impact of certain changes in policy parameter space on probability distribution space. 
Then in the following theorem, we can get the easier-to-follow, trustable upper bound for the mixture metric of policy DNN parameters.
\begin{theorem}(Guaranteed Policy Improvement via Parameter Mixing)
	\label{theorem: parameter mix}
    With any referential policy parameters $\tilde{\theta}$, an agent with current policy parameters $\theta$ can improve the true objective $\eta$ as in \eqref{eq:MERL objective} through updating $\theta$ to mixed policy parameters $\theta_\text{mix}$ in accordance with \eqref{eq:NN parameters mix}, provided it fulfills the following two conditions:
    \begin{itemize}
        \item $\mathbb{A}^{+}_\pi(\tilde{\pi})> 0$;
        \item $0<\zeta<\left[\frac{2\mathbb{A}^{+}_\pi(\tilde{\pi})}{C [(\tilde{\theta}-\theta)^{\intercal}  F(\theta) (\tilde{\theta}-\theta)]}\right]^\frac{1}{2}$.
    \end{itemize}
    where $F(\theta)$ is the FIM of policy $\pi$ parameterized by $\theta$. 
\end{theorem}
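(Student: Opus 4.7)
The plan is to derive the parameter-space bound in Theorem \ref{theorem: parameter mix} as a direct instantiation of the distributional sufficient condition already furnished by Corollary \ref{cor:performance_improvement}, using the Fisher Information Matrix as the bridge between policy-distribution and parameter-vector discrepancy.

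First, I would restate Corollary \ref{cor:performance_improvement}: improvement of $\eta$ under the mixture $\pi_{\text{mix}}=(1-\beta)\pi+\beta\tilde{\pi}$ is secured whenever $\mathbb{A}^{+}_\pi(\tilde{\pi})>0$ and $\beta\mathbb{A}^{+}_\pi(\tilde{\pi})-C\beta^{2}>0$. The first bullet of the theorem is then inherited verbatim from the corollary, so only the second bullet needs a translation into parameter space. Following the TRPO-style substitution introduced immediately above the theorem statement, I would replace the scalar weighting factor $\beta$ by a KL-divergence proxy $\mathrm{D}_{\mathrm{KL}}^{\max}(\pi\Vert\pi_{\text{mix}})$ between the current and mixed policies, reducing the scalar sufficient condition of Corollary \ref{cor:performance_improvement} to a single inequality that compares $\mathrm{D}_{\mathrm{KL}}^{\max}(\pi\Vert\pi_{\text{mix}})$ against a quantity proportional to $\mathbb{A}^{+}_\pi(\tilde{\pi})/C$, as in \eqref{eq:condition}.

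Next, I would bridge from distribution space to parameter space via a second-order Taylor expansion of the KL. Writing $\mathrm{D}_{\mathrm{KL}}(\pi_\theta(\cdot|s)\Vert\pi_{\theta+\Delta\theta}(\cdot|s))$ around $\Delta\theta=0$, the zeroth- and first-order terms vanish (the KL and its gradient at coincidence are both zero), leaving the quadratic form $\tfrac{1}{2}\Delta\theta^{\intercal}F_{s}(\theta)\Delta\theta$ with $F_{s}(\theta)$ the state-conditional FIM; aggregating over states gives the global FIM $F(\theta)$ of $\pi_\theta$ used in natural policy gradient, so that $\mathrm{D}_{\mathrm{KL}}^{\max}(\pi\Vert\pi_{\text{mix}})\approx \tfrac{1}{2}(\theta_{\text{mix}}-\theta)^{\intercal}F(\theta)(\theta_{\text{mix}}-\theta)$. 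Substituting the parametric mixture rule $\theta_{\text{mix}}-\theta=\zeta(\tilde{\theta}-\theta)$ from \eqref{eq:NN parameters mix} turns the KL surrogate into $\tfrac{\zeta^{2}}{2}(\tilde{\theta}-\theta)^{\intercal}F(\theta)(\tilde{\theta}-\theta)$, and plugging this back into the condition produces a quadratic inequality in $\zeta$. Solving it for $\zeta>0$ recovers precisely the explicit upper bound claimed in the theorem.

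The step I expect to be the main obstacle is the quality of the FIM surrogate at the level of $\mathrm{D}_{\mathrm{KL}}^{\max}$ rather than an expected KL: the FIM is naturally tied to $\mathbb{E}_{s}[\mathrm{D}_{\mathrm{KL}}(\pi_\theta(\cdot|s)\Vert\pi_{\theta+\Delta\theta}(\cdot|s))]$, whereas Theorem \ref{theorem:mix policy improvement} and the substitution \eqref{eq:condition} invoke the per-state maximum, and the DNN parameterization (tanh-squashed Gaussian) introduces higher-order Taylor residuals that the quadratic form ignores. I would address this by arguing that for the small mixture steps of interest the residual is $O(\Vert\Delta\theta\Vert^{3})$ and that the max-over-states KL can be controlled by its expectation up to a benign problem-dependent constant, so that the resulting $\zeta$ upper bound remains a sound, numerically evaluable decision rule — exactly what Algorithm \ref{al:CommMix} needs to accept or reject a candidate referential policy during the communication-assisted mixing phase.
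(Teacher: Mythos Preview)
Your proposal is correct and follows essentially the same route as the paper: invoke Corollary~\ref{cor:performance_improvement}, replace $\beta$ by $\sqrt{\mathrm{D}_{\mathrm{KL}}^{\max}(\pi\Vert\pi_{\text{mix}})}$ as in \eqref{eq:condition}, then Taylor-expand the KL to second order so that the vanishing zeroth/first-order terms leave $\tfrac{1}{2}\zeta^{2}(\tilde{\theta}-\theta)^{\intercal}F(\theta)(\tilde{\theta}-\theta)$, and solve for $\zeta$. Your caveat about the max-vs-expected KL and the cubic Taylor residual is a fair observation that the paper simply absorbs into the ``$\approx$'' without further discussion.
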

\begin{proof}
Recalling the definition of $\beta$ and the mixture approach of $\theta_\text{mix}$ in \eqref{eq:NN parameters mix}, as well the change of policy parameters (i.e., $\Delta\theta=\zeta(\tilde{\theta}-\theta)$), for any state $s$, the KL divergence between the current policy and the mixed policy  can be expressed  by performing a second-order Taylor expansion of the KL divergence at the point $\theta$ in parameter space as 
\vspace{-0.2em}
\begin{align}
 \vspace{-3em}
	\mathrm{D}_{\mathrm{KL}}&(\pi_\theta(\cdot\vert s)\Vert \pi_{\theta+\Delta\theta}(\cdot\vert s))\nonumber\\
	=& \sum_{a\in\mathcal{A}}\pi_\theta(a\vert s)\log \frac{\pi_\theta(a\vert s)}{\pi_{\theta+\Delta\theta}(a\vert s)}\nonumber\\
   \approx&
    \mathrm{D}_{\mathrm{KL}}(\pi_\theta(\cdot\vert s)\Vert \pi_{\theta}(\cdot\vert s)) 
   -\mathop{\mathbb{E}}\limits_{a\sim\pi_\theta}\left[\frac{\partial\log\pi_\theta(a\vert s)}{\partial \theta}\right]^{\intercal}\!\!\Delta\theta \nonumber\\
    &\quad + 
    \frac{1}{2}\Delta\theta^{\intercal}  F(\theta) \Delta\theta\nonumber\\
    \stackrel{(a)}{=} & \frac{1}{2}\zeta^2 (\tilde{\theta}-\theta)^{\intercal}  F(\theta)(\tilde{\theta}-\theta)  {,} \nonumber
    \vspace{-2em}
\end{align}
\noindent where the equality $(a)$ comes from the fact that by definition, $ \mathrm{D}_{\mathrm{KL}}(\pi_\theta(\cdot\vert s)\Vert \pi_{\theta}(\cdot\vert s)) = 0$, while 
$
\mathop{\mathbb{E}}\nolimits_{a\sim\pi_\theta}\left[\frac{\partial\log\pi_\theta(a\vert s)}{\partial \theta}\right] 
	 = \sum_{a\in\mathcal{A}} \pi_\theta(a\vert s)\frac{\partial\log\pi_\theta(a\vert s)}{\partial \theta}
	 = \sum_{a\in\mathcal{A}} \frac{\partial\pi_\theta(a\vert s)}{\partial \theta}
	 =0
$. 
Notably, the FIM $F(\theta)$ takes the expectation for all possible states, which reflects the average sensitivity of the whole state space rather than a particular state, and can be calculated as
\begin{equation}
 \vspace{-0.5em}
    F(\theta) = \mathop{\mathbb{E}}_{\substack{s\sim d_\pi \\ a\sim \pi_\theta}}\left[\left(\frac{\partial\log\pi_\theta(a\vert s)}{\partial \theta}\right)\!
    \left(\frac{\partial\log\pi_\theta(a\vert s)}{\partial \theta}\right)^{\intercal} \right]  {.}\nonumber
\end{equation}

Finally, combining Corollary \ref{cor:performance_improvement} and \eqref{eq:condition}, we have the theorem.
\end{proof}

\textbf{Remark:}  
With Theorem \ref{theorem:mix policy improvement} and Theorem \ref{theorem: parameter mix}, we can anticipate the extent of policy performance changes resulting from policy DNN parameters mixture during the communication-assisted mixing phase. 
This prediction is based solely on the local curvature of the policy space provided by FIM, thus avoiding the cumbersome computations to derive the full KL divergence of two distributions for every state. 
Besides, given that $F(\theta)$ is a positive definite matrix, a positive sign of the policy advantage $\mathbb{A}^{+}_\pi(\tilde{\pi})$ implies an increase in the DNN parameters' mixture metric $\zeta$ corresponding to the rise in $\mathbb{A}^{+}_\pi(\tilde{\pi})$. That is, rather than simple averaging, agents can achieve guaranteed soft policy improvement after mixing by learning more effectively from referential policies with an elevated $\zeta$. 

In practice, to evaluate $\mathbb{A}^{+}_\pi(\tilde{\pi})$ and $ F(\theta)$, the expectation can be estimated by the Monte Carlo method, 
approximating the global average by the states and actions under the policy. Meanwhile, the importance sampling estimator is also adopted to use the off-policy data in the replay buffer for the policy advantage estimation, where $\pi_{t}$ typically denotes the action sampling policy at time step $t$. As outlined in Lemma \ref{lemma:estimate of policy advantage} in Appendix C, we have
\begin{align}
\mathbb{A}^{+}_\pi(\tilde{\pi}) 
&\approx 
\mathbb{E}_{s_t,a_t\sim\mathcal{D}}
\bigg[
\left(\frac{\tilde{\pi}_{\tilde{\theta}}(a_t\vert s_t)-\pi_\theta(a_t\vert s_t)}{\pi_t(a_t\vert s_t)}\right)
 \!\mathop{\min}_{\substack{x\in 1,2}}Q_{\omega_x}({s}_t,{a}_t)\nonumber\\
& \qquad + \alpha[H(\tilde{\pi}_{\tilde{\theta}}(\cdot\vert s_t)) - H({\pi}_{{\theta}}(\cdot\vert s_t)) ]\bigg]  {,} \label{eq:estimate policy advantage}
\end{align}
and
\begin{equation}
F(\theta) \!\approx\!\mathop{\mathbb{E}}\limits_{s_t\sim\mathcal{D}}\!\bigg[\!\mathop{\mathbb{E}}\limits_{a_t\sim\pi_\theta}\!
\bigg[\!
\left(\frac{\partial\log\pi_\theta(a\vert s)}{\partial \theta}\right)\!\left(\frac{\partial\log\pi_\theta(a\vert s)}{\partial \theta}\right)^{\intercal}
\!\bigg]\bigg]  {.}\label{eq:estimate FIM}
\end{equation} 

Finally, we summarize the details of RSM-MASAC in Algorithm \ref{my alg}. 
RSM-MASAC employs a theory-guided metric for policy parameter mixture, which takes into account the potential influence of parameters mixture on soft policy improvement under MERL — a factor commonly overlooked in parallel distributed SGD methodologies. In particular, the mixing process between any two agents is initiated solely when there is a positive policy advantage. Adhering to Theorem \ref{theorem: parameter mix}, the mixture metric is set marginally below its computed upper limit, ensuring both policy improvement and convergence.

\subsection{{Communication Cost Analysis}}\label{4B2}
Since the pulling request does not contain any actual data, its cost in the analysis can be ignored, and we only consider the policy parameters transmitted among agents to analyze the communication efficiency of RSM-MASAC. 

Regarding the communication overhead for each segment request, RSM-MASAC incurs a maximization data transmission cost of $\upsilon_p=\upsilon/P$ through D2D communications. 
Following the same settings as in \cite{barbieri2022decentralized}, we provide illustrative IoV examples to demonstrate the overall reduction in communication overhead. 
Specifically, the Collective Perception Message (CPM) defined in Collective Perception Service (CPS) of ETSI TR 103 562 \cite{ETSI} can encapsulate DNN parameters into the Perceived Object Containers (POCs) and propagate them among cooperating DRL agents with each message carrying a payload of $4,480$ bytes, serving as segment responses in our algorithm. 
And the DNN parameters $\theta\in\mathbb{R}^d$ 's transmission size $\upsilon$ can be regarded as $32d$ bits, commonly assumed in \cite{reisizadeh2020fedpaq, barbieri2022decentralized, barbieri2022communication}. 
Consequently, the total communication overhead for reconstructing maximal $\kappa$ referential policies per agent in each round $c(\upsilon,f)=\frac{N\times \upsilon\times\kappa}{8 \times1024^3}$ (GB), as well as $\frac{N\times \upsilon\times\kappa}{8\times4480}$ message numbers, which is $(N-1)/\kappa$ times less than that in a fully connected communication setup \cite{xu2022trustable}. Additionally, communication occurs at a periodic interval of $U$, allowing for further reduction in communication overhead by decreasing communication frequency. Moreover, by simultaneously requesting $P_i$ agents in parallel, RSM-MASAC benefits from the sufficient use of the bandwidth and enhances the capability to overcome possible channel degradation.

\subsection{Complexity Analysis}
\label{sec:complexity}
	For each referential policy, the calculation of policy advantage $\mathbb{A}^{+}_\pi(\tilde{\pi})$ is contingent with sample size $M$, policy parameter size $\vert d \vert$ and $Q$ network parameter size $\vert d_Q \vert$, with complexity $\mathcal{O}(M(\vert d \vert+\vert d_Q \vert))$. 
    However, the mixture of policy parameters only proceeds if condition $\mathbb{A}^{+}_\pi(\tilde{\pi})>0$ is met during the communication-assisted mixing phase, with the computation complexity primarily depending on the calculation of FIM. 
    Specifically, with sampling approximated method using first-order gradients in \eqref{eq:estimate FIM}, the computation complexity of FIM is $\mathcal{O}(M\vert d \vert^2)$ and the associated memory complexity is $\mathcal{O}( \vert d \vert^2)$. 
	However, the actual computation of the FIM is not restricted to sampling-based method alone, there has been extensive research on FIM approximation methods, including Diagonal Approximation \cite{amari1998natural,george2018fast} and Low-Rank Approximation \cite{martens2010deep}, among others, to further reduce to the complexity.

\section{Experimental Results and Discussions}\label{sec:simulations}
In this section, we validate the effectiveness of our proposed algorithm for the speed control of Connected Automated Vehicles (CAVs) in IoV, highlighting its superiority compared to other methods.

\subsection{Experimental Settings}\label{4B}
We implement two simulation scenarios on Flow \cite{wu2017flow, vinitsky2018benchmarks}, which is a traffic control benchmarking framework for mixed autonomy traffic. As illustrated in Fig. \ref{fig:benchmark}, the common urban traffic intersection 
scenario ``Figure 8" and highway scenario ``Merge" are selected, with the main system settings described in Table \ref{tab:system setting}. 
\begin{figure}[tbp]
	\subfigcapskip = -1pt
	\begin{center}
		\subfigure[``Figure 8"]
		{\includegraphics[scale =0.05]{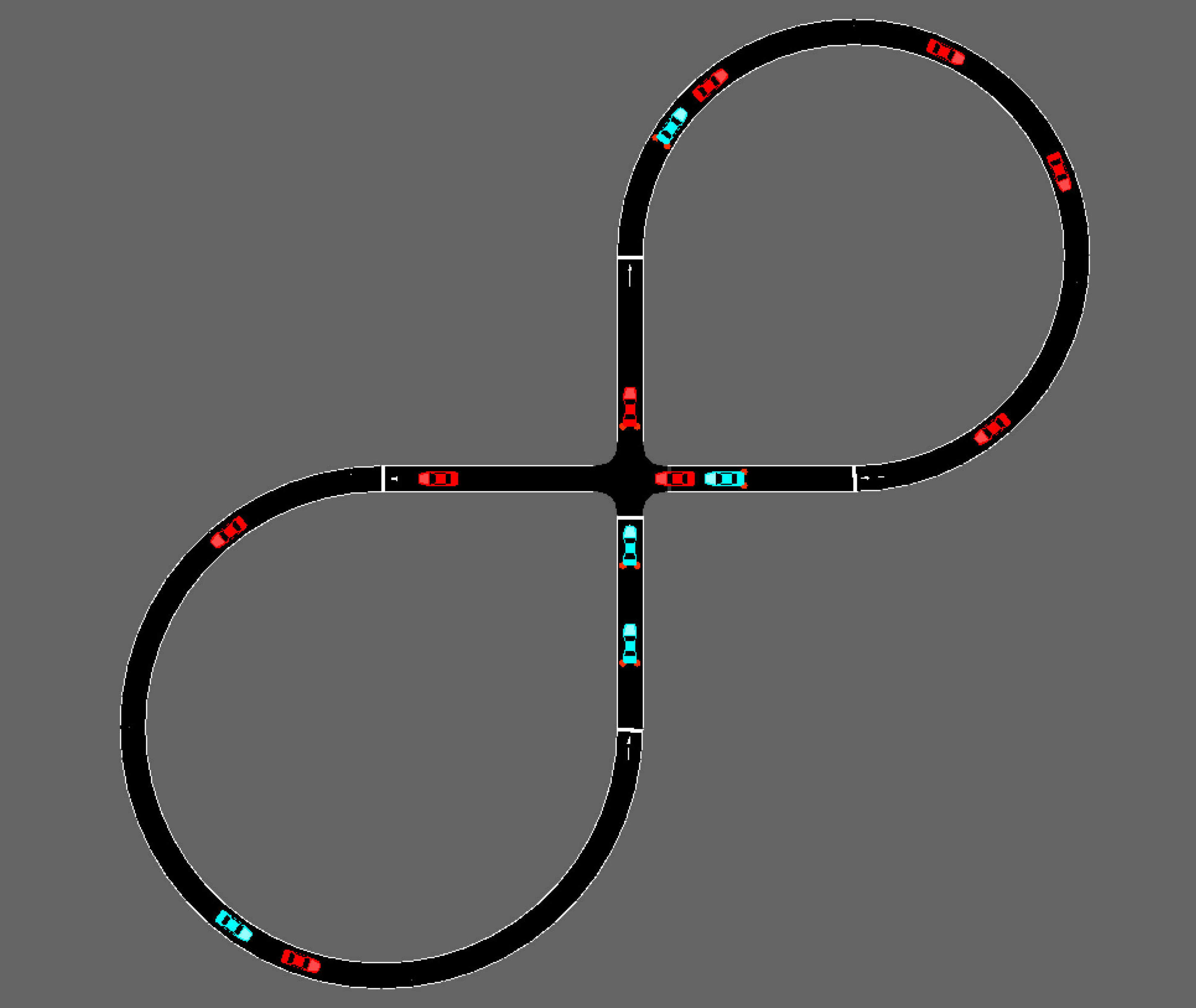}}
		\hspace{5pt}
		\subfigure[``Merge"]
		{\includegraphics[scale =0.52]{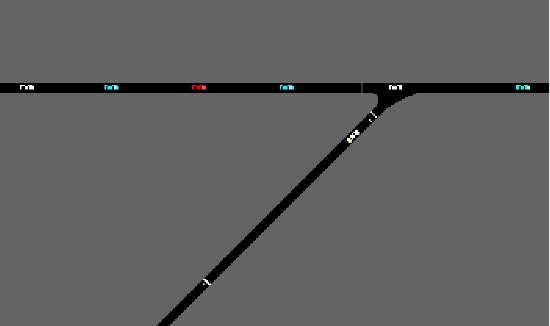}}
		\caption{The two scenarios for simulations on Flow. The red vehicles are DRL-driven CAVs, while the blue vehicles are the HDVs observed by the DRL-driven CAVs, and the white vehicles are the HDVs that are not observed in the state space.}  
		\label{fig:benchmark}
	\end{center}
 \vspace{-1em}
\end{figure} 

\begin{table}[tbp]
	\centering
	\caption{Setting of System Parameters in two scenarios.}
	\label{tab:system setting}
	\begin{tabular}{c|c|c}
		\toprule
		Parameters  Definition &Figure 8& Merge\\ \midrule
		Number of DRL agents $N$& $9$ & $13$\\
		Total time-steps per epoch $T$ & $1,500$ & $750$ \\
		Number of epochs &$300$&$260$\\
		Range of acceleration ($m/s^2$)& $[-3, 3]$ & $[-1.5, 1.5]$ \\
		Desired velocity per vehicle ($m/s$) & $20$ & $20$\\
		Speed limit per vehicle ($m/s$) & $30$ & $30$\\
		Length per time-step (s)& $0.1$ & $0.1$\\
            Maximum of vehicles per hour & - & $2,300$\\
		\bottomrule
	\end{tabular}
	
\end{table}
\begin{itemize}
\item ``Figure 8": $14$ vehicles navigate a one-way lane shapes like a figure ``8'', including $5$ emulated Human Driven Vehicles (HDV) controlled by Simulation of Urban MObility (SUMO) with Intelligent Driver Model (IDM) \cite{treiber2000congested}, and $9$ IRL-controlled CAVs maintaining dedicated links to update their parameters through the V2V channel. 
At the lane's intersection, each CAV adjusts its acceleration to traverse efficiently, aiming to boost the traffic flow's average speed. 

\item ``Merge": A highway on-ramp merging scenario with vehicle flow at $2,300$ per hour, including a maximum of $2, 200$ vehicles on the main road and $100$ vehicles on the ramp. 
Within each epoch, $13$ vehicles are randomly chosen to instantiate the DRL-based controllers as they sequentially enter, aiming to manage collision avoidance and congestion at merge points. The simulation settings closely align with those of the first scenario. 
\end{itemize}
Both two scenarios are modified to assign the limited partial observation of the global environment as the state of each CAV, including the position and speed of its own, the vehicle ahead and behind. 
Only CAVs can execute the V2V end-to-end communication, and the connectivity of V2V links at time $t$ depends on the CAVs' position and communication range, which are both extracted using the TraCI simulator in our experiments. 
A communication range of $90$ m is adopted in interactions ``Figure 8" and $400$ m in highway ``Merge". 
This is according to the conclusion in \cite{yang2023dynamic} that communication range at intersections will be extremely reduced compared with the conventional scenarios \cite{thota2019v2v}, and relevant service requirements\cite{3GPPServicerequirements}. 
{Unless otherwise stated, our experiments are based on the common MARL V2V lossless ideal communication premise \cite{han2023multi, chen2024communication, shi2023deep, qu2024model}.}
Meanwhile, each CAV's action is a continuous variable representing speed acceleration or deceleration. It is sampled from the outputs of the policy network, which has three fully connected layers with $256$ hidden units each and ReLU activations. Actions are squashed using a tanh function to fall within $[-1,1]$ and then scaled by the acceleration bounds of the scenarios. 

\begin{table}[tbp]
	\centering
	\caption{Hyper-parameters.}
	\label{tab:parameters}
	\begin{tabular}{l|c|c}
		\toprule
		Hyper-parameters  & Symbol & Value\\ \midrule
		Replay buffer size & $\vert\mathcal{D}\vert$ & $10^5$ \\
		Batch size & $\Phi$ & $256$ \\
		Number of samples to evaluate $\mathbb{A}^{+}_{\pi}(\tilde{\pi})$ &$M$ & $50 $ \\
		Learning rate of actor network &$\eta_\pi$ & $4\times 10^{-5}$ \\
		Learning rate of critic network &$\eta_Q$ & $3\times 10^{-4}$ \\
		Learning rate of temperature parameter &$\eta_\alpha$ & $3\times 10^{-4}$ \\
		Discount factor &$\gamma$ & $0.99$ \\ 
		Target smoothing coefficient &$\varrho$ & $10^{-3}$ \\ 
		Delayed policy update intervals &$e$ & $10$ \\ 
		Communication intervals & $U$ & $8$ \\
		Segmentation granularity & $P$ & 4\\
		Predefined replicas  & $\kappa$ & 3\\
		\bottomrule
	\end{tabular}
\end{table}

\begin{figure}[tbp]
	\subfigcapskip = -1pt
	\begin{center}
		\subfigure[{Runtime on CPU}]
		{\includegraphics[scale =0.33]{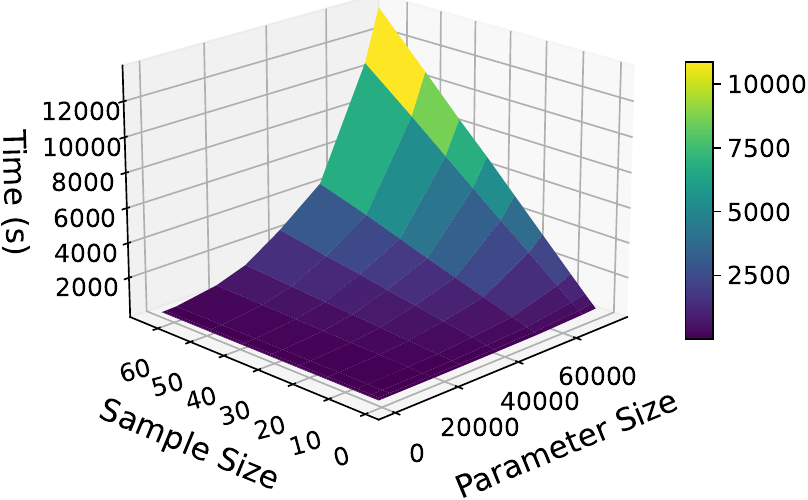}} 
		\subfigure[{Runtime on GPU}]
		{\includegraphics[scale =0.33]{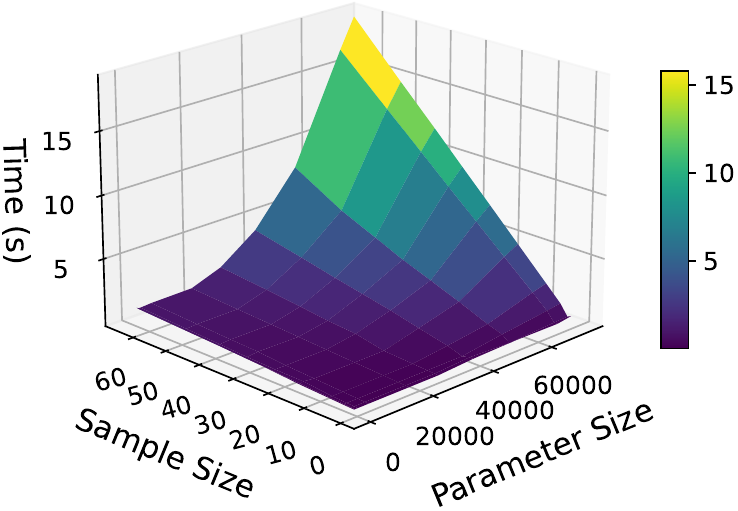}}
		\caption{{The runtime of estimating mixture metric under different sample and parameter sizes.}} 
        \label{fig:time}
	\end{center}
 \vspace{-1.5em}
\end{figure} 

\begin{figure*}[tbp]
	\subfigcapskip = -1pt
	\begin{center}
		\subfigure[{``Figure 8"}]
		{\includegraphics[scale =0.38]{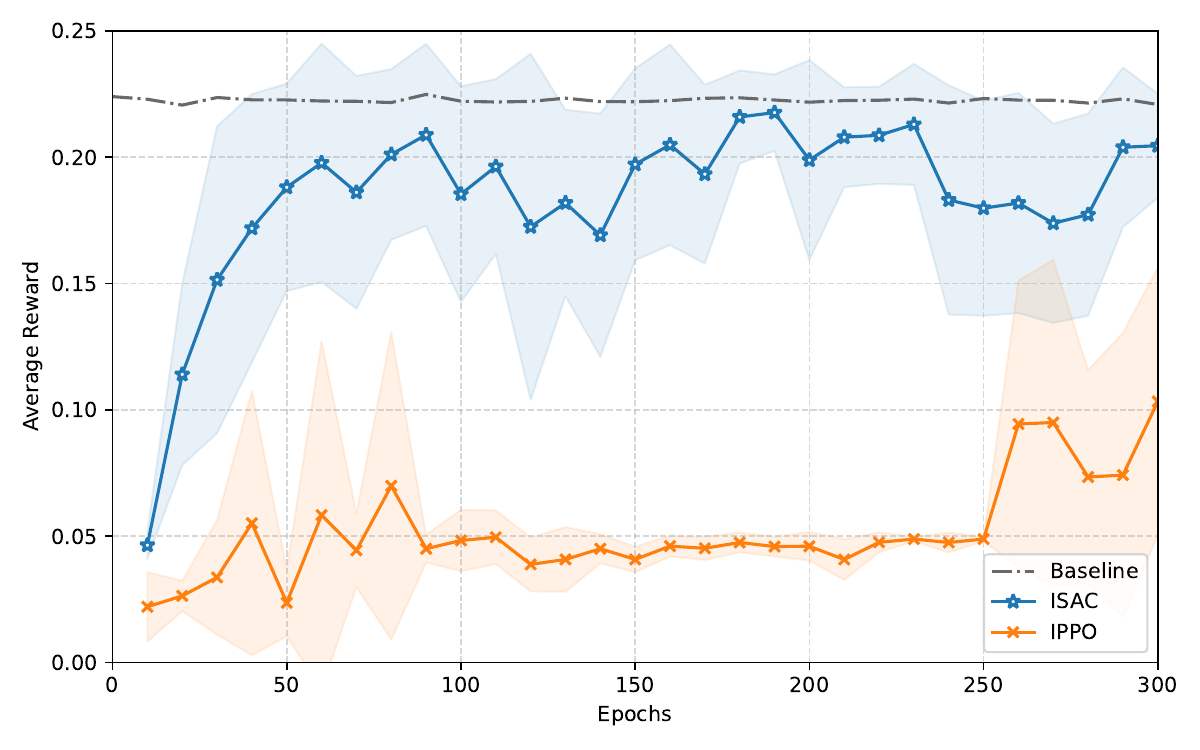}}
         \hspace{0.8cm}
		\subfigure[{``Merge"}]
		{\includegraphics[scale =0.379]{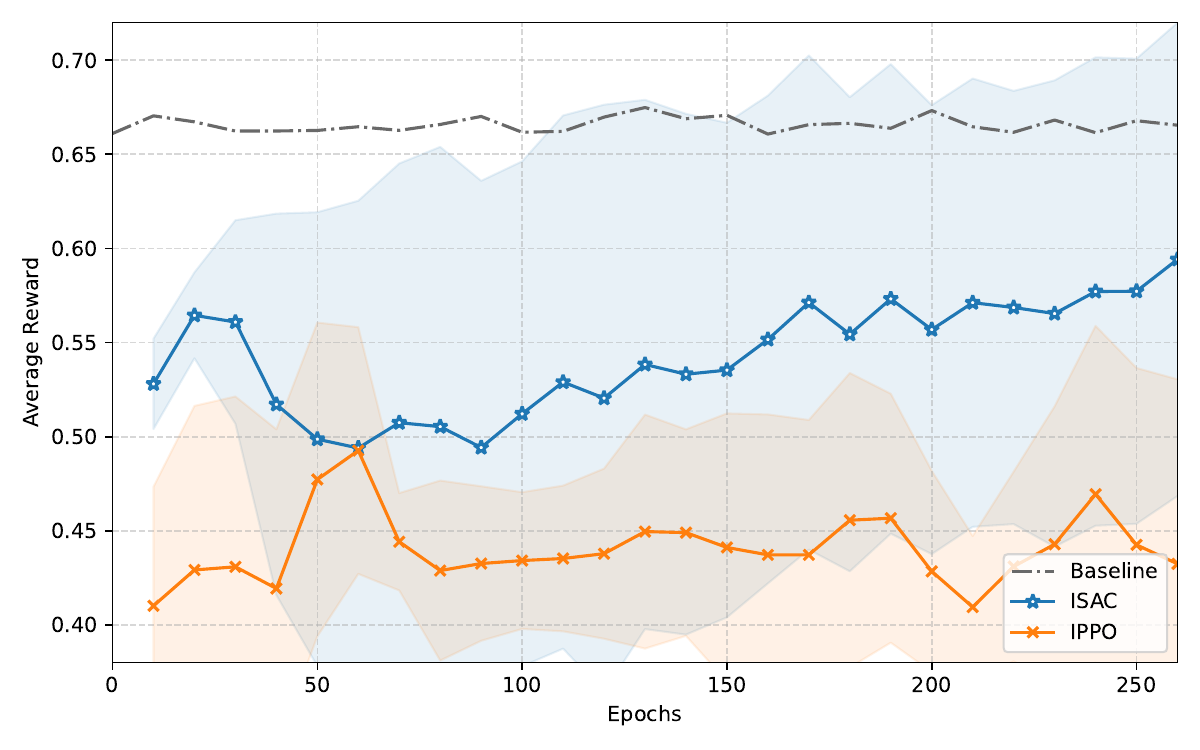}}
            \vspace{-0.5em}
		\caption{{Performance of IRL without communication.}} 
            \vspace{-0.6em}
		\label{fig:IRL}
	\end{center}
\end{figure*} 

\begin{figure*}[tbp]
	\subfigcapskip = -1pt
	\begin{center}
		\subfigure[{``Figure 8"}]
		{\includegraphics[scale =0.38]{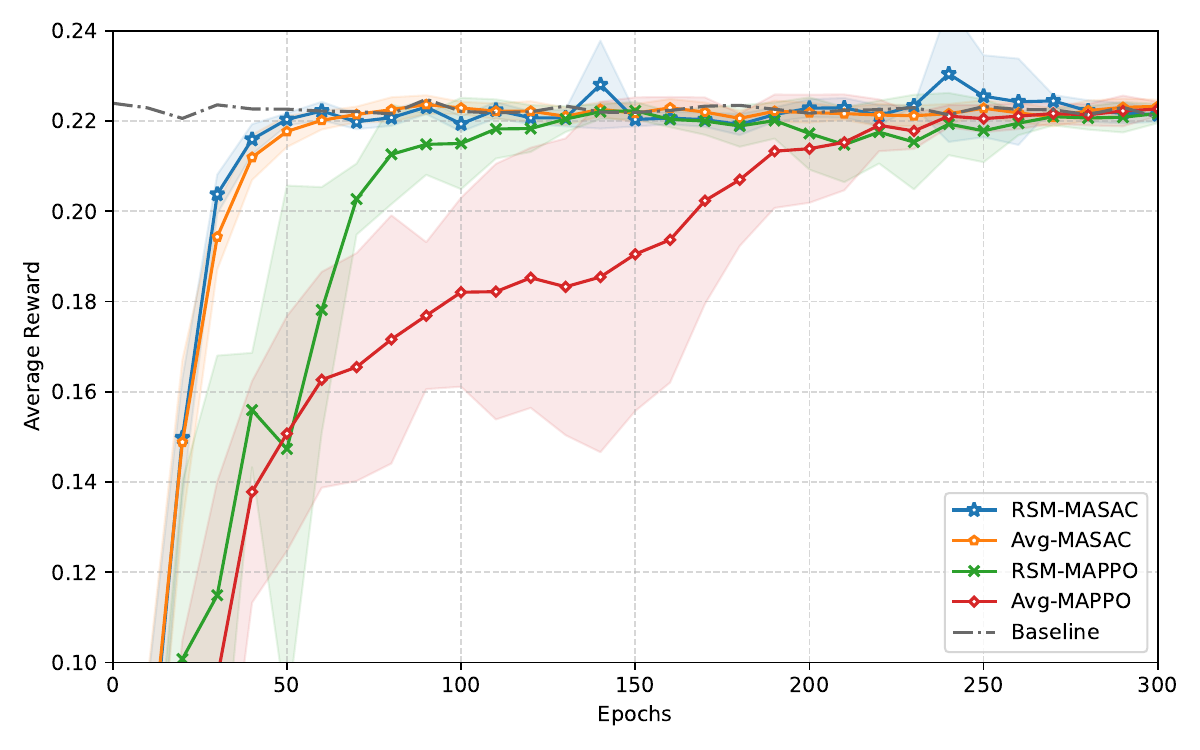}}
  \hspace{0.8cm}
		\subfigure[{``Merge"}]
		{\includegraphics[scale =0.379]{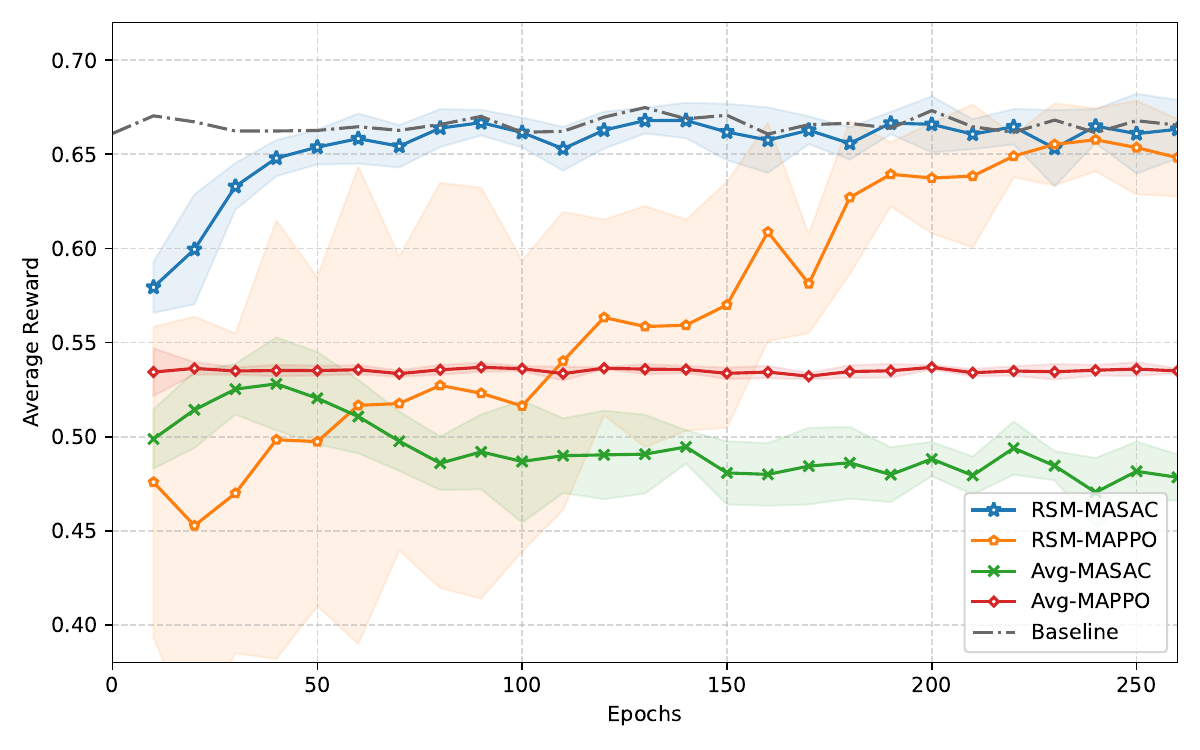}}
            \vspace{-0.5em}
		\caption{{Performance comparison of different methods.} }
  \label{fig:different method}
	\end{center}
 \vspace{-1.5em}
\end{figure*} 

For each referential policy, parameter mixing occurs only when $\mathbb{A}^{+}_{\pi}(\tilde{\pi})>0$. 
In this case, the computational complexity is predominantly governed by the calculation of $F(\theta)$ due to the outer product of policy gradients. 
Using an AMD EPYC 7F32 $@3.70$ GHz, $8$-core CPU and an NVIDIA RTX3090 GPU\footnote{We note that there exist available automotive chips like the NVIDIA DRIVE Orin or Thor featuring CUDA Tensor Core GPU.}, as shown in Fig. \ref{fig:time}, the computational time escalates with increasing sample size and parameter size, consistent with the discussions in Section \ref{sec:complexity}. 
However, with the aid of GPU hardware acceleration, such as superior parallel processing capabilities, higher memory bandwidth, and optimized deep learning libraries, the runtime experiences a significant reduction.

In order to reduce the occurrence of collisions and promote the traffic flow to the maximum desired speed, we take the normalized average speed of all vehicles at each timestep as the individual reward in each scenario\footnote{Notably, we assume complete knowledge of individual vehicle speeds at each vehicle here. Beyond the scope of this paper, some value-decomposition methods like \cite{xiao_stochastic_2023} can be further leveraged to derive a decomposed reward, so as to loosen such a strict requirement.}, which is assigned to each training agent after its action is performed. 
In addition, the current epoch will be terminated once a collision occurs or the max length of step $T$ in an epoch is reached. 

The results with variance are averaged over $5$ independent simulations, with tests conducted every $10$ epochs during training. 
Moreover, as for the baseline, we take the vehicles controlled by the Flow IDM \cite{treiber2000congested}, which belongs to a typical car-following model incorporating extensive prior knowledge and indicates the pinnacle of performance achievable by the best centralized federated MARL algorithms \cite{xu2021gradient}. 
Furthermore, the principal hyper-parameters used in simulations are listed in Table \ref{tab:parameters}.

\vspace{-0.1cm}
\subsection{Evaluation Metrics}\label{5Eval}
Apart from the average reward, we adopt some additional metrics to extensively evaluate the communication efficiency of RSM-MASAC. 
\begin{itemize}
    \item We denote the total count of reconstructed referential policies (i.e., all model replicas) as $\rho_\text{total}$. The number of effectively reconstructed referential policies, those contributing to the mixing process, is represented by $\rho_\text{ef}$. The mixing rate $\rho_r = \rho_\text{ef}/\rho_\text{total}$ thus reflects the usage rate of reconstructed policies. 
    \item We use $\psi$ to indicate the overall communication overhead (in terms of $\upsilon$) in an epoch, and $C_0$ denotes the number of communication rounds. 
    Therefore, the communication overheads equal $\psi = C_0\times c(\upsilon,f) = \frac{\rho_\text{total} \times \upsilon}{8\times 1024^3}$ (GB). 
\end{itemize}

\begin{figure*}[tbp]
    \centering
        \begin{minipage}[b]{0.39\textwidth}
        \centering
        \includegraphics[width=\textwidth]{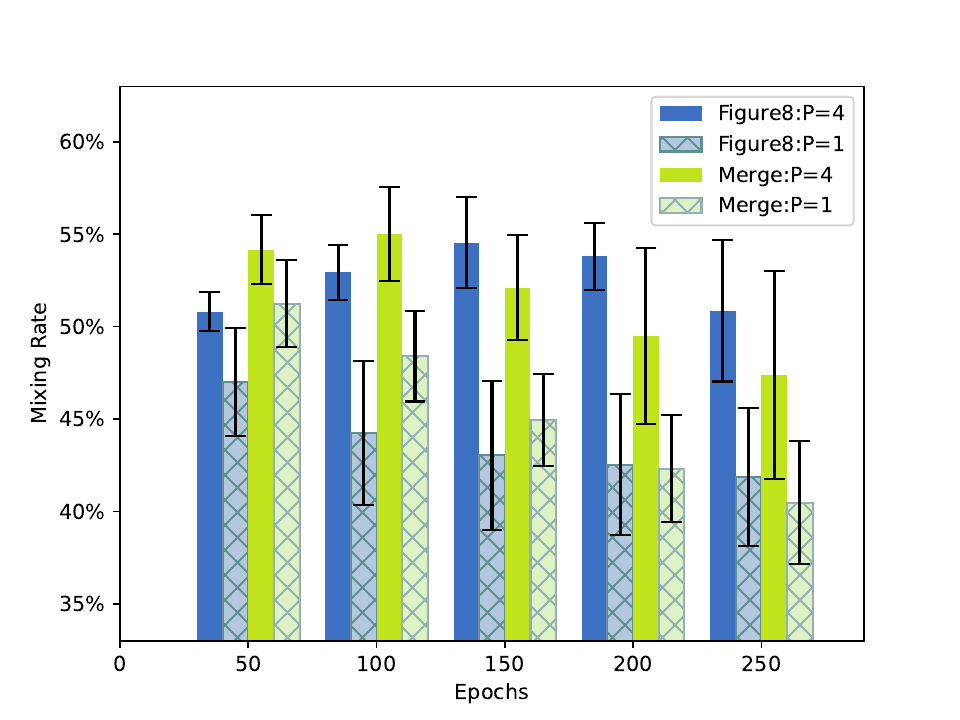}
        \caption{The effect of segmentation on mixing rate.}
        \label{fig:mixrate_column}
    \end{minipage}
    \begin{minipage}[b]{0.59\textwidth}
        \centering
        \begin{center}
		\subfigure[Communication times]
		{\includegraphics[scale =0.29]{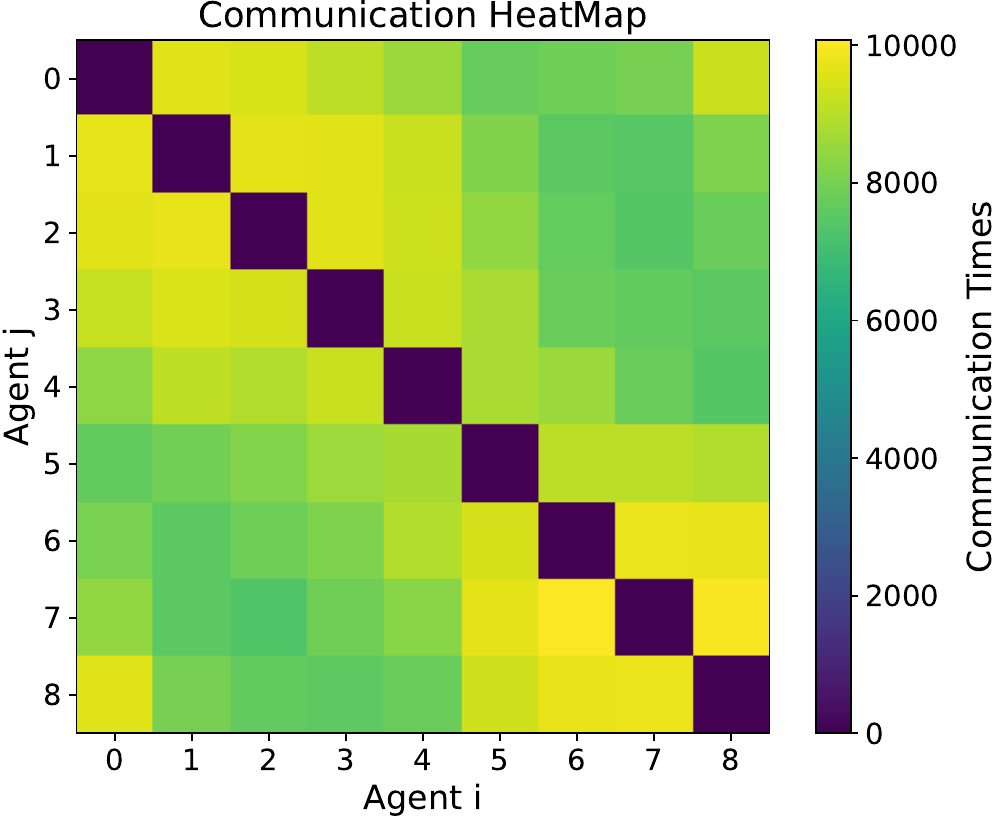}} 
        \hspace{0.02\textwidth}
		\subfigure[Effective mixture times]
		{\includegraphics[scale =0.29]{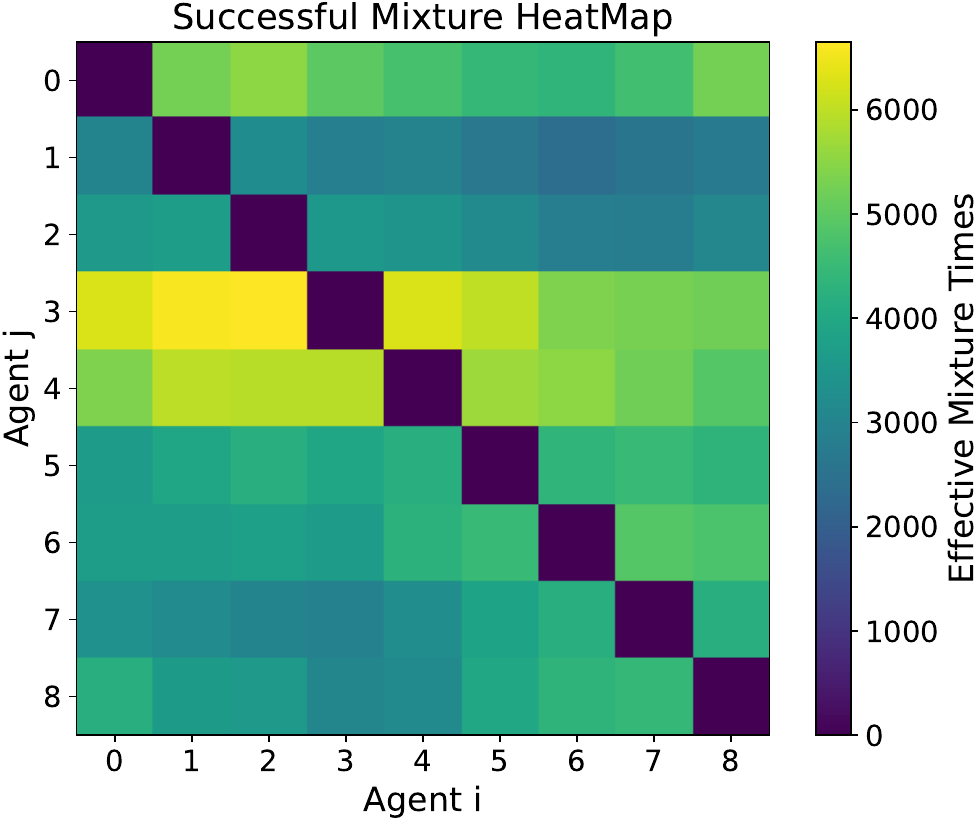}}
        \hspace{0.1cm}
		\caption{Heatmap comparison of communication times and effective mixture times under ``Figure 8''.}
        \label{fig:Heatmap}
	\end{center}
    \end{minipage}
\end{figure*}

\vspace{-0.1cm}
\subsection{Simulation Results}\label{5C}

\begin{table*}[htbp]
    \centering
    \caption{{Results of Average Reward \& Communication Efficiency under ``Merge".}}
    \label{tab:results}
    
    \begin{tabular}{l|Sc|Sc|Sc|Sc|Sc|Sc|Sc|Sc|Sc}
    \toprule
        \multicolumn{2}{c|}{Method} & $U$ & $P$ & $\kappa$ & Average Reward & $\rho_\text{total}$ & $\psi$ (GB) & $\rho_\text{ef}$ & $\rho_r$ \\ 
        \midrule
        \multirow{2}{*}{Avg-} & MAPPO & \multirow{10}{*}{$8$}  & \multirow{7}{*}{$4$} &  \multirow{3}{*}{$3$} & {$0.5351\pm0.0031$} & $1,626$ & $0.412$ & $1,626$ & $100\%$ \\ 
        \cline{2-2} \cline{6-10} 
        ~ & MASAC &  & ~ & ~ & {$0.4818\pm0.0161$} & $29,270$ & $7.425$ & $29,270$ & $100\%$ \\ 
        \cline{1-1} \cline{2-2}  \cline{6-10} 
        \multirow{10}{*}{RSM-} & MAPPO &  & ~ & ~ & $0.6528\pm0.0199$ & $1,546$ & $0.392$ & $711$ & $45.99\%$ \\ 
        \cline{2-2}  \cline{5-10} 
        ~ & \multirow{9}{*}{MASAC} & & ~ & $1$ & $0.6538\pm0.0250$ & $10,223$ & $2.593$ & $5,213$ & $50.99\%$ \\ 
         \cline{5-10} 
        ~ & ~ &  & ~ & $5$ & $0.6636\pm0.0115$ & $41,333$ & $10.486$ & $20,410$ & $49.38\%$ \\ 
         \cline{5-10} 
        ~ & ~ &  & ~ & $7$ & $0.6574\pm0.0156$ & $38,120$ & $9.670$ & $19,722$ & $51.74\%$ \\ 
         \cline{5-10} 
        ~ & ~ &  & ~ &  \multirow{6}{*}{$3$} & $0.6614\pm0.0165$ & $30,486$ & $7.734$ & $13,827$ & $45.36\%$ \\ 
        \cline{4-4}  \cline{6-10} 
        ~ & ~ &  & $1$ & ~ & $0.6535\pm0.0253$ & $30,066$ & $7.627$ & $12,117$ & $40.30\%$ \\ 
        \cline{4-4} \cline{6-10} 
        ~ & ~ &  & $2$ & ~ & $0.6206\pm0.0793$ & $29,634$ & $7.518$ & $14,549$ & $49.10\%$ \\ 
        \cline{4-4} \cline{6-10} 
        ~ & ~ &  & $6$ & ~ & $0.6505\pm0.0282$ & $30,126$ & $7.643$ & $13,902$ & $46.15\%$ \\ 
        \cline{3-3}\cline{4-4} \cline{6-10} 
        ~ & ~ & $72$ &  \multirow{2}{*}{$4$} & ~ & $0.6712\pm0.0282$ & $2,351$ & $0.596$ & $1,357$ & $57.72\%$ \\ 
        \cline{3-3} \cline{6-10} 
        ~ & ~ & $144$ & ~ & ~ & $0.6576\pm0.0269$ & $1,227$ & $ 0.311$ & $708$ & $57.70\%$ \\ 
        \bottomrule
    \end{tabular}
\end{table*}

Beforehand, we present the performance of Independent SAC (ISAC) and Independent PPO (IPPO) for MARL without any information sharing in Fig. \ref{fig:IRL}, so as to highlight the critical role of inter-agent communication in decentralized cooperative MARL and facilitate subsequent discussions. Fig. \ref{fig:IRL} reveals that the learning process of these non-cooperative algorithms achieves unstable average reward with greater variance, and suffers from convergence issues. 
Whereas, when a communication-assisted mixing phase involving the exchange of policy parameters is incorporated, as shown in Fig. \ref{fig:different method}, the learning process within the multi-agent environment is remarkably enhanced in terms of stability and efficiency. 
In other words, consistent with our previous argument, integrating DFL into IRL significantly improves training efficiency and ensures learning stability. 

In Fig. \ref{fig:different method}, we also compare the performance of Regulated Segment Mixture (RSM) and that of the direct, unselective Averaging (Avg) method, as mentioned in Section \ref{sec:system model and problem formulation}. 
Note that to speed up the calculation, we use a batch average gradient to approximate the calculation of $F(\theta)$ here. 
In Fig. \ref{fig:different method}, it can be observed that in terms of convergence speed and stability, the simple average mixture method is somewhat inferior. 
Particularly in more complex scenarios, as in ``Merge", which involves a greater number of RL agents and denser traffic flow compared to the ``Figure 8", the disparity between these two methods becomes more pronounced. The larger variance in the average reward under the average mixture method suggests a more unstable parameter mixing process, while the improvement in reward value implies the efficiency of communication information sharing. 
Therefore, it validates the effectiveness of RSM and supports the derived theoretical results for selecting useful reference policies and assigning appropriate mixing weights. 

\begin{figure*}[htbp]
	\subfigcapskip = -1pt
	\begin{center}
		\subfigure[predefined number of segments $P$]
		{\includegraphics[scale =0.45]{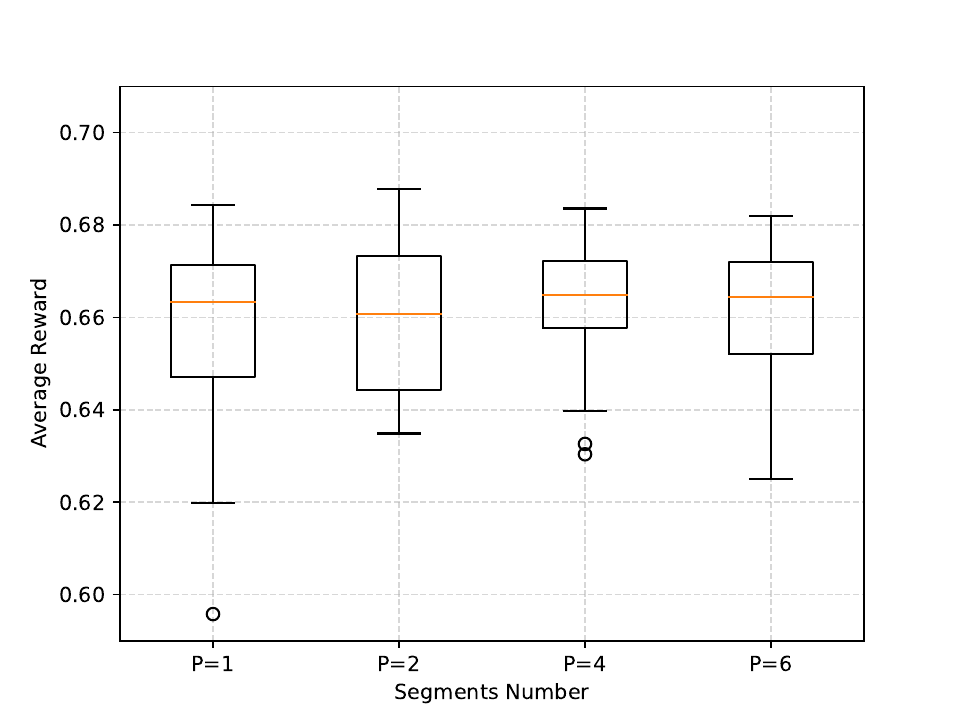}} 
		\subfigure[predefined number of replicas $\kappa$]
		{\includegraphics[scale =0.45]{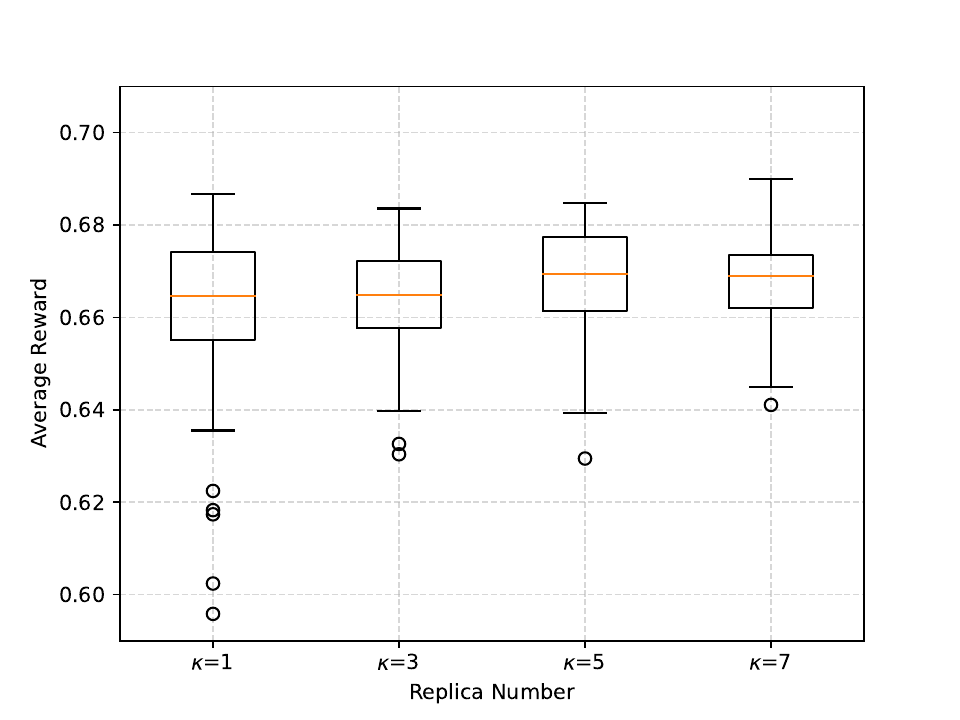}}
		\caption{The performance of RSM-MASAC with respect to different hyperparameters ((a) predefined number of segments $P$ and (b) predefined number of replicas $\kappa$), based on the testing episodes after $130$, $140$, $\cdots$, $200$ training epochs under ``Merge".} 
        \label{fig:Boxplot}
	\end{center}
\end{figure*} 

\begin{figure*}[tbp]
    \centering
    \begin{minipage}[b]{0.425\textwidth}
        \centering
        \includegraphics[width=\textwidth]{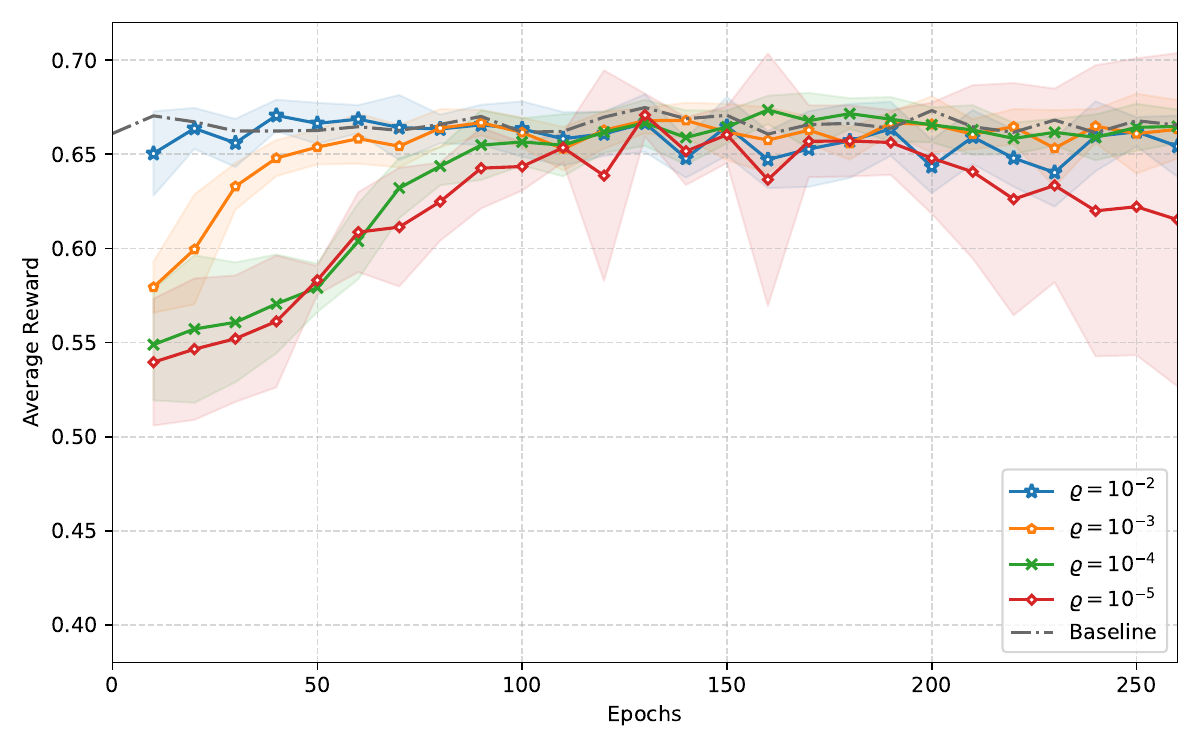}
        \caption{{Impact of different target smoothing coefficient $\varrho$ under ``Merge".}}
        \label{fig:target smoothing coefficient}
    \end{minipage}
    \hspace{0.8in}
     \begin{minipage}[b]{0.42\textwidth}
        \centering
        \includegraphics[width=\textwidth]{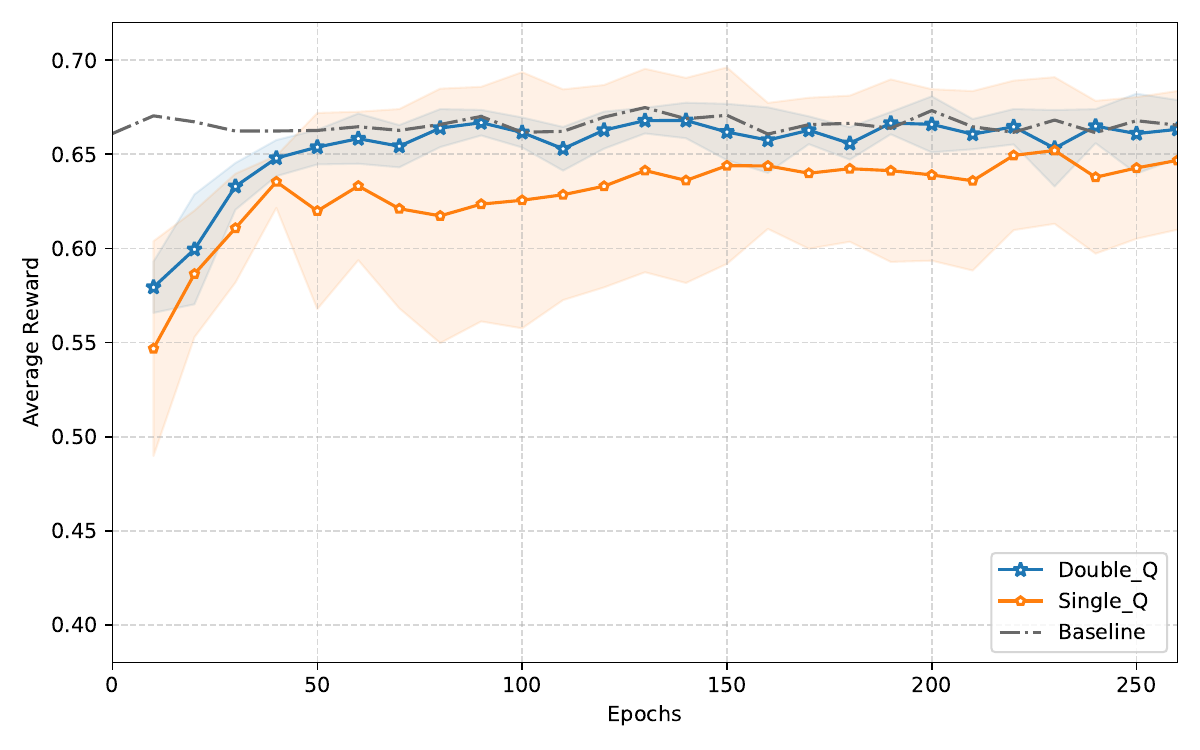}
		\caption{{Double vs single $Q$ network under ``Merge".} }
		\label{fig:whether double Q}
    \end{minipage}
\end{figure*}
\begin{figure}[tbp]
	\subfigcapskip = -1pt
	\begin{center}
  \hspace{-0.8cm} 
		{\includegraphics[scale =0.378]{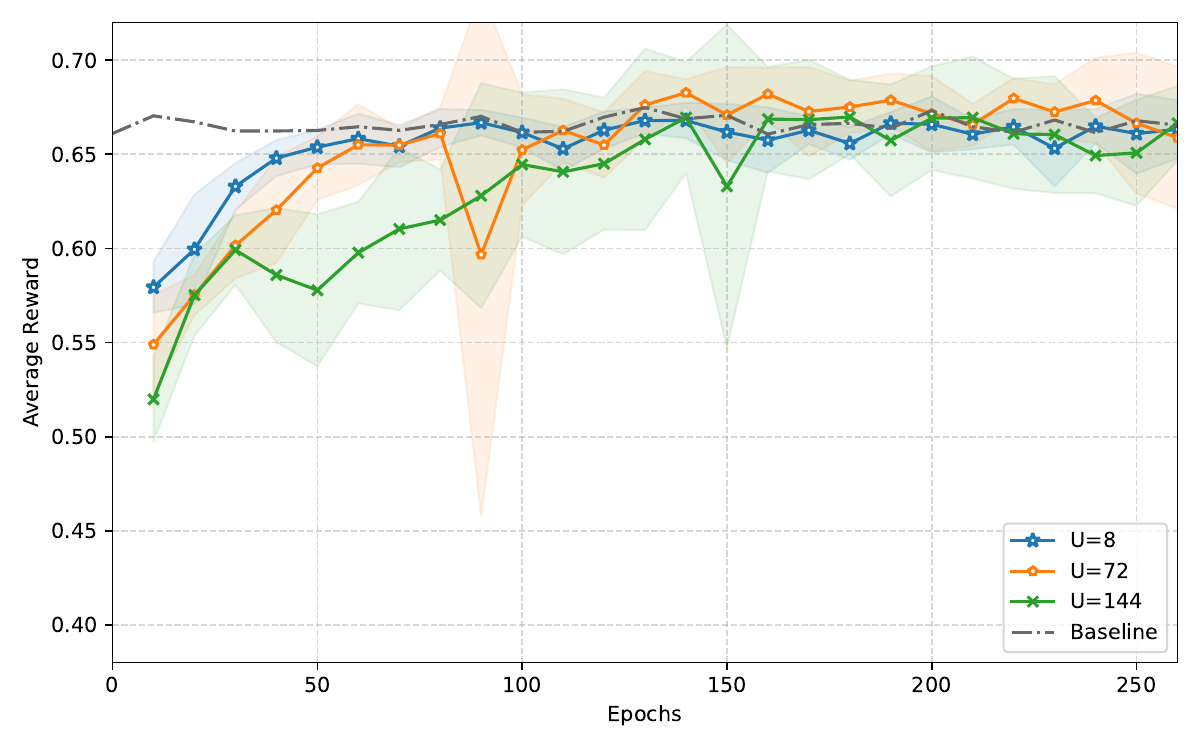}}
		\caption{{Impact of different communication intervals $U$ under ``Merge".}}
        \label{fig:different communication intervals}
	\end{center}
  \vspace{-1em}
\end{figure}

On the other hand, Fig. \ref{fig:different method} offers a further evaluation of the RSM's performance on both MASAC and MAPPO under the same policy DNN structure, learning rate and gradient clipping condition. 
In particular, as proposed in our previous work \cite{yu_communicationefficient_2023}, RSM-MAPPO utilizes PPO, which features lower sampling efficiency and policy update frequency, as the independent local learning algorithm under the traditional MARL framework, making it a special case of RSM-MASAC with $\alpha=0$ and a different policy advantage estimation under our more general reanalyses. 
Evidently, in both scenarios, RSM-MASAC demonstrates faster convergence and overall superior performance compared to RSM-MAPPO. 
We believe this is primarily attributed to the differences in exploration mechanism and sample efficiency. 
Specifically, benefiting from the introduced entropy item under MERL framework, SAC encourages policies to better explore environments, thus capably avoiding local optima and potentially learning faster. 

Next, we evaluate the impact of regulated segmentation. 
Fig. \ref{fig:mixrate_column} shows that the incorporation of segmentation (i.e., $P=4$) also leads to an improvement in the successful mixing rate $\rho_\text{r}$ of the policy, apart from better utilizing the available bandwidth as discussed in Section \ref{4B2}. 
The improvement lies in the introduction of a certain level of randomness, consistent with the idea to encourage exploration by adding an entropy term in MERL. But this randomness is also regulated by the policy parameter mixing theorem in Theorem \ref{theorem: parameter mix} without compromising performance. 
To embody the selection process of reconstructed referential policies, we further present the heatmap comparison between communication times and successful mixture times in Fig. \ref{fig:Heatmap}. 
It can be observed from Fig. \ref{fig:Heatmap}(a) that agents communicate frequently with their neighboring agents, but 
since not all communication packets (i.e., model segments) are utilized for improving local policy performance, it leads to the uneven and asymmetrical heatmap for the successful mixture of segments in Fig. \ref{fig:Heatmap}(b).
Our derived performance improvement bound and practical parameter mixture metric regulate these segments, demonstrating that the data from different agents contribute variably to the mixture.

In Fig. \ref{fig:Boxplot}, we further examine the impact of varying the predefined segmentation granularity $P$ and replicas $\kappa$ under RSM-MASAC. Notably, for each agent, the actual segment number and replicas at each time are also affected by the number of neighbors in the communication range, as addressed in Section \ref{4B1}.
The boxplots display the distribution of average reward collected during testing epochs ranging from $130$ to $200$, a period marked by increased fluctuations and slower convergence. 
These outcomes are inherently subject to the stochastic nature of the environment interacting with the RL agents and here we only indicate a rough trend that the median average reward shows a modest increase with a greater number of segments in Fig. \ref{fig:Boxplot}(a), and a slight upward trend with an increase replicas in Fig. \ref{fig:Boxplot}(b). 
However, these variations are minimal and contingent upon specific environmental conditions that are not consistent, thus warrant further discussions to balance communication bandwidth and computational overhead in practical applications. 
\begin{figure*}[tbp]
    \centering
    \begin{minipage}[b]{0.455\textwidth}
        \centering
        \includegraphics[width=\textwidth]{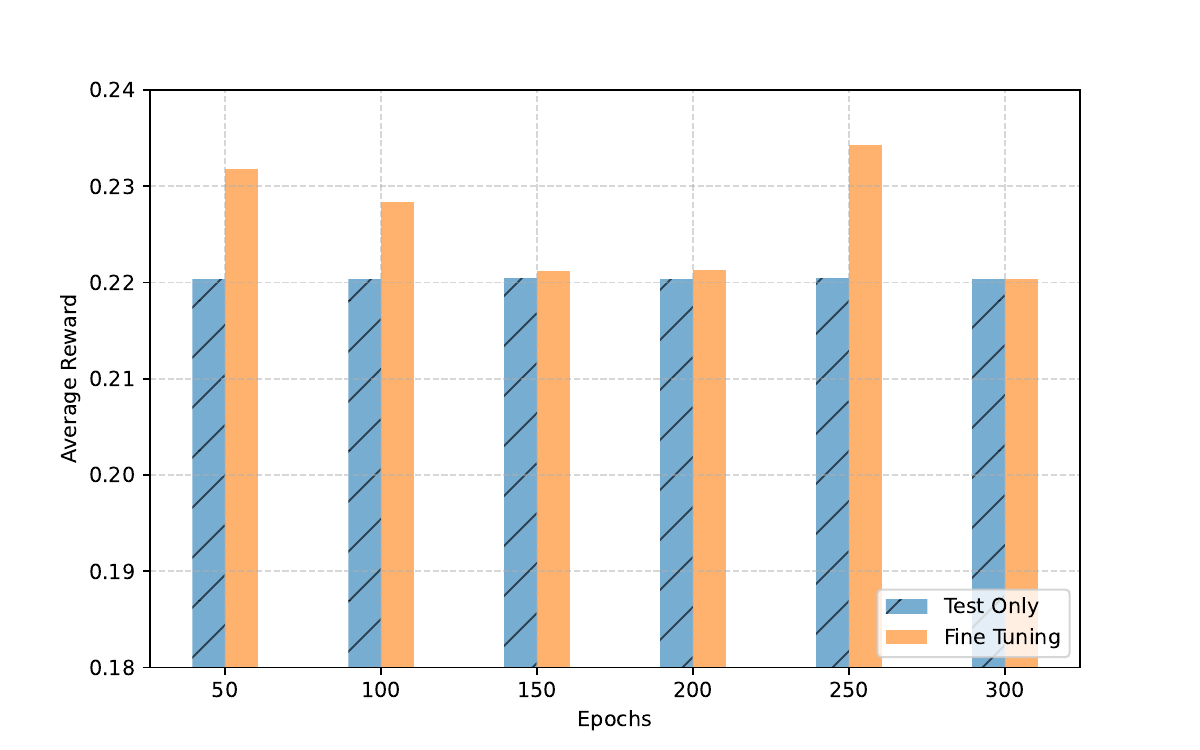}
		\caption{{Performance improvement of continually fine-tuning under ``Figure 8".}}  
		\label{fig:fine_tuning}
    \end{minipage}
    \hspace{0.4in}
     \begin{minipage}[b]{0.415\textwidth}
        \centering
        \includegraphics[width=\textwidth]{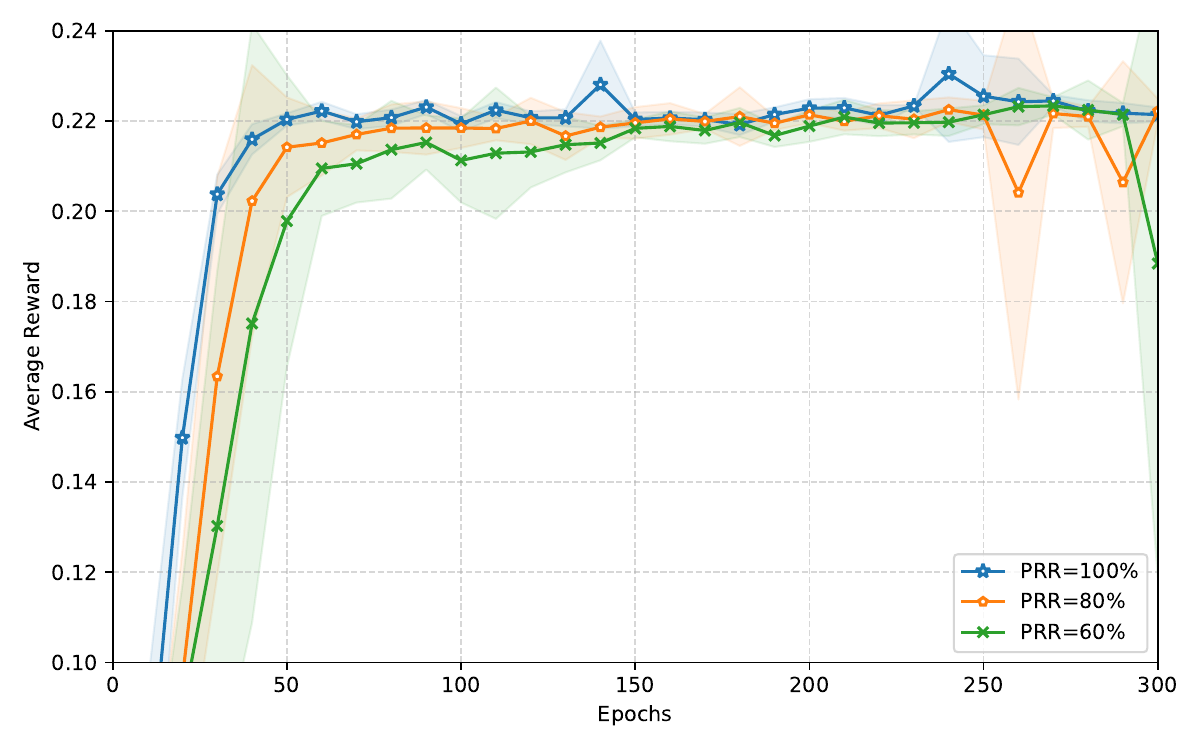}
		\caption{{Performance comparison of different PRR in communication under ``Figure 8".}}  
		\label{fig:PacketReceptionRate}
    \end{minipage}
\end{figure*}
In addition, we take the average reward within the last five testing epochs as the final converged performance, and summarize corresponding results and details about communication overhead in Table \ref{tab:results}, which takes the V2V communication examples' settings in \ref{4B2}. 

In addition, we conduct more ablation studies to evaluate the design of $Q$ network on the learning process. 
In Fig. \ref{fig:target smoothing coefficient}, we analyze the impact of different target smoothing coefficients $\varrho$ on the double $Q$-learning process. It can be observed that a larger value of the target smoothing coefficient $\varrho$, such as $10^{-2}$, can significantly accelerate the learning, but it also introduces disturbances, leading to non-steady learning. On the other hand, a smaller value like $10^{-4}$ or $10^{-5}$ noticeably slows down the learning speed and also introduces instability, affecting the performance of the algorithm. 
Moreover, in Fig. \ref{fig:whether double Q}, we also evaluate the adoption of dual $Q$ networks. It can be observed that employing dual $Q$ networks indeed accelerates the learning process. 
Furthermore, we also discuss the impact of communication intervals on the learning process, as shown in Fig. \ref{fig:different communication intervals}. It is evident that larger communication intervals result in reduced learning speeds and significant instability in the learning process, the same as classical FL processes. 

Meanwhile, we also investigate the effect of model fine-tuning in decentralized IoV speed control settings. 
Specifically, in a new environment full of $14$ DRL-controlled CAVs, we compare the performance between a continually fine-tuning policy and a well-trained, convergent policy (i.e., RSM-MASAC in Fig. \ref{fig:different method}) to execute inference only (i.e., test only in DRL description). 
The results in Fig. \ref{fig:fine_tuning} show that the fine-tuned policy in new conditions can reduce the occurrence of collisions at the beginning and continuously achieve better performance than just testing. Furthermore, to investigate the impact of possible packet loss when communicating on D2D or V2V links, we provide the result for different Packet Reception Rate (PRR) in Fig. \ref{fig:PacketReceptionRate}. 
It can be seen that a packet loss of less than $40\%$ primarily affects the convergence speed at the beginning, but produces trivial differences on the converged performance, demonstrating that DNN parameter transmission can yield good robustness in most real communication environments.

\section{Conclusions \& Future Work}\label{sec:conclusion}

In this paper, we have proposed a communication-efficient algorithm RSM-MASAC as a promising solution to enhance communication efficiency and policy collaboration in distributed MARL, particularly in the context of highly dynamic environments. 
By delving into the policy parameter mixture function, RSM-MASAC has provided a novel means to leverage and boost the effectiveness of distributed multi-agent collaboration. In particular, RSM-MASAC has successfully transformed the classical means of complete parameter exchange into segment-based request and response, which significantly facilitates the construction of multiple referential policies and simultaneously captures enhanced learning diversity. Moreover, in order to avoid performance-harmful parameter mixture, RSM-MASAC has leveraged a theory-established regulated mixture metric, and selects the contributive referential policies with positive relative policy advantage only. Finally, extensive simulations in the mixed-autonomy traffic control scenarios have demonstrated the effectiveness of the proposed approach. Notably, we use an approximate calculation here to reduce the computation complexity and speed up the calculation of $F(\theta)$, thus supporting the validation of the mixed performance improvement bound theorem. Despite its efficiency, its impact is still under investigation, and other methods to simplify FIM computations can be explored and substituted in the future.

\section*{Appendix: Proofs}
\subsection{Proof of Soft Policy Improvement}
\label{app:proof of soft policy improvement}
\begin{lemma}({Soft policy improvement})\label{lemma:appendix Soft policy improvement}
Let $\pi_\text{old}$ and $\pi_\text{new}$ be the optimizer of the minimization problem defined in \eqref{eq:SAC policy KL}. Then $Q^{\pi_\text{new}}(s,a)\geq Q^{\pi_\text{old}}(s,a), \forall s,a $ with $\vert \mathcal{A}\vert <\infty$. 
\end{lemma}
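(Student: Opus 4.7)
The plan is to exploit the variational characterization of $\pi_\text{new}$ as the optimizer in \eqref{eq:SAC policy KL}. Rewriting that KL minimization in its equivalent form (the second line of \eqref{eq:SAC policy KL}) reveals that $\pi_\text{new}$ maximizes $\mathbb{E}_{a_t\sim \pi}[Q^{\pi_\text{old}}(s_t,a_t)-\alpha\log\pi(a_t|s_t)]$ over $\Pi$. Since $\pi_\text{old}$ is itself a feasible point in this optimization, the optimality of $\pi_\text{new}$ yields the one-step inequality
\[
\mathbb{E}_{a_t\sim\pi_\text{new}}\!\big[Q^{\pi_\text{old}}(s_t,a_t)-\alpha\log\pi_\text{new}(a_t|s_t)\big] \;\geq\; \mathbb{E}_{a_t\sim\pi_\text{old}}\!\big[Q^{\pi_\text{old}}(s_t,a_t)-\alpha\log\pi_\text{old}(a_t|s_t)\big] \;=\; V^{\pi_\text{old}}(s_t),
\]
where the final equality follows directly from the soft value definition \eqref{eq:V-function bellman equation}.

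Next I would plug this bound into the soft Bellman equation \eqref{eq:Q equation} evaluated at $\pi_\text{old}$. Replacing $V^{\pi_\text{old}}(s_{t+1})$ by the smaller expectation under $\pi_\text{new}$ gives
\[
Q^{\pi_\text{old}}(s_t,a_t) \;\leq\; r_t + \gamma\,\mathbb{E}_{s_{t+1}}\!\big[\mathbb{E}_{a_{t+1}\sim\pi_\text{new}}[Q^{\pi_\text{old}}(s_{t+1},a_{t+1})-\alpha\log\pi_\text{new}(a_{t+1}|s_{t+1})]\big].
\]
I would then recursively apply the same substitution to the $Q^{\pi_\text{old}}$ term that remains on the right-hand side, at each iteration peeling off one more entropy-augmented reward discounted by an additional factor of $\gamma$, while preserving the chain of inequalities.

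Finally, taking the number of unrolling steps to infinity collapses the telescoping recursion into precisely the definition of $Q^{\pi_\text{new}}(s_t,a_t)$, yielding $Q^{\pi_\text{old}}(s_t,a_t)\leq Q^{\pi_\text{new}}(s_t,a_t)$ for all $(s_t,a_t)$. The main obstacle, and the only point requiring care, is the infinite-horizon limit: one must verify that the residual tail $\gamma^{k}\,\mathbb{E}[Q^{\pi_\text{old}}(s_{t+k},a_{t+k})-\alpha\log\pi_\text{new}(a_{t+k}|s_{t+k})]$ vanishes and that the limit may be exchanged with the expectations. This is secured by $\gamma\in(0,1)$, the bounded reward $r\in[0,R]$, and the finite action space $|\mathcal{A}|<\infty$, which together imply uniform boundedness of both $Q^{\pi_\text{old}}$ and the entropy term and hence permit a standard dominated convergence argument.
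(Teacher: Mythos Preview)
Your proposal is correct and follows essentially the same route as the paper's proof: both derive the one-step inequality $\mathbb{E}_{a\sim\pi_\text{new}}[Q^{\pi_\text{old}}(s,a)]+\alpha H(\pi_\text{new}(\cdot\vert s))\geq V^{\pi_\text{old}}(s)$ from the optimality of $\pi_\text{new}$ (feasibility of $\pi_\text{old}$), then recursively unroll the soft Bellman equation to obtain $Q^{\pi_\text{old}}\leq Q^{\pi_\text{new}}$. Your treatment is in fact slightly more careful, since you explicitly justify the vanishing of the tail term via boundedness and $\vert\mathcal{A}\vert<\infty$, whereas the paper simply writes ``$\cdots$'' at that step; note only a minor wording slip---the expectation under $\pi_\text{new}$ that replaces $V^{\pi_\text{old}}(s_{t+1})$ is the \emph{larger} quantity, not the smaller one, though your displayed inequality is written with the correct direction.
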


\begin{proof}
This proof is a direct application of soft policy improvement \cite{sac1, sac2}. We leave the proof here for completeness. 

Let $\pi_\text{old}\in \Pi$ and $Q^\pi,V^\pi$ is the corresponding soft state-action value and soft state value, respectively. And $\pi_\text{new}$ is defined as:
\begin{align*}
\pi_{\mathrm{new}}&=\arg\min_{\pi\in\prod} \mathrm{D}_{\mathrm{KL}}\left( \pi(\cdot\vert s)\Vert \frac{ \exp\left( \frac{1}{\alpha} Q^{\pi_\text{old}}({s},\cdot)\right)} {Z^{\pi_\text{old}}(s)}\right)\nonumber\\
&=\arg\min_{\pi\in\prod}J_{\pi_\text{old}}(\pi(\cdot\vert s))  {.}
\end{align*}
Since we can always choose $\pi_\text{new}=\pi_\text{old}\in\Pi$, there must be $J_{\pi_\text{old}}(\pi_\text{new}(\cdot\vert s))\leq J_{\pi_\text{old}}(\pi_\text{old}(\cdot\vert s))$. 
Hence
\begin{align*}
 &\mathbb{E}_{{a}\sim{\pi_\mathrm{new}}}\left[\frac{1}{\alpha}Q^{\pi_\text{old}}({s},{a}) 
 -
 \log \pi_\text{new}(a\vert s)
 -
 \log Z^{\pi_\text{old}}(s)\right]\\
 &\geq 
 \mathbb{E}_{{a}\sim\pi_\text{old}}\left[\frac{1}{\alpha}Q^{\pi_\text{old}}({s},{a})
 -
 \log \pi_\text{old}(a\vert s)
 -
 \log Z^{\pi_\text{old}}(s)\right]  {.}
\end{align*}
As partition function $Z$ depends only on the state, the inequality reduces to a form of the sum of entropy and value with one-step look-ahead
\begin{align*}
\mathbb{E}_{{a}\sim{\pi_\mathrm{new}}}&\left[Q^{\pi_\text{old}}({s},{a})\right]
    +
    \alpha H({\pi_\mathrm{new}}(\cdot\vert {s}))
     \\
    &\geq
    \mathbb{E}_{{a}\sim\pi_\text{old}}\left[Q^{\pi_\text{old}}({s},{a})\right] 
    +
    \alpha H(\pi_\text{old}(\cdot\vert {s}))
    = V_{\pi_\text{old}}(s)  {.}
\end{align*}
And according to the definition of the soft $Q$-value in section \ref{sec:Preliminary}, we can get that
\begin{align}
&Q^{\pi_\text{old}}(s,a) \nonumber\\
=& \mathop{\mathbb{E}}_{\substack{s_{1}}}\left[ r_0 
\!+\!
\gamma\!\left(\!\alpha H(\pi_\text{old}(\cdot\vert s_{1}))
\!+\!\!
\mathop{\mathbb{E}}_{\substack{a_{1}\sim \pi_\text{old}}} [Q^{\pi_\text{old}}(s_{1},a_{1})]\!\right)\!\right] \label{eq:A.1}\\
\leq & 
\mathop{\mathbb{E}}_{\substack{s_{1}}}\left[ r_0 
\!+\!
\gamma\!\left(\! \alpha H(
{\pi_\text{new}}(\cdot\vert s_{1}))
\!+\!
\mathop{\mathbb{E}}_{\substack{a_{1}\sim {\pi_\text{new}}}} [Q^{\pi_\text{old}}(s_{1},a_{1})]\!\right)\!\right] \nonumber\\
=&
\mathop{\mathbb{E}}_{\substack{s_{1}\\a_1\sim{\pi_\text{new}}}}\Big[ r_0 
+
\gamma\left(\alpha H({\pi_\text{new}}(\cdot\vert s_{1}))+r_{1}\right)\Big]
 \nonumber\\
 &\quad+ \gamma^2 \mathop{\mathbb{E}}_{\substack{s_{2}}}\left[\alpha H(\pi_\text{old}(\cdot\vert s_{2}))
 + 
 \mathop{\mathbb{E}}_{\substack{a_{2}\sim \pi_\text{old}}} 
 [Q^{\pi_\text{old}}(s_{2},a_{2})]\right]
 \nonumber\\
\leq & 
\mathop{\mathbb{E}}_{\substack{s_{1}\\a_1\sim{\pi_\text{new}}}}\left[ r_0 
+
\gamma\left(\alpha H({\pi_\text{new}}(\cdot\vert s_{1}))+r_{1}\right)\right]
 \nonumber\\
 &\quad +\gamma^2 \mathop{\mathbb{E}}_{\substack{s_{2}}}\left[\alpha H({\pi_\text{new}}(\cdot\vert s_{2}))
 + 
 \mathop{\mathbb{E}}_{\substack{a_{2}\sim {\pi_\text{new}}}} 
 [Q^{\pi_\text{old}}(s_{2},a_{2})]\right] \nonumber\\
 =& \!
\mathop{\mathbb{E}}_{\substack{s_{1}\\a_{1}\sim{\pi_\text{new}}\\s_{2}}}\!\!\left[ r_0 
\!+\!
\gamma\!\left(\alpha H({\pi_\text{new}}(\cdot\vert s_{1}))\!+\!r_{1}\right)
\!+\!
\gamma^2\!\left(\alpha H({\pi_\text{new}}(\cdot\vert s_{2}))\!+\!r_{2}\right)\right]
 \nonumber\\
 &\quad +
 \gamma^3 \mathop{\mathbb{E}}_{\substack{s_{3}}}\left[\alpha H({\pi_\text{new}}(\cdot\vert s_{3}))
 + 
 \mathop{\mathbb{E}}_{\substack{a_{3}\sim {\pi_\text{new}}}} 
 [Q^{\pi_\text{old}}(s_{3},a_{3})]\right]
 \nonumber\\
&\quad \cdots  \nonumber\\
\leq & \mathop{\mathbb{E}}\limits_{\tau_1\sim\pi_\text{new}}\left[
r_0+\sum_{t=1}^{\infty}\gamma^t \big(H({\pi_\text{new}}(\cdot\vert s_t))+r_t\big)
\right]  \nonumber \\
= & Q^{\pi_\text{new}}(s,a)  {.} \nonumber
\end{align}
\end{proof}

\subsection{Proof of Theorem \ref{theorem:mix policy improvement}}\label{app:theorem1}
Before proving the theorem, 
we first introduce several important lemmas, on which the proofs of the proposed theorems are built. 

\begin{lemma}\label{lemma:A_sum_zero}
\begin{equation}
    \mathbb{E}_{a\sim \pi}\left[A_\pi(s,a)\right]=-\alpha H(\pi(\cdot\vert s)) {.} \nonumber
\end{equation}
\end{lemma}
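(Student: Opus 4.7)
The plan is to derive this identity directly from the soft Bellman relationship between the soft state value function and the soft state-action value function, which is the defining feature of MERL (as opposed to traditional RL where the corresponding expectation of the advantage vanishes).

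First, I would start from the definition of the soft advantage in \eqref{eq:A definition}, namely $A_\pi(s,a) = Q^\pi(s,a) - V^\pi(s)$, and take the expectation with respect to $a\sim\pi(\cdot\vert s)$. Since $V^\pi(s)$ does not depend on $a$, this gives
\begin{equation}
\mathbb{E}_{a\sim\pi}\left[A_\pi(s,a)\right] = \mathbb{E}_{a\sim\pi}\left[Q^\pi(s,a)\right] - V^\pi(s). \nonumber
\end{equation}
Next, I would invoke the soft Bellman identity \eqref{eq:V-function bellman equation} (applied to the fixed-point $V^\pi$, $Q^\pi$ of the soft policy evaluation operator), which states that $V^\pi(s) = \mathbb{E}_{a\sim\pi}[Q^\pi(s,a)] + \alpha H(\pi(\cdot\vert s))$. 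Substituting this expression into the previous display causes the $\mathbb{E}_{a\sim\pi}[Q^\pi(s,a)]$ terms to cancel, leaving precisely $-\alpha H(\pi(\cdot\vert s))$.

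I expect no real obstacle here: the statement is essentially a one-line restatement of the entropy-augmented Bellman identity, and its role in the paper is clearly as a preparatory technical lemma. The only subtlety is to make explicit that the soft value function defined in Section \ref{sec:Preliminary} (rather than an arbitrary $V$) satisfies the fixed-point form of \eqref{eq:V-function bellman equation}; this is immediate by taking $Q = Q^\pi$ so that $\mathcal{T}^\pi Q^\pi = Q^\pi$ and the induced $V$ coincides with $V^\pi$. This identity will later be the mechanism by which the logarithmic policy term in the mixed-policy bound of Theorem \ref{theorem:mix policy improvement} gets absorbed into the advantage, which is why it is stated separately here.
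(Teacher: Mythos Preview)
Your proposal is correct and follows essentially the same argument as the paper: expand the advantage via \eqref{eq:A definition}, then substitute the soft Bellman identity \eqref{eq:V-function bellman equation} for $V^\pi(s)$ so that the $\mathbb{E}_{a\sim\pi}[Q^\pi(s,a)]$ terms cancel and only $-\alpha H(\pi(\cdot\vert s))$ remains. The paper merely writes out the expectations as explicit sums, but the logic is identical.
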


\begin{proof}
\begin{align}
    & \mathbb{E}_{a\sim \pi}\left[A_\pi(s,a)\right]\nonumber\\
    =&\sum_a \pi(a\vert s)A_\pi(s,a)\nonumber\\
    =&\sum_a \pi(a\vert s)\left[Q_\pi(s,a)-V_\pi(s)\right]\nonumber\\
    =&\sum_a \pi(a\vert s)Q_\pi(s,a)-V_\pi(s)\nonumber\\
    \stackrel{\mathrm{(a)}}{=}& \sum_a \pi(a\vert s)Q_\pi(s,a)-\sum_a \pi(a\vert s)\left[Q_\pi(s,a)-\alpha \log \pi(a\vert s)\right]\nonumber\\
    =&- \alpha H(\pi(\cdot\vert s))  {,} \nonumber
\end{align}
where the equality $(a)$ is according to \eqref{eq:V-function bellman equation}.
\end{proof}

\begin{lemma}\label{lemma:replace expectation}
\begin{equation}
    \mathop{\mathbb{E}}\limits_{a\sim \pi_\text{mix}}\left[A_\pi(s,a)\right]=\beta\mathop{\mathbb{E}}\limits_{a\sim\tilde{\pi}}\left[A_\pi(s,a)\right] -(1-\beta) \alpha H(\pi(\cdot\vert s)) {.}\nonumber
\end{equation}
\end{lemma}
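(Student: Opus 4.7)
The plan is to directly unfold the expectation over $\pi_\text{mix}$ using the convex combination in \eqref{eq:update rule}, then split into two summations: one weighted by $(1-\beta)\pi$ and one weighted by $\beta\tilde{\pi}$. Concretely, I would write
\begin{align*}
\mathop{\mathbb{E}}_{a\sim \pi_\text{mix}}[A_\pi(s,a)]
&=\sum_a \pi_\text{mix}(a\vert s)\,A_\pi(s,a)\\
&=(1-\beta)\sum_a \pi(a\vert s)\,A_\pi(s,a)+\beta\sum_a\tilde{\pi}(a\vert s)\,A_\pi(s,a),
\end{align*}
which reduces the claim to handling each piece separately.

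The $\tilde\pi$ term is, by definition, $\beta\,\mathbb{E}_{a\sim\tilde{\pi}}[A_\pi(s,a)]$, so it immediately matches the first term on the right-hand side of the target identity. For the $\pi$ term, I would invoke Lemma \ref{lemma:A_sum_zero}, which asserts $\mathbb{E}_{a\sim\pi}[A_\pi(s,a)]=-\alpha H(\pi(\cdot\vert s))$. Substituting this in yields exactly $-(1-\beta)\alpha H(\pi(\cdot\vert s))$, completing the identity.

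There is no real obstacle here: the lemma is essentially a bookkeeping identity that decomposes the mixed-policy expectation into the contributions of its two components. The only subtlety worth flagging is that under MERL, $\mathbb{E}_{a\sim\pi}[A_\pi(s,a)]$ is not zero (as it would be in standard RL), but instead equals $-\alpha H(\pi(\cdot|s))$, as established in Lemma \ref{lemma:A_sum_zero} via the entropy-augmented relation between $V^\pi$ and $Q^\pi$ in \eqref{eq:V-function bellman equation}. This is the crucial soft-RL correction that accounts for the $-(1-\beta)\alpha H(\pi(\cdot|s))$ term appearing in the conclusion and that will later propagate into Theorem \ref{theorem:mix policy improvement}.
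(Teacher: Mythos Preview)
Your proposal is correct and follows essentially the same approach as the paper's own proof: expand the expectation using the mixture definition \eqref{eq:update rule}, split into the $\tilde{\pi}$ and $\pi$ parts, and then apply Lemma~\ref{lemma:A_sum_zero} to the $\pi$ part to obtain the $-(1-\beta)\alpha H(\pi(\cdot\vert s))$ term.
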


\begin{proof}
\begin{align}
    & \mathbb{E}_{a\sim \pi_\text{mix}}\left[A_\pi(s,a)\right]\nonumber\\
    =&\sum_a \pi_\text{mix}(a\vert s)A_\pi(s,a)\nonumber\\
    \stackrel{\mathrm{(a)}}{=} &\sum_a \left[(1-\beta)\pi(a\vert s)+\beta \tilde{\pi}(a\vert s)\right]A_\pi(s,a)\nonumber\\
    = &\beta \sum_a \tilde{\pi}(a\vert s)A_\pi(s,a)+(1-\beta)\sum_a {\pi}(a\vert s)A_\pi(s,a)\nonumber\\
    \stackrel{\mathrm{(b)}}{=} &\beta\mathop{\mathbb{E}}\limits_{a\sim\tilde{\pi}}\left[A_\pi(s,a)\right]-(1-\beta) \alpha H(\pi(\cdot\vert s))  {,} \nonumber
\end{align}
where the equalities $(a)$ and $(b)$ are due to \eqref{eq:update rule} and  Lemma \ref{lemma:A_sum_zero}, respectively. 
\end{proof}

\begin{lemma}\label{lemma4:mix minus pi}
\begin{align*}
    &\eta(\pi_\text{mix})-\eta(\pi)\\
    =&
    \sum_{t=0}^\infty\gamma^t\left[
    \mathbb{E}_{\substack{s\sim P(s_t;\pi_\text{mix})\\a\sim \pi_\text{mix}}}  [A_\pi(s,a)+\alpha H(\pi_\text{mix}(\cdot\vert s))]\right] {.}\nonumber
\end{align*}
\begin{proof}
First note that according to the definition of advantage function and soft bellman equation in \eqref{eq:A definition} and \eqref{eq:V-function bellman equation}, we can get $A_\pi(s_t,a_t)=\mathbb{E}_{s_{t+1}\sim P(\cdot\vert s_t,a_t)}[r_t + \gamma V_\pi(s_{t+1})-V_\pi(s_t)]$. 
Therefore, 
\begin{align}
&\mathop{\mathbb{E}}\limits_{\tau_0\sim\pi_\text{mix}}
    \left[\sum_{t=0}^\infty \gamma^t \left[A_\pi(s_t,a_t)\right]\right]\nonumber\\
=&\mathop{\mathbb{E}}\limits_{\tau_0\sim\pi_\text{mix}}
    \left[\sum_{t=0}^\infty \gamma^t 
   \left[r_t + \gamma V_\pi(s_{t+1})-V_\pi(s_t) \right]\right]\nonumber\\
=&\mathop{\mathbb{E}}\limits_{\tau_0\sim\pi_\text{mix}}\Bigg[\sum_{t=0}^\infty \gamma^t r_t+ (\gamma V_\pi(s_1)-V_\pi(s_0)
    \nonumber\\
    & \quad\quad +\gamma^2 V_\pi(s_2)-\gamma V_\pi(s_1)+\cdots) \Bigg]\nonumber\\
=&\mathbb{E}_{\tau_0\sim\pi_\text{mix}}\left[\sum_{t=0}^\infty \gamma^t r_t\right]-\mathbb{E}_{s_0}\left[V_\pi(s_0)\right]
   \nonumber\\
   {=} & \mathbb{E}_{\tau_0\sim\pi_\text{mix}}\left[\sum_{t=0}^\infty \gamma^t (r_t+\alpha H(\pi_\text{mix}(\cdot\vert s_t))\right.\nonumber\\
   &\quad \left. -\sum_{t=0}\gamma^t[\alpha H(\pi_\text{mix}(\cdot\vert s_t))]\right]-\eta(\pi)
   \nonumber\\
   =& \eta(\pi_\text{mix})-\eta(\pi) - \mathbb{E}_{\tau_0\sim\pi_\text{mix}}\left[
   \sum_{t=0}^\infty\gamma^t[\alpha H(\pi_\text{mix}(\cdot\vert s_t))]\right]  {.} 
   \nonumber
\end{align}
Also we have
\begin{align*}
    &\mathop{\mathbb{E}}\limits_{\tau_0\sim\pi_\text{mix}}
    \left[\sum_{t=0}^\infty \gamma^t \left[A_\pi(s_t,a_t)+\alpha H(\pi_\text{mix}(\cdot\vert s_t))\right]\right]\\
    = &
    \sum_{t=0}^\infty \gamma^t \left[
    \mathbb{E}_{\substack{s\sim P(s_t;\pi_\text{mix})\\a\sim \pi_\text{mix}}}  [A_\pi(s,a)+\alpha H(\pi_\text{mix}(\cdot\vert s))]\right]  {.}
\end{align*}
Hence we can derive the formula for Lemma \ref{lemma4:mix minus pi}.
\end{proof}
\end{lemma}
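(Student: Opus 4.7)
The plan is to adapt the classical performance-difference lemma to the MERL setting by combining a telescoping-sum argument for the advantage with careful bookkeeping of the entropy bonus that enters both $\eta(\pi_\text{mix})$ and $V^\pi$. I would start from the soft-Bellman identity $A_\pi(s_t,a_t) = \mathbb{E}_{s_{t+1}\sim p(\cdot\vert s_t,a_t)}[r_t + \gamma V_\pi(s_{t+1}) - V_\pi(s_t)]$, which is immediate from the definitions $Q^\pi(s_t,a_t) = r_t + \gamma\mathbb{E}[V^\pi(s_{t+1})]$ and $A_\pi = Q^\pi - V^\pi$ recorded in Section \ref{sec:Preliminary}. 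Note that the entropy contributions of $\pi$ are already bundled inside $V^\pi$, not exposed in this Bellman rewriting.

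Next I would multiply by $\gamma^t$, sum over $t\geq 0$, and take expectation along trajectories $\tau_0\sim\pi_\text{mix}$. The sum telescopes: every $\gamma^{t+1}V_\pi(s_{t+1})$ produced at step $t$ cancels the $-\gamma^{t+1}V_\pi(s_{t+1})$ produced at step $t+1$, leaving only $-V_\pi(s_0)$ plus the discounted rewards. Concretely this gives $\mathbb{E}_{\tau_0\sim\pi_\text{mix}}\!\left[\sum_{t=0}^\infty \gamma^t A_\pi(s_t,a_t)\right] = \mathbb{E}_{\tau_0\sim\pi_\text{mix}}\!\left[\sum_{t=0}^\infty \gamma^t r_t\right] - \mathbb{E}_{s_0\sim\rho_0}[V_\pi(s_0)]$, where the subtracted term is exactly $\eta(\pi)$ by \eqref{eq:MERL objective}.

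The MERL-specific ingredient is converting the raw discounted return $\mathbb{E}_{\tau_0\sim\pi_\text{mix}}[\sum_t \gamma^t r_t]$ into the entropy-augmented objective $\eta(\pi_\text{mix})$. Since $\eta(\pi_\text{mix}) = \mathbb{E}_{\tau_0\sim\pi_\text{mix}}[\sum_t \gamma^t(r_t + \alpha H(\pi_\text{mix}(\cdot\vert s_t)))]$ directly from \eqref{eq:MERL objective}, one substitutes $\mathbb{E}_{\tau_0\sim\pi_\text{mix}}[\sum_t \gamma^t r_t] = \eta(\pi_\text{mix}) - \mathbb{E}_{\tau_0\sim\pi_\text{mix}}[\sum_t \gamma^t \alpha H(\pi_\text{mix}(\cdot\vert s_t))]$. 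Combining with the telescoped identity and rearranging yields $\eta(\pi_\text{mix}) - \eta(\pi) = \mathbb{E}_{\tau_0\sim\pi_\text{mix}}[\sum_t \gamma^t(A_\pi(s_t,a_t) + \alpha H(\pi_\text{mix}(\cdot\vert s_t)))]$. Finally, unrolling the trajectory expectation into per-timestep visitation via $s\sim P(s_t;\pi_\text{mix})$ and $a\sim\pi_\text{mix}$ produces exactly the claimed formula, where the inner expectation over $a$ is harmless on the entropy term because $H(\pi_\text{mix}(\cdot\vert s))$ depends only on $s$.

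The main obstacle is keeping straight which policy's entropy appears where: the advantage $A_\pi$ silently encodes $\pi$'s own entropies through the recursive structure of $V^\pi$, while the target quantity $\eta(\pi_\text{mix})$ depends on $\pi_\text{mix}$'s entropies. The elegance of the argument is that the telescoping cancels the $V_\pi$ contributions wholesale, so $\pi$'s entropies never need to be reconstructed; only the mismatch term $\alpha H(\pi_\text{mix})$, which comes from turning the raw-reward expectation back into the soft objective on the rollout side, survives. I expect no analytic subtlety beyond this bookkeeping, since the rest is the standard performance-difference telescoping adapted to a soft value function.
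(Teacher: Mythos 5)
Your proposal is correct and follows essentially the same route as the paper's own proof: the soft-Bellman rewriting of $A_\pi$, the telescoping sum along $\tau_0\sim\pi_\text{mix}$ leaving $-\mathbb{E}_{s_0}[V_\pi(s_0)]=-\eta(\pi)$, the add-and-subtract of $\alpha H(\pi_\text{mix}(\cdot\vert s_t))$ to recover $\eta(\pi_\text{mix})$, and the final unrolling into per-timestep visitation expectations. No gaps.
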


\begin{lemma}\label{lemma:mix entropy}
For any given state $s$, we can decompose the entropy of the mixed policy as 
\begin{align*}
    H(\pi_{\text{mix}}(\cdot\vert s)) &= \mathrm{D}_{\mathrm{JS}}^\beta(\tilde{\pi}(\cdot\vert s)\Vert \pi(\cdot\vert s))+\beta H(\tilde{\pi}(\cdot\vert s))  \\
    &\qquad+(1-\beta) H(\pi(\cdot\vert s))  {.}
\end{align*}
\end{lemma}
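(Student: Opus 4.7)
The plan is to prove Lemma \ref{lemma:mix entropy} by a direct algebraic unrolling of the Shannon entropy of the convex combination, then reassembling the pieces into the $\beta$-skew JS form.

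First I would expand $H(\pi_{\text{mix}}(\cdot\vert s)) = -\sum_a \pi_{\text{mix}}(a\vert s)\log \pi_{\text{mix}}(a\vert s)$ and substitute only the outer occurrence of $\pi_{\text{mix}}$ using \eqref{eq:update rule}, keeping the $\log \pi_{\text{mix}}(a\vert s)$ intact. This immediately gives
\begin{equation*}
H(\pi_{\text{mix}}(\cdot\vert s)) = -(1-\beta)\sum_a \pi(a\vert s)\log \pi_{\text{mix}}(a\vert s) - \beta\sum_a \tilde{\pi}(a\vert s)\log \pi_{\text{mix}}(a\vert s).
\end{equation*}
The key algebraic trick is then the standard add/subtract: inside the first sum I would add and subtract $\log\pi(a\vert s)$, and inside the second I would add and subtract $\log\tilde{\pi}(a\vert s)$. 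This separates each term into an entropy piece and a log-ratio piece.

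Collecting the log-ratio pieces yields
\begin{equation*}
(1-\beta)\sum_a \pi(a\vert s)\log\frac{\pi(a\vert s)}{\pi_{\text{mix}}(a\vert s)} + \beta\sum_a \tilde{\pi}(a\vert s)\log\frac{\tilde{\pi}(a\vert s)}{\pi_{\text{mix}}(a\vert s)},
\end{equation*}
and since $\pi_{\text{mix}} = \beta\tilde{\pi} + (1-\beta)\pi$ by \eqref{eq:update rule}, this matches the definition of $\mathrm{D}_{\mathrm{JS}}^\beta(\tilde{\pi}(\cdot\vert s)\Vert \pi(\cdot\vert s))$ given in the statement of Theorem \ref{theorem:mix policy improvement}. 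The remaining pieces are $(1-\beta)H(\pi(\cdot\vert s))$ and $\beta H(\tilde{\pi}(\cdot\vert s))$, which completes the identity.

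There is no real obstacle here; the result is essentially a bookkeeping identity following from the definition of entropy and the convex-combination form of $\pi_{\text{mix}}$. The only subtlety worth flagging is the consistent orientation of the KL-like terms (one must verify that the $p,q$ roles in the JS definition indeed correspond to $\tilde{\pi}$ and $\pi$ with weights $\beta$ and $1-\beta$ respectively, rather than swapped). A quick consistency check using $\beta=0$ or $\beta=1$, under which $\mathrm{D}_{\mathrm{JS}}^\beta$ vanishes and $H(\pi_{\text{mix}})$ collapses to the appropriate single entropy, confirms the orientation and completes the proof.
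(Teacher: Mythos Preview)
Your proposal is correct and follows essentially the same approach as the paper: expand the entropy definition, split the outer $\pi_{\text{mix}}$ factor via the convex combination \eqref{eq:update rule}, then add and subtract $\log\pi$ and $\log\tilde{\pi}$ to separate the log-ratio terms (which assemble into $\mathrm{D}_{\mathrm{JS}}^\beta$) from the weighted entropy terms. The paper's proof is slightly more terse, jumping directly to the rearranged four-term expression, but the underlying manipulation is identical.
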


\begin{proof} By definition, 
\begin{align*}
    &H(\pi_\text{mix}(\cdot\vert s)) \\
    =& -\sum_a \Big\{ [(1-\beta)\pi(a\vert s)+\beta\tilde{\pi}(a\vert s)]\\
    & \qquad \qquad \cdot \log[(1-\beta)\pi(a\vert s)+\beta\tilde{\pi}(a\vert s)] \Big\} \\
    =&\sum_a \beta \left[\tilde{\pi}(a\vert s)\log\frac{\tilde{\pi}(a\vert s)}{(1-\beta)\pi(a\vert s)+\beta\tilde{\pi}(a\vert s)}\right] 
    \\
    & \quad + \sum_a (1-\beta)\left[{\pi}(a\vert s)\log\frac{{\pi}(a\vert s)}{(1-\beta)\pi(a\vert s)+\beta\tilde{\pi}(a\vert s)}\right] 
    \\
    &\quad - \sum_a \left[\beta\tilde{\pi}(a\vert s)\log \tilde{\pi}(a\vert s) \right]
    \\
    & \quad - \sum_a \left[(1-\beta){\pi}(a\vert s)\log{\pi}(a\vert s) \right]
    \\
    =& \mathrm{D}_{\mathrm{JS}}^\beta(\tilde{\pi}(\cdot\vert s)\Vert \pi(\cdot\vert s)) +\beta H(\tilde{\pi}(\cdot\vert s)) +(1-\beta) H(\pi(\cdot\vert s))  {.}
\end{align*}
\end{proof}

\begin{lemma}\label{lemma3:mix A}
\begin{align*}
    &\mathop{\mathbb{E}}_{\substack{{s\sim P (s_t;\pi_\text{mix})}\\
    {a\sim \pi_\text{mix}}}}\left[A_\pi(s,a)+ \alpha H(\pi_\text{mix}(\cdot\vert s))\right]
    \\
    \geq &
    \quad \beta \mathop{\mathbb{E}}_{\substack{s\sim P (s_t;\pi)\\
    a\sim \tilde{\pi}}}\left[A_\pi(s,a)+\alpha H(\tilde{\pi}(\cdot\vert s))\right] 
    - 2\beta \rho_t \varepsilon\\
    &\quad +\alpha \mathop{\mathbb{E}}\limits_{s\sim P (s_t;\pi_\text{mix})}\left[D_{\text{JS}}^\beta(\tilde{\pi}(\cdot\vert s)\Vert \pi(\cdot\vert s))\right] {,}
\end{align*}
where $\varepsilon = \max_{s}\vert \mathbb{E}_{a\sim \tilde{\pi}}[A_\pi(s,a)+\alpha H(\tilde{\pi}(\cdot\vert s))]\vert$, and $\rho_t=1-(1-\beta)^t$.
\end{lemma}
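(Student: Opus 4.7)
The plan is to prove this lemma in two main stages: an algebraic decomposition of the inner expectation, followed by a coupling-based bound on the discrepancy between the state distributions $P(s_t;\pi_\text{mix})$ and $P(s_t;\pi)$.

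\textbf{Stage 1: Decomposing the integrand.} First I would hold the state $s$ fixed and reduce the inner expectation over actions drawn from $\pi_\text{mix}$. By Lemma \ref{lemma:replace expectation}, the advantage term becomes
$\mathbb{E}_{a\sim\pi_\text{mix}}[A_\pi(s,a)] = \beta\,\mathbb{E}_{a\sim\tilde{\pi}}[A_\pi(s,a)] - (1-\beta)\alpha H(\pi(\cdot\vert s))$.
By Lemma \ref{lemma:mix entropy}, the entropy term decomposes as
$\alpha H(\pi_\text{mix}(\cdot\vert s)) = \alpha D_{\mathrm{JS}}^\beta(\tilde\pi\Vert\pi) + \alpha\beta H(\tilde\pi(\cdot\vert s)) + \alpha(1-\beta) H(\pi(\cdot\vert s))$.
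Adding these two identities, the two $\alpha(1-\beta) H(\pi(\cdot\vert s))$ contributions cancel, leaving the clean pointwise identity
$\mathbb{E}_{a\sim\pi_\text{mix}}[A_\pi(s,a)+\alpha H(\pi_\text{mix}(\cdot\vert s))] = \beta\,\mathbb{E}_{a\sim\tilde\pi}[A_\pi(s,a)+\alpha H(\tilde\pi(\cdot\vert s))] + \alpha D_{\mathrm{JS}}^\beta(\tilde\pi(\cdot\vert s)\Vert\pi(\cdot\vert s))$.
Taking the outer expectation over $s\sim P(s_t;\pi_\text{mix})$, the JS-divergence term already matches the target form exactly.

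\textbf{Stage 2: Changing the state distribution.} The remaining task is to replace $s\sim P(s_t;\pi_\text{mix})$ by $s\sim P(s_t;\pi)$ inside the first term, at the cost of an additive error. Let $g(s) := \beta\,\mathbb{E}_{a\sim\tilde\pi}[A_\pi(s,a)+\alpha H(\tilde\pi(\cdot\vert s))]$, so by the definition of $\varepsilon$ we have $\max_s |g(s)| \le \beta\varepsilon$. The standard inequality $|\mathbb{E}_{P_1}[g] - \mathbb{E}_{P_2}[g]| \le 2\,\mathrm{TV}(P_1,P_2)\,\max_s|g(s)|$ then reduces the problem to bounding $\mathrm{TV}\!\bigl(P(s_t;\pi_\text{mix}),\,P(s_t;\pi)\bigr) \le \rho_t$.

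\textbf{The main obstacle — the TV bound via coupling.} I would construct an explicit coupling of two trajectories $(s_0,a_0,\dots,s_t)$ generated by $\pi_\text{mix}$ and $\pi$ respectively. Using the mixture structure \eqref{eq:update rule}, sampling $a_k\sim\pi_\text{mix}(\cdot\vert s_k)$ can be realized by first drawing a Bernoulli $X_k\sim\mathrm{Ber}(\beta)$ and then sampling from $\tilde\pi$ if $X_k=1$ and from $\pi$ if $X_k=0$. Couple the two trajectories by sharing the initial state and, at every step where $X_k=0$, forcing the same action in both trajectories; whenever $X_k=0$ in both processes the states remain synchronized. The trajectories coincide through step $t$ on the event $\{X_0=\cdots=X_{t-1}=0\}$, which has probability $(1-\beta)^t$. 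By the coupling inequality, $\mathrm{TV}\!\bigl(P(s_t;\pi_\text{mix}),\,P(s_t;\pi)\bigr) \le 1-(1-\beta)^t = \rho_t$. Plugging this into the TV inequality yields the desired $-2\beta\rho_t\varepsilon$ slack, and combining with Stage 1 completes the proof. The subtle point here is verifying that the coupling can indeed be maintained step by step once agreement holds — this follows because the transition kernel $p(\cdot\vert s,a)$ is shared and the mixture decomposition at each step is independent of the history, so the event of all-$\pi$ actions through time $t-1$ keeps both trajectories identical in distribution.
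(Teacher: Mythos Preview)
Your proposal is correct and follows essentially the same approach as the paper: Stage~1 matches the paper's decomposition via Lemmas~\ref{lemma:replace expectation} and~\ref{lemma:mix entropy} exactly, and Stage~2 uses the same Bernoulli-coupling idea (the paper conditions on the event $c_t=0$ versus $c_t\ge 1$ and bounds each piece by $\pm\varepsilon$, which is your TV-coupling argument written out explicitly rather than packaged through the inequality $|\mathbb{E}_{P_1}[g]-\mathbb{E}_{P_2}[g]|\le 2\,\mathrm{TV}(P_1,P_2)\max_s|g(s)|$). The only cosmetic difference is that the paper never names the total-variation distance, instead expanding $(1-\rho_t)\mathbb{E}[\cdot\mid c_t=0]+\rho_t\mathbb{E}[\cdot\mid c_t\ge 1]$ directly; both routes yield the identical $-2\beta\rho_t\varepsilon$ slack.
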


\begin{proof}
\begin{align*}
     &\mathop{\mathbb{E}}_{\substack{{s\sim P (s_t;\pi_\text{mix})}\\
    {a\sim \pi_\text{mix}}}}\left[A_\pi(s,a)+ \alpha H(\pi_\text{mix}(\cdot\vert s))\right]\\
    &\stackrel{\mathrm{(a)}}{=}  \mathop{\mathbb{E}}\limits_{s\sim P (s_t;\pi_\text{mix})}
    \Bigg[\beta\sum_a \tilde{\pi}(a\vert s)A_\pi(s,a) -(1-\beta) \alpha H(\pi(\cdot\vert s)) \\
    & \quad + \alpha \mathrm{D}_{\mathrm{JS}}^\beta(\tilde{\pi}(\cdot\vert s)\Vert \pi(\cdot\vert s)) \!+\!\beta \alpha H(\tilde{\pi}(\cdot\vert s))
    \!+\!(1\!-\!\beta) \alpha H(\pi(\cdot\vert s))\Bigg]\\
    &=\beta \mathop{\mathbb{E}}\limits_{s\sim P (s_t;\pi_\text{mix})}\left[\sum_a \tilde{\pi}(a\vert s)\Big[A_\pi(s,a)+\alpha H(\tilde{\pi}(\cdot\vert s))\Big]\right]\\
    & \quad +\alpha \mathop{\mathbb{E}}\limits_{s\sim P (s_t;\pi_\text{mix})}\left[\mathrm{D}_{\mathrm{JS}}^\beta(\tilde{\pi}(\cdot\vert s)\Vert \pi(\cdot\vert s))\right]  {,}
\end{align*}
where the equality $(a)$ is according to Lemma \ref{lemma:replace expectation} and Lemma \ref{lemma:mix entropy}, and the mixed policy is taken as a mixture of the policy $\pi$ and the referential policy $\tilde{\pi}$ received from others. In other words, to sample from $\pi_\text{mix}$, we first draw a Bernoulli random variable, which tells us to choose $\pi$ with probability $(1-\beta)$ and choose $\tilde{\pi}$ with probability $\beta$. Let $c_t$ be the random variable that indicates the number of times $\tilde{\pi}$ was chosen before time $t$. $P(s_t;\pi)$ is the distribution over states at time $t$ while following $\pi$. We can condition on the value of $c_t$ to break the probability distribution into two pieces, with $P(c_t=0)=(1-\beta)^t$, and $\rho_t=P(c_t\geq 1)=1-(1-\beta)^t$. 
Thus, we can get \eqref{eq:mathopE_expansion} on Page \pageref{eq:mathopE_expansion}.
\begin{figure*}
\begin{align}
    &\beta \mathop{\mathbb{E}}\limits_{s\sim P (s_t;\pi_\text{mix})}\left[\sum_a \tilde{\pi}(a\vert s)\left[A_\pi(s,a)+\alpha H(\tilde{\pi}(\cdot\vert s))\right]\right]\nonumber\\
    =&\beta (1-\rho_t)\!\mathop{\mathbb{E}}\limits_{s\sim P (s_t\vert c_t= 0 ;\pi_\text{mix})}\!\left[\sum_a \tilde{\pi}(a\vert s)\left[A_\pi(s,a)+\alpha H(\tilde{\pi}(\cdot\vert s))\right]\right]+ \beta \rho_t\mathop{\mathbb{E}}\limits_{s\sim P (s_t\vert c_t\geq 1 ;\pi_\text{mix})}\left[\sum_a \tilde{\pi}(a\vert s)\left[A_\pi(s,a)+\alpha H(\tilde{\pi}(\cdot\vert s))\right]\right]\nonumber\\
    =& \beta\!\mathop{\mathbb{E}}\limits_{s\sim P (s_t\vert c_t= 0 ;\pi_\text{mix})}\!\left[\sum_a \tilde{\pi}(a\vert s)\left[A_\pi(s,a)+\alpha H(\tilde{\pi}(\cdot\vert s))\right]\right]- \beta \rho_t\mathop{\mathbb{E}}\limits_{s\sim P (s_t\vert c_t= 0 ;\pi_\text{mix})}\left[\sum_a \tilde{\pi}(a\vert s)\left[A_\pi(s,a)+\alpha H(\tilde{\pi}(\cdot\vert s))\right]\right]\label{eq:mathopE_expansion}\\
    &\quad+ \beta \rho_t\mathop{\mathbb{E}}\limits_{s\sim P (s_t\vert c_t\geq 1 ;\pi_\text{mix})}\left[\sum_a \tilde{\pi}(a\vert s)\left[A_\pi(s,a)+\alpha H(\tilde{\pi}(\cdot\vert s))\right]\right]\nonumber
\end{align}
\hrulefill
\end{figure*}
Recalling the definition $\varepsilon = \max_{s}\big\vert \mathbb{E}_{a\sim \tilde{\pi}}[A_\pi(s,a)+\alpha H(\tilde{\pi}(\cdot\vert s))]\big\vert$, we have 
\begin{align}
    &\mathop{\mathbb{E}}\limits_{s\sim P (s_t\vert c_t= 0 ;\pi_\text{mix})}\left[\sum_a \tilde{\pi}(a\vert s)\left[A_\pi(s,a)+\alpha H(\tilde{\pi}(\cdot\vert s))\right]\right]\leq\varepsilon\nonumber  {;} \\
    &\mathop{\mathbb{E}}\limits_{s\sim P (s_t\vert c_t\geq 1 ;\pi_\text{mix})}\left[\sum_a \tilde{\pi}(a\vert s)\left[A_\pi(s,a)+\alpha H(\tilde{\pi}(\cdot\vert s))\right]\right]\geq -\varepsilon\nonumber  {.}
\end{align}
Therefore, \eqref{eq:mathopE_expansion} can be re-organized as  
\begin{align}
    & \beta \mathop{\mathbb{E}}\limits_{s\sim P (s_t;\pi_\text{mix})}\left[\sum_a \tilde{\pi}(a\vert s)\left[A_\pi(s,a)+\alpha H(\tilde{\pi}(\cdot\vert s))\right]\right]\nonumber\\
    \geq &  \beta\!\mathop{\mathbb{E}}\limits_{s\sim P (s_t\vert c_t= 0 ;\pi_\text{mix})}\!\left[\sum_a \tilde{\pi}(a\vert s)\left[A_\pi(s,a)+\alpha H(\tilde{\pi}(\cdot\vert s))\right]\right]\nonumber\\
    &\quad- 2\beta \rho_t \varepsilon \nonumber\\
    = & \beta\!\mathop{\mathbb{E}}\limits_{s\sim P (s_t;\pi)}\!\left[\sum_a \tilde{\pi}(a\vert s)\left[A_\pi(s,a)+\alpha H(\tilde{\pi}(\cdot\vert s))\right]\right]\!-\! 2\beta \rho_t \varepsilon \nonumber  {,}
\end{align}
in which ${P (s_t\vert c_t= 0 ;\pi_\text{mix})}={P (s_t;\pi)}$. 
\end{proof}

Next, we are ready to prove Theorem \ref{theorem:mix policy improvement}.
\begin{proof}
According to Lemma \ref{lemma4:mix minus pi} and Lemma \ref{lemma3:mix A}, we have
\begin{align*}
    &\eta(\pi_\text{mix})-\eta(\pi) \\
    =&
    \sum_{t=0}^\infty \gamma^t \left[
    \mathbb{E}_{\substack{s\sim P(s_t;\pi_\text{mix})\\a\sim \pi_\text{mix}}}  [A_\pi(s,a)+\alpha H(\pi_\text{mix}(\cdot\vert s))]\right]\\
    \geq &
    \beta \sum_{t=0}^\infty \gamma^t \!\mathop{\mathbb{E}}_{\substack{s\sim P (s_t;\pi)\\
    a\sim \tilde{\pi}}}\left[A_\pi(s,a)+\alpha H(\tilde{\pi}(\cdot\vert s))\right] 
    - 2\beta \varepsilon \sum_{t=0}^\infty \gamma^t \rho_t \\
    &\quad +\alpha  \sum_{t=0}^\infty \gamma^t \mathop{\mathbb{E}}\limits_{s\sim P (s_t;\pi_\text{mix})}\left[\mathrm{D}_{\mathrm{JS}}^\beta(\tilde{\pi}(\cdot\vert s)\Vert \pi(\cdot\vert s))\right]\\
     =&
     \beta \mathop{\mathbb{E}}_{\substack{s\sim d_\pi\\
    a\sim \tilde{\pi}}}\left[A_\pi(s,a)+\alpha H(\tilde{\pi}(\cdot\vert s))\right] - 2\beta \varepsilon\sum_{t=0}\gamma^t [1-(1-\beta)^t]
    \\
    &\quad +\alpha  \mathop{\mathbb{E}}_{s\sim d_{\pi_\text{mix}}}\left[\mathrm{D}_{\mathrm{JS}}^\beta(\tilde{\pi}(\cdot\vert s)\Vert \pi(\cdot\vert s))\right]
     \\  
    =&
    \beta \mathop{\mathbb{E}}_{\substack{s\sim d_\pi\\
    a\sim \tilde{\pi}}}\left[A_\pi(s,a)+\alpha H(\tilde{\pi}(\cdot\vert s))\right]  
    - \frac{2\gamma\varepsilon \beta^2 }{(1\!-\!\gamma)(1\!-\!\gamma(1-\beta))} \\
    &\quad +\alpha  \mathop{\mathbb{E}}_{s\sim d_{\pi_\text{mix}}}\left[\mathrm{D}_{\mathrm{JS}}^\beta(\tilde{\pi}(\cdot\vert s)\Vert \pi(\cdot\vert s))\right]
     \\  
    \geq &
    \beta \mathop{\mathbb{E}}_{\substack{s\sim d_\pi\\
    a\sim \tilde{\pi}}}\left[A_\pi(s,a)+\alpha H(\tilde{\pi}(\cdot\vert s))\right] 
    - \frac{2\gamma\varepsilon \beta^2 }{(1-\gamma)^2} 
     \\
     &\quad +\alpha  \mathop{\mathbb{E}}_{s\sim d_{\pi_\text{mix}}}\left[\mathrm{D}_{\mathrm{JS}}^\beta(\tilde{\pi}(\cdot\vert s)\Vert \pi(\cdot\vert s))\right]  {.}
\end{align*}
We have the theorem.
\end{proof}

\subsection{Derivation of \eqref{eq:estimate policy advantage}}
\begin{lemma}
\label{lemma:estimate of policy advantage}
\begin{align*}
\mathbb{A}^{+}_\pi(\tilde{\pi}) 
&\approx 
\mathbb{E}_{s_t,a_t\sim\mathcal{D}}
\bigg[
\left(\frac{\tilde{\pi}_{\tilde{\theta}}(a_t\vert s_t)-\pi_\theta(a_t\vert s_t)}{\pi_t(a_t\vert s_t)}\right)
 \!\mathop{\min}_{\substack{x\in 1,2}}Q_{\omega_x}({s}_t,{a}_t)\nonumber\\
& \qquad + \alpha[H(\tilde{\pi}_{\tilde{\theta}}(\cdot\vert s_t)) - H({\pi}_{{\theta}}(\cdot\vert s_t)) ]\bigg] {.}
\end{align*}
\end{lemma}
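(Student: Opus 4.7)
The plan is to decompose $\mathbb{A}^{+}_\pi(\tilde{\pi})$ into a $Q$-based advantage part and an entropy difference part, then perform an importance-sampling change of measure so that the remaining expectation is over the replay buffer. First I would expand the definition as
\begin{align*}
    \mathbb{A}^{+}_\pi(\tilde{\pi})
    &= \mathop{\mathbb{E}}_{s\sim d_\pi}\Big[\mathop{\mathbb{E}}_{a\sim \tilde{\pi}}[Q^\pi(s,a)] - V^\pi(s) + \alpha H(\tilde{\pi}(\cdot\vert s))\Big].
\end{align*}
Invoking Lemma~\ref{lemma:A_sum_zero} (equivalently \eqref{eq:V-function bellman equation}), $V^\pi(s)=\mathop{\mathbb{E}}_{a\sim\pi}[Q^\pi(s,a)]+\alpha H(\pi(\cdot\vert s))$, so the term inside the outer expectation rearranges to $\mathop{\mathbb{E}}_{a\sim\tilde\pi}[Q^\pi(s,a)]-\mathop{\mathbb{E}}_{a\sim\pi}[Q^\pi(s,a)]+\alpha\big[H(\tilde\pi(\cdot\vert s))-H(\pi(\cdot\vert s))\big]$.

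Next, I would apply importance sampling to combine the two $Q$-expectations into a single expectation over the behavior (sampling) policy $\pi_t$ stored in the replay buffer. Since for any function $g$ one has $\mathop{\mathbb{E}}_{a\sim\mu}[g(a)]=\mathop{\mathbb{E}}_{a\sim\pi_t}\!\left[\tfrac{\mu(a\vert s)}{\pi_t(a\vert s)}g(a)\right]$ whenever $\pi_t$ dominates $\mu$, this yields
\begin{align*}
    \mathop{\mathbb{E}}_{a\sim\tilde\pi}[Q^\pi(s,a)]-\mathop{\mathbb{E}}_{a\sim\pi}[Q^\pi(s,a)]
    = \mathop{\mathbb{E}}_{a\sim\pi_t}\!\left[\frac{\tilde\pi_{\tilde\theta}(a\vert s)-\pi_\theta(a\vert s)}{\pi_t(a\vert s)}\,Q^\pi(s,a)\right].
\end{align*}
At this point the integrand already matches the target expression; what remains are two approximations needed to make the estimator implementable from the replay buffer $\mathcal{D}$.

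The two approximations are standard in off-policy SAC: (i) substitute the true soft $Q^\pi$ by the conservative clipped double-$Q$ estimator $\min_{x\in\{1,2\}} Q_{\omega_x}(s,a)$, consistent with the training targets in \eqref{eq:j_q} and the use of the same estimator in \eqref{eq:policy objective}; and (ii) approximate the on-policy occupancy measure $d_\pi$ by the empirical state distribution induced by the stored trajectories, so that $\mathop{\mathbb{E}}_{s\sim d_\pi}$ becomes $\mathop{\mathbb{E}}_{s\sim\mathcal{D}}$, and the joint expectation over $(s,a)$ with behavior policy $\pi_t$ collapses into $\mathop{\mathbb{E}}_{s_t,a_t\sim\mathcal{D}}$. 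The entropy terms $\alpha[H(\tilde\pi_{\tilde\theta}(\cdot\vert s_t))-H(\pi_\theta(\cdot\vert s_t))]$ depend only on $s_t$, so they pass through unchanged. Assembling these pieces gives exactly \eqref{eq:estimate policy advantage}.

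The main obstacle is justifying the off-policy substitution of $d_\pi$ by the replay-buffer distribution, since strictly this introduces a distributional mismatch bias that is not corrected by the importance weights shown (those weights only re-align the action marginal, not the state marginal). I would handle this by explicitly labeling the relation as an approximation (hence ``$\approx$'' rather than ``$=$''), by appealing to the standard assumption in off-policy actor-critic methods that the replay buffer provides a sufficiently broad state coverage, and by noting that the clipping in the double-$Q$ target bounds the consequent bias. The remaining steps are routine algebraic rearrangements.
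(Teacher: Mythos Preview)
Your proposal is correct and follows essentially the same route as the paper: expand $A_\pi$ via \eqref{eq:A definition}, substitute $V^\pi$ using \eqref{eq:V-function bellman equation} to obtain the difference of $Q$-expectations plus the entropy difference, and then invoke importance sampling together with the Monte Carlo replay-buffer approximation. In fact, the paper's own proof compresses the last two stages into a single sentence (``by using Monte Carlo and importance sampling, we can get the lemma''), so your explicit write-out of the importance-sampling identity and the two approximations (clipped double-$Q$ in place of $Q^\pi$, and $d_\pi\to\mathcal{D}$ for the state marginal) is more detailed than the paper's but not different in substance.
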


\begin{proof}
Following the definition of policy advantage in \eqref{eq:definition of A}, 
\begin{align*}
&\mathbb{A}^{+}_\pi(\tilde{\pi}) \\
=&
\mathop{\mathbb{E}}_{\substack{s\sim d_\pi, a\sim \tilde{\pi}}}\left[A_\pi(s,a)+\alpha H(\tilde{\pi}(\cdot\vert s))\right]\\
\stackrel{\mathrm{(a)}}{=}&
\mathop{\mathbb{E}}_{\substack{s\sim d_\pi}}\left[\sum_a\tilde{\pi}(a\vert s)Q^\pi(s,a)-V^\pi(s)+\alpha H(\tilde{\pi}(\cdot\vert s))\right]\\
\stackrel{\mathrm{(b)}}{=}&
\mathop{\mathbb{E}}_{\substack{s\sim d_\pi}}\Big[\sum_a\tilde{\pi}(a\vert s)Q^\pi(s,a)- \sum_a{\pi}(a\vert s)Q^\pi(s,a)
\\
& \quad -\alpha H({\pi}(\cdot\vert s))+\alpha H(\tilde{\pi}(\cdot\vert s))\Big]  {,}
\end{align*}
where the equality $(a)$ is according to the state-action advantage value in \eqref{eq:A definition} while $(b)$ follows from \eqref{eq:V-function bellman equation}. 
Finally, by using Monte Carlo and importance sampling, we can get the lemma. 
    
\end{proof}

\bibliographystyle{IEEEtran}
\bibliography{reference}

\section*{Author Biographies}
\textbf{Xiaoxue Yu} is a PhD Candidate in Zhejiang University, Hangzhou, China. Her research interests currently focus on communications in distributed learning. 

\textbf{Rongpeng Li} is an Associate Professor in Zhejiang University. His research interests currently focus on networked intelligence. 

\textbf{Chengchao Liang} is a Full Professor with Chongqing University of Posts and Telecommunications, Chongqing, China. His research interests include wireless communications, satellite networks, Internet protocols, and optimization theory. 

\textbf{Zhifeng Zhao} is the Chief Engineer with Zhejiang Lab, Hangzhou, China. His research area includes collective intelligence and software-defined networks. 

\end{document}